\newtheorem{theorem}{Theorem}
\newtheorem*{theorem*}{Theorem}
\newtheorem{corollary}[theorem]{Corollary}
\newtheorem{lemma}[theorem]{Lemma}
\newtheorem*{lemma*}{Lemma}
\newtheorem*{proposition*}{Proposition}
\newtheorem*{definition*}{Definition}
\newtheorem*{problem*}{Problem}
\newcommand\RR{{\mathbb{R}}}
\newcommand\NN{{\mathbb{N}}}
\newcommand\Vf{{\ensuremath{V_{\!f}}}}
\DeclareMathOperator\im{im}
\DeclareMathOperator\sd{sd}
\newcommand\tildesigma{{\ensuremath{\tilde\sigma}}}
\newcommand\tildetau{{\ensuremath{\tilde\tau}}}
\newcommand\complex{{\ensuremath{\mathcal{K}}}}
\newcommand\cells{{\ensuremath{K}}}
\definecolor{gray}{rgb}{0.5,0.5,0.5}
\renewcommand{\@fnsymbol}[1]{\@arabic{#1}}
\title{Optimal topological simplification \goodbreak of discrete functions on surfaces}
\author{
Ulrich Bauer \thanks{Institute for Numerical and
Applied Mathematics,
University of G\"ottingen,
Lotzestr.~16--18,
37083~G\"ottingen, Germany. \texttt{\{bauer,wardetzky\}@math.uni-goettingen.de}
} \and Carsten Lange
\thanks{%
Department of Mathematics and Computer Science,
Freie Universit\"at Berlin,
Arnimallee 6,
14195~Berlin,
Germany. \texttt{lange@math.tu-berlin.de}
} \and Max Wardetzky \footnotemark[1] }
\begin{document}

\maketitle
\begin{abstract}

We solve the problem of minimizing the number of critical points among all functions on a surface within a prescribed distance $\delta$ from a given input function. The result is achieved by establishing a connection between discrete Morse theory and persistent homology. Our method completely removes homological noise with persistence less than $2\delta$, constructively proving the tightness of a lower bound on the number of critical points given by the stability theorem of persistent homology in dimension two for any input function. We also show that an optimal solution can be computed in linear time after persistence pairs have been computed.

\end{abstract}

\section{Introduction}
Measured data and functions constructed from measured data suffer from omnipresent noise introduced during the measuring process. Separating relevant information from noise is therefore a widely considered problem. 

Taking a topological point of view, we regard noise as a source of critical points. Indeed, even arbitrarily small amounts of noise (with respect to the supremum norm) may give rise to an arbitrarily large number of critical points. We may hence interpret critical points that can be \emph{eliminated} by small perturbations as being caused by noise. Consequently, we consider the following optimization problem:

\begin{problem*}[Topological simplification on surfaces]
Given a function $f$ on a surface and a real number $\delta\geq0$, find a function $f_\delta$ subject to $\|f_\delta-f\|_\infty \leq \delta$ such that $f_\delta$ has a minimum number of critical points. 
\end{problem*}

The class of functions and the notion of critical points we work with will be clarified later; for now, we just want to mention that multiple saddles (such as a ``monkey saddle'') are counted here with multiplicity.

The \emph{Bottle\-neck Stability Theorem} \citep{CohenSteiner2007Stability}, a fundamental result in the theory of \emph{persistent homology} \citep{Edelsbrunner2002Topological,Zomorodian2005Computing}, provides a lower bound on the number of critical points:
\begin{proposition*}[Stability Bound]
For any function $f_\delta$ with $\|f_\delta-f\|_\infty \leq\delta$, the number of critical points of $f_\delta$ is bounded from below by the number of critical points of $f$ that have persistence $> 2 \delta$.
\end{proposition*}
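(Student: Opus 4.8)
The plan is to deduce the Stability Bound directly from the Bottleneck Stability Theorem together with a bookkeeping identity relating critical points to points of a persistence diagram. First I would fix notation: let $D(g)$ denote the persistence diagram of the sublevel-set filtration of a function $g$ on the surface, a multiset of points $(b,d)$ in the extended plane recording the values at which a homology class is born and dies. The persistence of such a point is $d-b$, and its $\ell^\infty$-distance to the diagonal $\Delta=\{(x,x)\}$ is exactly $(d-b)/2$; hence a point has persistence $>2\delta$ precisely when it lies at distance $>\delta$ from $\Delta$. Points at infinity, corresponding to essential classes, occur with total multiplicity $E=\sum_k\beta_k$, a topological invariant of the surface that is independent of $g$.

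The key structural identity, which is where the promised connection between discrete Morse theory and persistent homology enters, counts critical points in terms of the diagram. Each critical point is either positive (creating a class) or negative (destroying one); the persistence pairing matches every negative critical point with the positive one whose class it cancels, leaving exactly $E$ unpaired positive critical points. Writing $F(g)$ for the number of finite off-diagonal points of $D(g)$, I therefore expect
\[
  c(g) = 2\,F(g) + E,
\]
where $c(g)$ is the total number of critical points (multiple saddles counted with multiplicity, so that the pairing is genuinely a bijection). The number of critical points of $f$ of persistence $>2\delta$ is then $2\,F_{>2\delta}(f)+E$, where $F_{>2\delta}(f)$ counts finite points of $D(f)$ at distance $>\delta$ from $\Delta$.

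Next I would invoke the Bottleneck Stability Theorem: since $\|f_\delta-f\|_\infty\leq\delta$, there is a bijection $\gamma$ between $D(f)\cup\Delta$ and $D(f_\delta)\cup\Delta$ with $\sup_p\|p-\gamma(p)\|_\infty\leq\delta$. Take any finite point $p$ of $D(f)$ of persistence $>2\delta$, i.e.\ with $\mathrm{dist}(p,\Delta)>\delta$. Its image $\gamma(p)$ cannot lie on $\Delta$ (that would force $\|p-\gamma(p)\|_\infty>\delta$) and cannot lie at infinity (that would force an infinite distance), so $\gamma(p)$ is a genuine finite off-diagonal point of $D(f_\delta)$. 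Injectivity of $\gamma$ then exhibits an injection from the high-persistence points of $D(f)$ into the finite off-diagonal points of $D(f_\delta)$, giving $F(f_\delta)\geq F_{>2\delta}(f)$. As $E$ is common to both functions, $c(f_\delta)=2F(f_\delta)+E\geq 2F_{>2\delta}(f)+E$, which is exactly the claimed bound.

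The main obstacle is the structural identity $c(g)=2F(g)+E$, not the matching argument, which is immediate once distances to the diagonal are understood. I expect the genuine work to lie in justifying that every critical point participates in the positive/negative pairing with the correct multiplicity---in particular that degenerate critical points such as monkey saddles are accounted for by the right number of diagram points---and in confirming that the sublevel-set persistence diagram of a discrete function is well defined and tame enough for the Bottleneck Stability Theorem to apply. Everything else is bookkeeping that cancels the common essential-class contribution $E$ from the two sides.
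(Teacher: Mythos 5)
Your proof is correct and follows essentially the same route as the paper's: the paper establishes this bound as Corollary~\ref{cor:criticalPointLowerBound} (stated there for persistence pairs) by invoking the Bottleneck Stability Theorem and checking, exactly as you do, that the $\delta$-matching must send every diagram point of persistence $>2\delta$ to an off-diagonal point of $D(f_\delta)$. Your bookkeeping identity $c(g)=2F(g)+E$ is precisely the (implicit) translation between counting persistence pairs and counting critical points that the paper relies on, and it is valid in the discrete Morse setting since every critical cell is either matched in a persistence pair or represents an essential class.
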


Clearly the question about the tightness of this bound is of great importance for the significance of the Bottle\-neck Stability Theorem.
In the present article, we show constructively that the bound given by the stability theorem is actually tight for functions on surfaces (see Theorem~\ref{thm:2deltasimp}):
\begin{theorem*}[Tightness of the stability bound]\label{thm:mainResult}
Given a function $f$ on a surface and a real number $\delta\geq0$, there exists a function $f_\delta$ such that $\|f_\delta-f\|_\infty \leq \delta$ and the number of critical points of $f_\delta$ equals the number of critical points of $f$ that have persistence~$> 2 \delta$.
\end{theorem*}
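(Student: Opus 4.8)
The plan is to establish a constructive procedure that produces the desired function $f_\delta$ by canceling exactly those critical pairs whose persistence is at most $2\delta$, while leaving all high-persistence features intact. The natural framework for this is discrete Morse theory, where critical simplices and gradient paths give a combinatorial analogue of smooth Morse theory, together with the pairing of critical points provided by persistent homology. The key conceptual bridge is that a persistence pair $(\sigma, \tau)$ of persistence $p = f(\tau) - f(\sigma)$ corresponds to homological noise that can be removed by a perturbation whose amplitude is controlled by $p/2$. Since we must remain within the $L^\infty$-ball of radius $\delta$, we are permitted to cancel precisely the pairs with $p \leq 2\delta$, and the Stability Bound already tells us we can do no better.

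First I would reduce the problem to a discrete, combinatorial setting: approximate (or represent) $f$ by a function on the simplices of a triangulation $\complex$ of the surface, so that critical points become critical cells and the notion of persistence is the standard one coming from the induced filtration. I would then compute the persistence pairs of $f$, sorting them into the low-persistence pairs (persistence $\leq 2\delta$) that must be eliminated and the high-persistence (or essential) pairs that must survive. The core of the construction is a cancellation step: for each low-persistence pair, I would modify the discrete gradient vector field so that the corresponding critical cells are matched and thereby removed, which on the function side amounts to flattening the function along the connecting gradient path. The crucial quantitative claim is that all these cancellations can be carried out simultaneously by a single function $f_\delta$ whose values differ from those of $f$ by at most $\delta$ at every point; this is where the factor of two in the persistence threshold is spent, since canceling a pair of persistence $p$ requires raising the lower cell and lowering the upper cell each by roughly $p/2$.

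The main obstacle I anticipate is controlling the \emph{interaction} between different cancellations so that the pointwise bound $\|f_\delta - f\|_\infty \leq \delta$ is preserved globally and no spurious critical points are introduced in the process. Cancelling one pair by reversing a gradient path can interfere with the gradient paths of other pairs, and in higher dimensions iterated cancellations are known to create new critical cells or obstruct each other. Here the restriction to surfaces is essential: in dimension two the only critical cells are minima, saddles, and maxima, the persistence pairing neatly separates into a pairing of minima with saddles (for the sublevel-set filtration) and of saddles with maxima (dually, for the superlevel-set filtration), and the relevant gradient paths can be made disjoint or nested in a way that permits independent, non-interfering cancellation. I would exploit this two-dimensional structure, together with the symmetry between $f$ and $-f$, to argue that the low-persistence minimum--saddle and saddle--maximum pairs can each be removed by monotone modifications of $f$ confined to small regions, and that these modifications can be superimposed without exceeding the $\delta$-bound.

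Finally, I would verify that the resulting $f_\delta$ has exactly the critical cells corresponding to the surviving high-persistence pairs (plus the unavoidable essential classes), so that its critical-point count equals the number of critical points of $f$ with persistence $> 2\delta$. Combined with the Stability Bound, which guarantees this count cannot be lowered, this establishes optimality and hence tightness. The linear-time complexity claim would then follow by observing that, once the persistence pairs are known, the cancellations and the assembly of $f_\delta$ touch each simplex a bounded number of times.
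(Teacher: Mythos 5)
Your overall strategy---discretize, compute persistence pairs, cancel the pairs of persistence $\leq 2\delta$ by reversing gradient paths, and flatten the function to the midpoint value so that each cancelation costs only $p/2$---is exactly the strategy of the paper. However, the way you propose to resolve what you correctly identify as the main obstacle is where the argument breaks down. You claim that the modifications for different pairs are ``confined to small regions,'' ``disjoint or nested,'' and can be ``superimposed without exceeding the $\delta$-bound,'' and that minimum--saddle and saddle--maximum pairs can be handled independently by duality. Neither claim survives scrutiny. To cancel a pair $(\sigma,\tau)$ at cost $p/2$ one must raise the entire \emph{attracting set} of $\sigma$ and lower the entire \emph{repelling set} of $\tau$ to the midpoint value; these sets contain cells of all dimensions, typically overlap the attracting/repelling sets of other pairs, and in general change the values of other critical cells. (Methods that keep modifications local to a carving path exist, but they cost the full persistence $p$ rather than $p/2$ and therefore cannot reach the $2\delta$ threshold.) Consequently the later cancelations act on an already-modified function, and one must bound the \emph{accumulated} deviation from the original $f$. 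The paper does this by proving that the partial order induced by the simplified vector field $V_i$ remains compatible with the original total order on the relevant cells (Lemma~\ref{lem:upperSetValues}, via Lemma~\ref{lem:changedRelation} on how path reversal changes induced orders), so that a cell $\rho$ raised to $m_i=\frac{f(\sigma)+f(\tau)}2$ still satisfies $f(\sigma)\leq f(\rho)$ and hence $m_i\leq f(\rho)+\delta$. That argument forces the cancelations to be performed in a \emph{nested} order and explicitly mixes the $(0,1)$ and $(1,2)$ pairs---the paper stresses that treating the two dimensions separately, as you propose, does not work.

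A second gap: you assume each low-persistence pair can in fact be canceled, i.e.\ that at the moment of cancelation there is a \emph{unique} gradient path between the paired critical cells, as Forman's cancelation theorem requires. This is false in general (it fails already for manifolds of dimension $3$ and for non-manifold $2$-complexes, as the paper's counterexample shows), so it must be proved for surfaces. The paper's mechanism is the \emph{persistence hierarchy}: a pair can be canceled once all its descendants have been, and path uniqueness then follows from the surface-specific fact that dimension-$0$ gradient paths cannot branch (Lemma~\ref{lem:uniquePredSucc}, Lemma~\ref{lem:nestedPersistencePairs}). Your proposal gestures at nestedness but does not supply this argument, nor the bookkeeping needed to handle the degenerate (plateau) functions that the midpoint construction inevitably produces, which is why the paper works with pseudo-Morse functions and an explicit consistent gradient field rather than with the function values alone.
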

A similar statement does not hold in higher dimensions or for non-manifold 2-complexes, see Section~\ref{sec:counterexample}.

\subsection{Overview}

\emph{Discrete Morse theory}~\citep{Forman1998Morse,Forman2002Users} provides equivalents of several core concepts of classical Morse theory, like discrete Morse functions, discrete gradient vector fields, critical points, and a cancelation theorem for the elimination of critical points from a vector field. Because of its simplicity, it not only maintains the intuition of the classical theory but allows to go beyond it by providing explicit constructions that would become quite complicated in the smooth setting.

\emph{Persistent homology}~\citep{Edelsbrunner2002Topological,Zomorodian2005Computing} quantifies topological features of a function. It defines the birth and death of homology groups at critical points, identifies pairs of these (\emph{persistence pairs}), and provides a measure of their significance (\emph{persistence}). 

Whereas (discrete) Morse theory makes statements about the \emph{homotopy type} of the sublevel sets of a function, persistence theory is concerned with their \emph{homology}.
Our solution to the problem of topological simplification on surfaces relies on a combination of both theories. In particular, we make contributions to the following problems:

\paragraph{Canceling a single pair of critical points of a function}

\citet{Forman1998Morse} describes a simple method for eliminating pairs of critical points in discrete \emph{vector fields}. Modifying a \emph{function} according to the cancelation of a pair of critical points, however, is more difficult and requires additional effort.
We first observe that a discrete gradient vector field induces a partial order on the cells of the underlying complex, giving rise to the notion of \emph{attracting} and \emph{repelling sets} (in analogy to the notion of stable and unstable manifolds in the classical theory). Building on these concepts,
we describe a canonical method for eliminating a pair of critical points of a discrete Morse \emph{function}. This complements Forman's cancelation method for discrete \emph{gradient vector fields}. In particular, it is applicable in any dimension (Section~\ref{sec:leveling}). An informal description in dimension~1 is shown in Figure~\ref{fig:cancellation}.

\begin{figure*}
\setlength{\tabcolsep}{0pt}%
\begin{tabular*}{\textwidth}{@{\extracolsep{\fill}}ccc}
\includegraphics[width=0.3\textwidth]{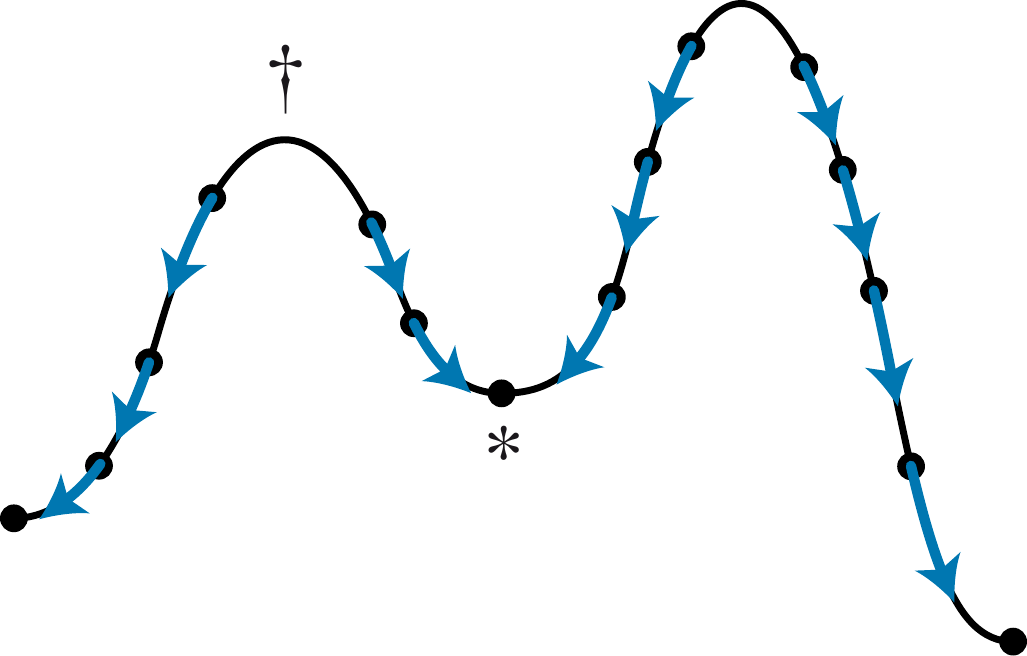}&%
\includegraphics[width=0.3\textwidth]{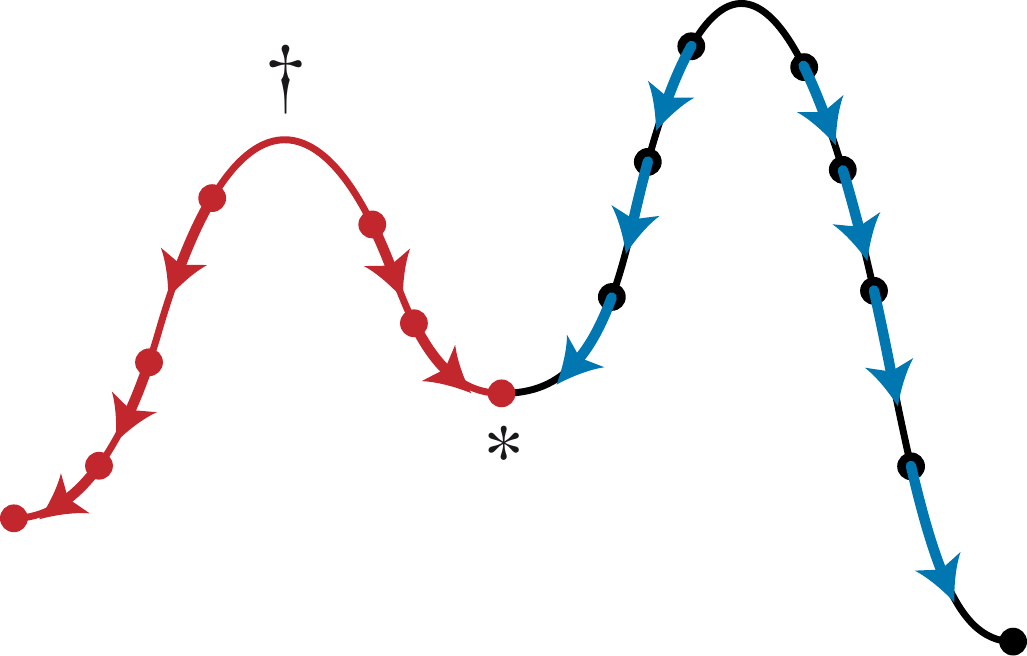}&%
\includegraphics[width=0.3\textwidth]{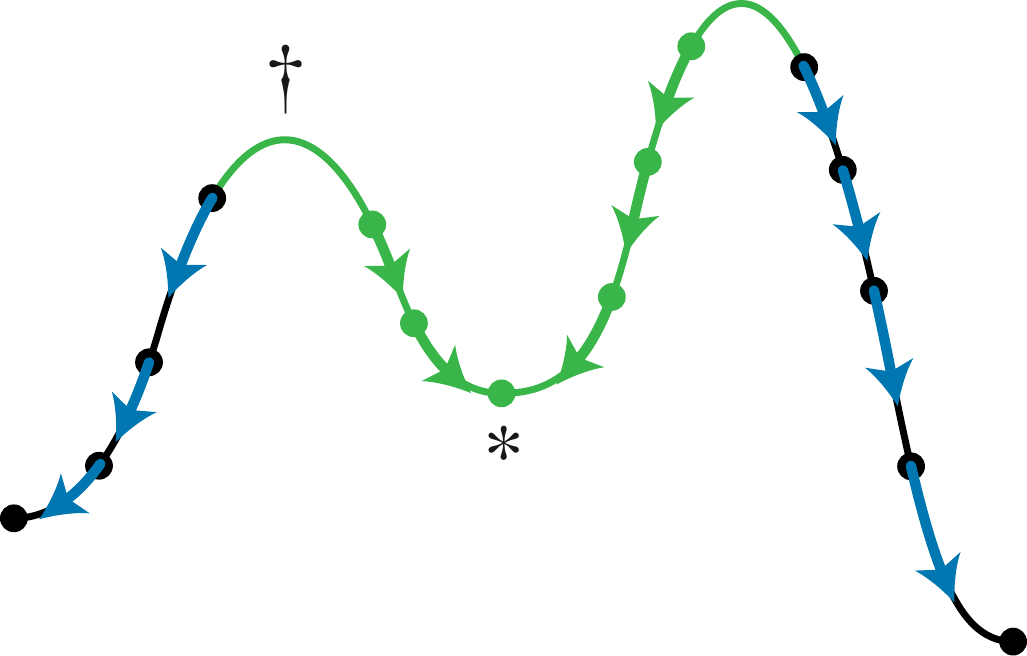}\\
\small(a)&\small(b)&\small(c)\\
\includegraphics[width=0.3\textwidth]{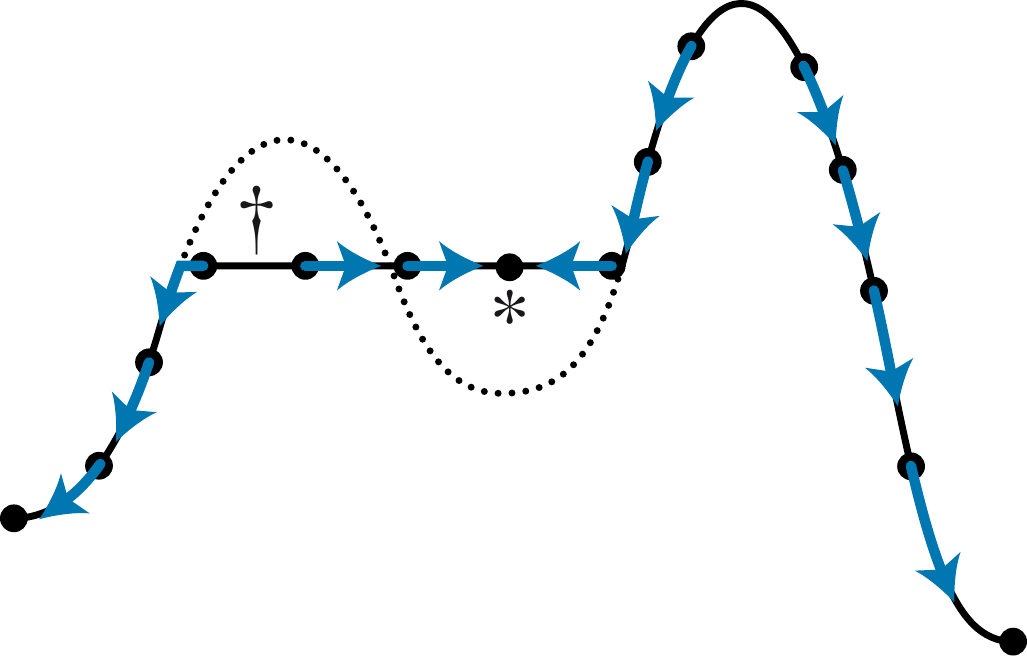}&%
\includegraphics[width=0.3\textwidth]{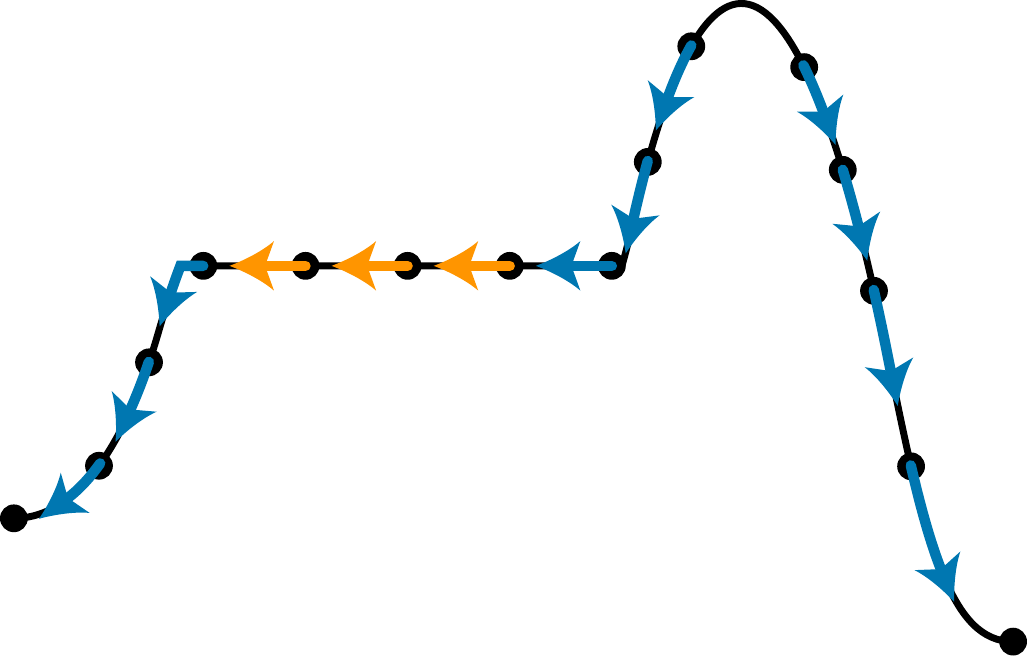}&%
\includegraphics[width=0.3\textwidth]{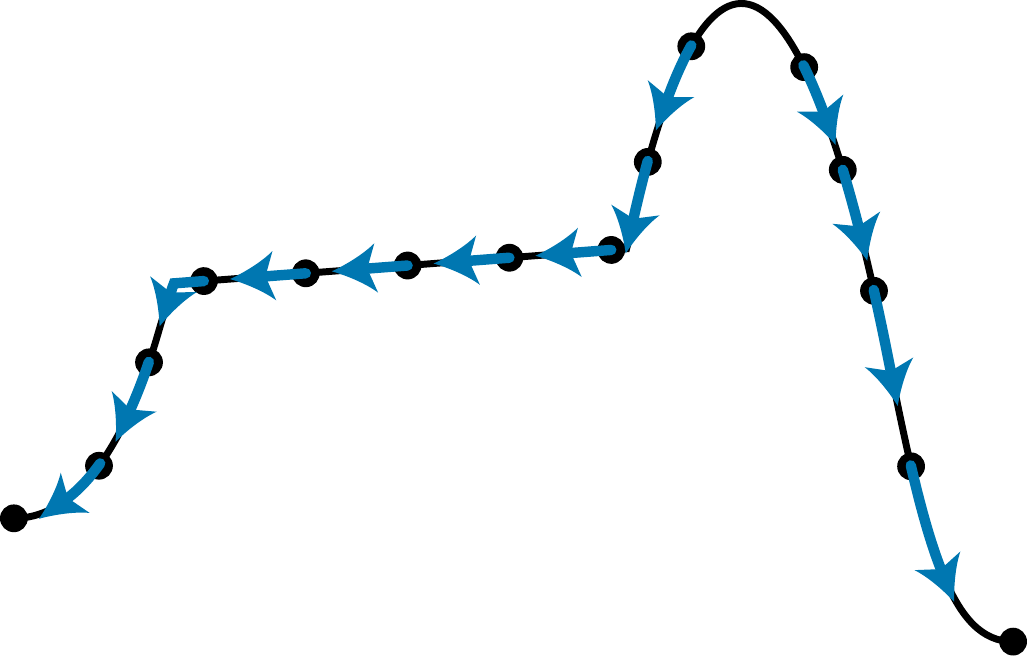}\\
\small(d)&\small(e)&\small(f)\\
\end{tabular*}%
\caption{Cancelation of critical points. (a) shows the graph of a function together with the directions of its gradient vector field. The values of the repelling set (b) of the upper critical point (marked with $\dagger$) and of the attracting set (c) of the lower critical point (marked with $*$) are cut off at the average value of the two critical points, creating a plateau (d). The old gradient directions are still consistent with the new function. The gradient vector field can now be reversed along the path between the critical points, eliminating the pair (e). The resulting function has a plateau, but can be perturbed slightly to become non-degenerate (f).}
\label{fig:cancellation}
\end{figure*}

To cancel a pair of critical points whose values differ by $d$, our method modifies the function by $\frac d 2$ in the supremum norm, which is the minimum required for cancelation (see Figure~\ref{fig:cancellation}). 
To achieve this minimum, elimination of critical points has to take into account the attracting and repelling sets of the canceled pair, containing cells of \emph{all} dimensions; moreover, an arbitrary number of other critical points might have their value changed. This is in contrast to previous related methods \citep{Edelsbrunner2006PersistenceSensitive,Attali2009Simplification} which operate just on the 1-skeleton of the surface (and on the 1-skeleton of its dual) and does not affect other critical points, but modifies the function by $d$ and hence is not a minimal modification in the supremum norm. Moreover, these methods do not extend to higher dimensions.

\paragraph{Degenerate functions}
Morse theory, in any of its variations, fundamentally relies on the assumption that the critical points of the function considered are non-degenerate. This condition not only prevents the theory from being applicable directly to arbitrary input functions.
The canonical function arising from canceling a pair of critical points has a plateau (see Figure~\ref{fig:cancellation}) and hence is not a discrete Morse function. However, there is a Morse function arbitrarily close to it. This necessitates a method to deal with such degenerate functions. To do so, we devise a symbolic perturbation scheme (Section~\ref{sec:pseudoMorse}) based on discrete gradient vector fields, allowing to treat the degenerate case in the same way as the generic case by considering the larger class of \emph{pseudo-Morse functions}; in particular, we do not require the input function to be generic. Instead of deriving information about critical points from the function directly (which leads to ambiguities in degenerate cases), we work with an \emph{explicit} gradient vector field consistent with the function, coinciding with the usual discrete gradient vector field in the generic case. Our scheme always allows to construct a Morse function arbitrarily close to a given pseudo-Morse function and consistent with the given gradient vector field.

A second symbolic perturbation scheme allows to relax the assumption that critical cells have unique function values. It extends the first perturbation scheme by explicitly maintaining a total order on the cells that is consistent with both the function and the gradient vector field.

\paragraph{Multiple cancelations}
We establish a connection between persistence pairs and the cancelation of critical points by proving that for functions on surfaces, \emph{every} persistence pair can eventually be canceled if a sequence of cancelations is performed according to a certain hierarchy on the persistence pairs (Section~\ref{sec:persistenceHierarchy}). The statement is no longer true for manifolds in higher dimensions or non-manifold 2-complexes, where persistence pairs cannot always be canceled.

\paragraph{Tightness of the stability bound}
While the stability bound can easily be seen to be tight when only a single pair of critical points is canceled, we need to ensure that repeated cancelation does not violate the $\delta$-tolerance constraint. Again, the situation is different from previous work \citep{Edelsbrunner2006PersistenceSensitive,Attali2009Simplification}, where simplification is treated separately for pairs of dimensions $(0,1)$ and $(1,2)$. Our result requires to consider cancelation of pairs in different dimensions at the same time (Section~\ref{sec:constraint}).
We provide a constructive proof of the tightness of the stability bound (Theorem~\ref{thm:2deltasimp}). The construction is well suited for proving our theorem; however, it has a suboptimal quadratic time complexity.

\paragraph{Efficient solution}
We show that, after persistence pairs have been computed in time $\mathcal O(\mathop\mathrm{sort}(n))$~\citep{Edelsbrunner2002Topological}, an optimal solution to the topological simplification problem on surfaces can be computed in time $\mathcal O(n)$ using simple graph traversal methods (Section~\ref{sec:efficientAlgo}). Hence, we match the time complexity of \citep{Attali2009Simplification}. 
Since Theorem~\ref{thm:2deltasimp} is already established, we can use it to give a simple proof of correctness of the linear algorithm.

This result is surprising in view of the fact that the topological simplification problem on surfaces is NP-hard when restricting to \emph{simplexwise} linear functions on a triangulated surface%
. This follows from a recent result by~\citet{Gray2010Removing}, which states that minimizing the number of extrema of a simplexwise linear function with interval constraints for the vertex values is NP-hard. Their argument can easily be adapted to our problem setting, where all constraint intervals are assumed to have length $2\delta$. Note that the emphasis on \emph{simplexwise} linear (i.e., linear on each simplex, as opposed to just \emph{piecewise} linear) functions is significant here: a multiple saddle can be split into several non-degenerate saddles by an arbitrarily small (in $L_\infty$) perturbation in the space of piecewise linear functions, but not in the subspace of simplexwise linear functions.

This emphasizes the important role of discrete Morse theory in our problem: the hardness of the problem in the simplexwise linear setting arises from the possibility that the input contains multiple saddles, which is excluded by definition in discrete Morse theory. Going from simplexwise linear functions to discrete Morse functions (Section~\ref{sec:PLfunctions}) can be interpreted as splitting multiple saddles. 

\paragraph{Energy minimization of simplified functions}
The solution to the topological simplification problem is not unique in general: both the $\delta$-constraint and the simplified discrete gradient vector field impose a set of linear inequalities on the simplified function, so the solution set is a convex polytope. This additionally allows to minimize a suitable convex energy functional. We employ this technique to remove artifacts from the initial solution and to improve the similarity to the input function (Section~\ref{sec:energyMethods}).

\subsection{Related work}
Topological simplification of functions within a $\delta$-tolerance constraint has been considered before by
\citet{Edelsbrunner2006PersistenceSensitive} and \citet{Attali2009Simplification}. The problem considered there differs from ours by a seemingly small but significant detail: in \citep{Edelsbrunner2006PersistenceSensitive, Attali2009Simplification} the critical points of the input function $f$ that are not eliminated are additionally assumed to retain the same critical value for the output $g$. This restriction has serious consequences: while it allows for the elimination of critical points of $f$  with persistence $\leq \delta$, in certain cases not all critical points with persistence $\leq 2\delta$ can be eliminated; an example is given in~\citep{Edelsbrunner2006PersistenceSensitive}. Hence, under this restriction it is not possible to match the stability bound. Moreover, this result does not provide any information about the tightness of the stability bound since it considers only a restricted set of functions.

The methods presented in \citep{Edelsbrunner2006PersistenceSensitive, Attali2009Simplification} can be interpreted as variants of the \emph{carving} method proposed by~\citet{Soille2004Morphological} in the context of terrain simplification. There is another popular method for removing extrema from terrains, called \emph{filling} or \emph{flooding} \citep{Jenson1988Extracting,Agarwal2006IOefficient,Danner2007TerraStream}. A combination of both methods has been proposed in \citep{Soille2004Optimal}.
Our methods of canceling critical points from a function can be interpreted as a combination of carving and flooding in the realm of discrete Morse theory.

Apart from the above mentioned works, persistent homology provides the basis of several other elegant methods for computation and simplification of multi-scale structures derived from a function. For example, \citet{Edelsbrunner2002Topological} discuss simplification of the persistent homology for \emph{filtrations} of simplicial complexes. \citet{Edelsbrunner2003Hierarchical} and \citet{Gyulassy2007Efficient} consider simplification of \emph{cell decompositions} (Morse-Smale complexes) resulting from a given Morse function. Unfortunately, a simplified Morse-Smale complex does not directly give rise to a simplified function. Indeed, simplifying a Morse-Smale complex is closely related to simplifying a discrete gradient vector field.

The problem of constructing discrete \emph{gradient vector fields} (as opposed to functions) that minimize the number of critical points \emph{without} constraints is addressed by \citet{Lewiner2003Optimal} for surfaces and by \citet{Joswig2006Computing} for complexes of arbitrary dimension. \citet{King2005Generating} were the first to propose the combination of persistence with discrete Morse theory to simplify the gradient vector field of an input function on a 3-dimensional simplicial complex. Their method has quadratic time complexity and produces a  simplified discrete gradient vector field but not a function. Moreover, it does not aim at optimality (in 3 dimensions, not every persistence pair can be canceled).

Several statements of this article can also be transferred to the setting of piecewise linear Morse-Smale complexes. For example, Theorem~\ref{thm:simplifiedVectorField} can be used to show that the successive simplification of a Morse-Smale complex on a surface proposed by~\citet{Edelsbrunner2003Hierarchical} is always possible. This extends the Adjacency Lemma in 
\citep{Edelsbrunner2003Hierarchical}, which shows a necessary but not sufficient condition for the successive cancelation of persistence pairs.

\section{Discrete Morse theory}\label{sec:MorseTheory}

Classical (smooth) Morse theory~\citep{Milnor1963Morse} relates the critical points of a generic smooth real-valued function on a manifold to the global topology of that manifold. \citet{Forman1998Morse,Forman2002Users} carried over the main ideas of Morse theory to a combinatorial setting. We briefly review some important notions and results here that are used throughout this article, together with some extensions to Forman's theory that provide important tools for our solution.

A CW complex $\complex$ is a topological space constructed inductively: starting with a discrete set $\complex_0$ of 0-cells, the $n$-skeleton $\complex_n$ is formed by attaching $n$-cells (open $n$-dimensional balls) by continuous maps $\mathbb{S}^{n-1}\to \complex^{n-1}$ from their boundary to the $(n-1)$-skeleton. The set of cells of $\complex$ is denoted by~$\cells$. Throughout this article, we consider only finite CW complexes. Whenever a cell $\tau \in \cells$ is attached to a cell~$\sigma$ (i.e., 
$\sigma \subset \partial\tau$,
where $\partial\tau$ denotes the boundary of $\tau$), we call $\sigma$ a \emph{face} of $\tau$; a face of codimension 1 is called a \emph{facet}. 
If all attaching maps are homeomorphisms, $\complex$ is called a \emph{regular} CW complex. 
A regular CW complex whose underlying space is a 2-manifold is called a \emph{combinatorial surface}. We refer to \citep{Lundell1969Topology,Hatcher2002Algebraic} for details on CW complexes.

\subsection{Discrete vector fields}
One of the central concepts of discrete Morse theory is that of a \emph{discrete vector field} -- a purely combinatorial analogue of a classical vector field.
\begin{definition*}[discrete vector field, critical cell \citep{Forman1998Morse,Forman2002Users}]
A \emph{discrete vector field} $V$ on a regular CW complex $\complex$ is a set of pairs of cells $(\sigma,\tau)\in \cells\times\cells$, with $\sigma$ a facet of $\tau$, such that each cell of~$\cells$ is contained in at most one pair of~$V$. A cell $\sigma\in\cells$ is \emph{critical} with respect to $V$ if $\sigma$ is not contained in any pair of $V$. The dimension of a critical cell is also called its index.
\end{definition*}

A pair $(\sigma,\tau)$ in a discrete vector field $V$ can be visualized as an arrow from~$\sigma$ to~$\tau$ (as in Figure~\ref{fig:reverse_a_v_path}).
\begin{figure}
\center{
\includegraphics[scale=0.9]{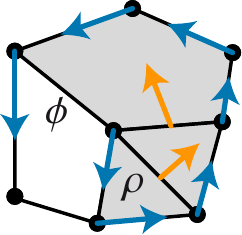}
\qquad
\includegraphics[scale=0.9]{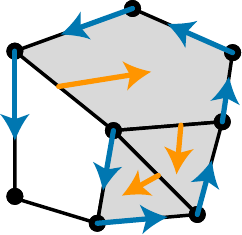}
}
\caption{Reversing a gradient vector field along the unique path from $\partial\rho$ to $\phi$ produces a gradient vector field in which the 1-cell $\phi$ and the 2-cell $\rho$ are no longer critical.}
\label{fig:reverse_a_v_path}
\end{figure}

In the following, we consider an important subclass of vector fields in which the arrows do not form closed paths. This can be made precise using the concept of \emph{$V$-paths}. 
\begin{definition*}[$V$-path \citep{Forman2002Users}]
Let~$V$ be a discrete vector field. A
$V$-path~$\Gamma$ from a cell $\sigma_0$ to a cell $\sigma_r$ is a sequence
$(\sigma_0,\tau_0, \sigma_1, \ldots ,\tau_{r-1}, \sigma_r)$ of cells such that for
every $0\leq i\leq r-1$:
$$
\sigma_i\,\text{ is a facet of }\,\tau_i  ~\,\text{with}~\, (\sigma_i,\tau_i)\in V \quad\text{and}\quad
\sigma_{i+1}\text{ is a facet of }\,\tau_i  ~\, \text{with}~\, (\sigma_{i+1},\tau_i)\not\in V.
$$
$\Gamma$ is \emph{closed} if $\sigma_0=\sigma_r$ and \emph{nontrivial} if $r>0$. We call $\dim\sigma_0$ the \emph{dimension of $\Gamma$}.
\end{definition*}
By a $V$-path from~$\partial\rho$ to~$\phi$ we mean a $V$-path from a facet of $\rho$ to $\phi$ (see Figure~\ref{fig:reverse_a_v_path} for an example).

\begin{definition*}[discrete gradient vector field \citep{Forman2002Users}]$ $
A discrete vector field $V$ is a \emph{discrete gradient vector field} if it contains no nontrivial closed $V$-paths. 
\end{definition*}

The main technique
for reducing the number of critical points is that of \emph{reversing}
a gradient vector field~$V$ along a~$V$-path between two critical cells~$\rho$
and~$\phi$ (see Figure~\ref{fig:reverse_a_v_path} for an example). It provides a discrete analogue of Morse's cancelation theorem~\citep{Morse1964Elimination}:%

\begin{theorem}[\citet{Forman1998Morse}]\label{thm:cancelation}
Let~$\phi$ and $\rho$ be two critical cells of a gradient vector field~$V$ with exactly one $V$-path~$\Gamma$ from~$\partial\rho$ to~$\phi$. Then there is a gradient vector field~$\widetilde V$ obtained by reversing $V$ along the path~$\Gamma$. The critical cells of $\widetilde V$ are exactly the critical cells of~$V$ apart from $\left\{ \phi,\rho \right\}$. Moreover, $V = \widetilde V$ except along the path~$\Gamma$.
\end{theorem}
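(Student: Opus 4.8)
The plan is to exhibit $\widetilde V$ explicitly, dispatch the routine bookkeeping (that $\widetilde V$ is a discrete vector field with the claimed critical cells and agrees with $V$ off $\Gamma$), and then focus all effort on the one substantial point: that $\widetilde V$ has no nontrivial closed $\widetilde V$-path, i.e.\ is again a gradient vector field.

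Write $d=\dim\phi$, so $\dim\rho=d+1$, and let $\Gamma=(\sigma_0,\tau_0,\sigma_1,\ldots,\tau_{r-1},\sigma_r)$ with $\sigma_0\subset\partial\rho$ and $\sigma_r=\phi$; here every $\sigma_i$ has dimension $d$ and every $\tau_i$ dimension $d+1$. Reversing $\Gamma$ amounts to setting
$$\widetilde V=\bigl(V\setminus\{(\sigma_i,\tau_i):0\le i\le r-1\}\bigr)\cup\{(\sigma_{i+1},\tau_i):0\le i\le r-1\}\cup\{(\sigma_0,\rho)\}.$$
First I would record that the cells $\sigma_0,\ldots,\sigma_r,\tau_0,\ldots,\tau_{r-1},\rho$ are pairwise distinct: a coincidence $\sigma_i=\sigma_j$ with $i<j$ would make $(\sigma_i,\tau_i,\ldots,\tau_{j-1},\sigma_j)$ a nontrivial closed $V$-path, contradicting that $V$ is a gradient field; the $\tau_i$ are distinct because each lies in the pair $(\sigma_i,\tau_i)$; and $\rho\neq\tau_i$ since $\rho$ is critical while each $\tau_i$ is paired. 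Because $\phi$ and $\rho$ are critical and each $\sigma_i$ and $\tau_i$ ($i<r$) already lies in $(\sigma_i,\tau_i)$, the only pairs of $V$ meeting these cells are the $(\sigma_i,\tau_i)$; hence the substitution above touches nothing off $\Gamma$ and assigns each listed cell to exactly one new pair. This simultaneously shows that $\widetilde V$ is a discrete vector field, that $V=\widetilde V$ away from $\Gamma$, and — since only $\phi$ and $\rho$ pass from unpaired to paired — that the critical cells of $\widetilde V$ are precisely those of $V$ other than $\phi$ and $\rho$.

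It remains to prove acyclicity. Every altered pair joins a $d$-cell to a $(d+1)$-cell, so $\widetilde V$-paths of dimension $\neq d$ coincide with $V$-paths and are acyclic; thus I only need to rule out a closed $\widetilde V$-path $\gamma$ of dimension $d$. Since $V$ has no closed path and $V,\widetilde V$ differ only in the reversed pairs, such a $\gamma$ must use at least one reversed arrow. The reversed arrows are exactly what lets a path run backward along $\Gamma$: from $\sigma_k$ ($k\geq1$) one is forced up to $\tau_{k-1}$, from which the only legal descent to another region cell is to $\sigma_{k-1}$, eventually reaching $\sigma_0\nearrow\rho$; note that $\rho$ can be entered only through $\sigma_0$, since $\sigma_0$ is its unique $\widetilde V$-partner. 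I would therefore analyze $\gamma$ through its maximal excursions into the reversed region $\{\sigma_i,\tau_i,\rho\}$: while on region cells the path runs backward along $\Gamma$, and the complementary arcs use only unchanged pairs and are genuine $V$-paths. Splicing a backward excursion together with the corresponding forward segment of $\Gamma$ and the outside $V$-arcs converts $\gamma$ into either a nontrivial closed $V$-path or a $V$-path from $\partial\rho$ to $\phi$ distinct from $\Gamma$; the former contradicts that $V$ is a gradient field, the latter contradicts the \emph{uniqueness} of $\Gamma$.

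The bookkeeping of the first two paragraphs is routine; the delicate point — and precisely where the uniqueness hypothesis is indispensable — is the splicing argument. Indeed, were there two distinct $V$-paths from $\partial\rho$ to $\phi$, reversing along one of them typically yields a closed $\widetilde V$-path formed by the reversed first path together with the still-forward second path, so acyclicity genuinely fails without uniqueness. Making the notion of ``maximal excursion into the reversed region'' precise, and checking that each descent step of the spliced path is legal (that the reused facet is not the current partner in $\widetilde V$, respectively $V$), is the part that requires real care.
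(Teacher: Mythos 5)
The paper offers no proof of this statement: it is quoted from Forman with a citation, so there is nothing in-paper to compare against. Your argument is the standard one (essentially Forman's own): write down $\widetilde V$ explicitly, verify the bookkeeping, and reduce acyclicity to the uniqueness of $\Gamma$ by decomposing a hypothetical closed $\widetilde V$-path into reversed runs and unchanged $V$-arcs and splicing. The explicit formula for $\widetilde V$, the distinctness argument, and the reduction to dimension $d$ are all correct, and the splice dichotomy (closed $V$-path versus second $V$-path from $\partial\rho$ to $\phi$) is exactly the right endgame.

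Two spots need repair when you write the key step out in full. First, ``such a $\gamma$ must use at least one reversed arrow'' is not immediate from ``$V$ and $\widetilde V$ differ only in the reversed pairs'': deleting $(\sigma_i,\tau_i)$ from $V$ also \emph{legalizes} the descent from $\tau_i$ to $\sigma_i$, so a path avoiding all new pairs could in principle still fail to be a $V$-path. It cannot, because the only way to ascend into $\tau_i$ in $\widetilde V$ is through the new pair $(\sigma_{i+1},\tau_i)$ --- but that half-sentence should be said. Second, ``the only legal descent from $\tau_{k-1}$ to another region cell is to $\sigma_{k-1}$'' is false in general: some $\sigma_j$ with $j\notin\{k-1,k\}$ may also happen to be a facet of $\tau_{k-1}$, and descending to it is legal in both $V$ and $\widetilde V$. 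So an ``excursion into the region'' need not be a contiguous backward segment of $\Gamma$. The fix is to define the excursions as maximal runs of \emph{reversed arrows} (each such run is a backward segment of $\Gamma$ from some $\sigma_{j_t}$ down to some $\sigma_{i_t}$ with $i_t<j_t$), and to treat every unchanged edge --- including descents from $\tau_{k-1}$ to stray region cells --- as part of the connecting $V$-arcs. With that convention each connecting arc is a genuine $V$-path from $\sigma_{i_t}$ to $\sigma_{j_{t+1}}$, nonempty and disjoint from the edges of $\Gamma$; if $i_t\geq j_{t+1}$ it closes up with a forward segment of $\Gamma$ into a nontrivial closed $V$-path, and otherwise substituting it for the segment of $\Gamma$ between $\sigma_{i_t}$ and $\sigma_{j_{t+1}}$ yields a second $V$-path from $\partial\rho$ to $\phi$. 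That completes the argument you outlined.
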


Gradient vector fields on combinatorial surfaces have additional properties that do not hold in higher dimensions. The following property is readily checked using the fact that a 1-cell is only attached to at most two 0-cells, and at most two 2-cells are attached to a 1-cell:

\begin{lemma}\label{lem:uniquePredSucc}
Two $V$-paths of dimension 0 cannot branch at a common cell, and two $V$-paths of dimension 1 cannot merge (except at their last cell).
\end{lemma}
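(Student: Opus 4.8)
The plan is to reinterpret both assertions as statements about the determinism of the step relation defining a $V$-path, reading it forward in the dimension-$0$ case and backward in the dimension-$1$ case. Recall that a $V$-path of dimension $d$ alternates $d$-cells $\sigma_i$ and $(d+1)$-cells $\tau_i$, with $(\sigma_i,\tau_i)\in V$ and $(\sigma_{i+1},\tau_i)\notin V$. Saying that two paths cannot \emph{branch} at a common cell is the same as saying that, given $\sigma_i$, the successor pair $(\tau_i,\sigma_{i+1})$ is uniquely determined; saying that they cannot \emph{merge} (except at the final cell) is the same as saying that, for a \emph{non-terminal} occurrence of a cell as some $\sigma_{i+1}$, the predecessor pair $(\tau_i,\sigma_i)$ is uniquely determined. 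I would use only two structural facts: by the definition of a discrete vector field each cell lies in at most one pair of $V$, and on a combinatorial surface each $1$-cell has exactly two $0$-cell facets and is a facet of at most two $2$-cells.

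For the dimension-$0$ claim I would argue forward. Given a vertex $\sigma_i$ in the path, the edge $\tau_i$ with $(\sigma_i,\tau_i)\in V$ is unique because $\sigma_i$ lies in at most one pair of $V$. The edge $\tau_i$ has exactly two vertex facets, one of them $\sigma_i$; the other is forced to be $\sigma_{i+1}$, and it automatically satisfies $(\sigma_{i+1},\tau_i)\notin V$ since $\tau_i$ is already paired with $\sigma_i$. Thus the successor is determined by $\sigma_i$ alone, so two dimension-$0$ paths sharing a cell agree from that cell onward and cannot branch.

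For the dimension-$1$ claim I would argue backward, and here the non-terminal hypothesis is what does the work. Let $\sigma=\sigma_{i+1}$ be a non-terminal cell of a dimension-$1$ path, so it is followed in the path by $\tau_{i+1}$ with $(\sigma,\tau_{i+1})\in V$; in particular $\sigma$ is paired \emph{up} to the $2$-cell $\tau_{i+1}$. Its predecessor $\tau_i$ is a $2$-cell having $\sigma$ as a facet with $(\sigma,\tau_i)\notin V$, so $\tau_i\neq\tau_{i+1}$ (as $\sigma$ lies in only one pair). Since $\sigma$ is a facet of at most two $2$-cells, this forces $\tau_i$ to be the unique other incident $2$-cell, and then $\sigma_i$ is the unique facet of $\tau_i$ with $(\sigma_i,\tau_i)\in V$ (again because $\tau_i$ lies in at most one pair). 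Hence the predecessor is determined by $\sigma$ alone, and two dimension-$1$ paths meeting at a non-terminal cell must already have coincided, ruling out a merge.

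The main obstacle, as the argument makes clear, is locating and justifying the ``except at their last cell'' exception: the backward determinism of dimension-$1$ paths relies crucially on $\sigma$ being paired up, which holds exactly when $\sigma$ is non-terminal. At a terminal cell $\sigma_r$---which may be critical or paired down---both $2$-cells incident to $\sigma_r$ can serve as predecessors, so a genuine merge is possible there. I would also emphasize that the bound ``at most two $2$-cells per edge'' is the surface hypothesis and is exactly what fails in higher dimensions, consistent with the paper's remark that these properties do not persist beyond dimension two.
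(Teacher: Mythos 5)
Your proof is correct and follows exactly the route the paper indicates: the paper offers no written proof beyond the remark that the lemma is ``readily checked'' from the two incidence facts (a 1-cell has two 0-cell facets; a 1-cell is a facet of at most two 2-cells), and your argument is precisely the careful elaboration of that remark, using in addition only the fact that each cell lies in at most one pair of $V$. Your identification of where the ``except at their last cell'' caveat enters --- the backward step for dimension-1 paths is deterministic only when the 1-cell is paired upward, which is exactly the non-terminal condition --- is the right observation and fills in the one point the paper leaves implicit.
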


\begin{corollary}\label{cor:twoPaths}
Let $\rho$ be a critical 1-cell of a discrete vector field $V$ on a
combinatorial surface. Then there are at most two $V$-paths from $\partial\rho$ to critical 0-cells, each starting at one of the two 0-cells in $\partial\rho$. Similarly, there are at most two $V$-paths from facets of critical 2-cells to $\rho$.
\end{corollary}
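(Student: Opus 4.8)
The plan is to deduce both claims directly from Lemma~\ref{lem:uniquePredSucc}, together with the two combinatorial facts about a surface that underlie it: every $1$-cell has exactly two $0$-cell facets, and every $1$-cell is a facet of at most two $2$-cells.

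For the first claim, I would start by noting that $\partial\rho$ consists of exactly two $0$-cells, say $a$ and $b$, because $\rho$ is attached in a regular CW complex by a homeomorphism from $\mathbb{S}^0$. Any dimension-$0$ $V$-path from $\partial\rho$ to a critical $0$-cell therefore begins at $a$ or at $b$. The crucial observation is that such a path is already determined by its first cell: at a non-critical $0$-cell $\sigma_i$ there is a unique edge $\tau_i$ with $(\sigma_i,\tau_i)\in V$ (each cell lies in at most one pair of $V$), and the successor $\sigma_{i+1}$ is forced to be the unique second facet of $\tau_i$, since $(\sigma_{i+1},\tau_i)\notin V$ follows from $\tau_i$ lying in at most one pair. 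This forward determinism is precisely the non-branching part of Lemma~\ref{lem:uniquePredSucc}, so each of $a$ and $b$ is the initial cell of at most one $V$-path reaching a critical $0$-cell, yielding at most two paths in total, one starting at each cell of $\partial\rho$.

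For the second claim I would argue dually, tracing paths backward from $\rho$. A $V$-path from a facet of a critical $2$-cell to $\rho$ is a dimension-$1$ path whose last cell is $\rho$, and its second-to-last cell is a $2$-cell having $\rho$ as a facet; on a combinatorial surface there are at most two such $2$-cells. By the non-merging part of Lemma~\ref{lem:uniquePredSucc}, two distinct dimension-$1$ paths cannot coincide before their common last cell, so two such paths ending at $\rho$ must already differ at the second-to-last $2$-cell. Because the predecessor $1$-cell of any $2$-cell occurring in the path is again forced by the single-pair property (it is the unique pair-partner of that $2$-cell), each admissible choice of the $2$-cell adjacent to $\rho$ determines at most one path, leaving at most two paths that reach $\rho$.

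I do not anticipate a genuine obstacle; the surface-specific content is already isolated in Lemma~\ref{lem:uniquePredSucc}, and what remains is careful bookkeeping. The two points I would be most careful about are the degenerate cases: a starting $0$-cell of $\partial\rho$ may itself be critical, giving a trivial path, and---since $V$ need not be a gradient field---a forward path from $a$ or $b$ may run into a closed orbit and never reach a critical $0$-cell, in which case it simply contributes nothing and only reinforces the ``at most two'' bound.
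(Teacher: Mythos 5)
Your argument is correct and is essentially the paper's intended one: the paper states this as an immediate corollary of Lemma~\ref{lem:uniquePredSucc} without further proof, and your forward-determinism/backward-determinism reading of that lemma, anchored at the two $0$-cells of $\partial\rho$ (resp.\ the at most two $2$-cells containing $\rho$), is exactly the omitted bookkeeping. Your attention to the trivial-path and non-terminating-orbit cases is appropriate and does not change the bound.
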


\subsection{Pseudo-Morse functions and symbolic perturbation}\label{sec:pseudoMorse}

As in smooth Morse theory, a discrete gradient vector field can be understood as the gradient of some non-degenerate function in the following sense:

\begin{definition*}[discrete Morse function \citep{Forman1998Morse}]
A function $f:\cells \rightarrow \RR$ on the cells of a regular CW complex $\complex$ is a \emph{discrete Morse function} if there is a gradient vector field~$\Vf$ such that whenever $\sigma $ is a facet of $\tau$ then
$$
(\sigma,\tau) \not\in \Vf ~\, \text{implies} ~\, f(\sigma) < f(\tau) \quad\text{and}\quad
(\sigma,\tau) \in \Vf ~\, \text{implies} ~\, f(\sigma) \geq f(\tau).   
$$
$\Vf$ is called the \emph{gradient vector field of $f$}.
\end{definition*}
In contrast to simplexwise linear functions, which are determined by their function values at the vertices, discrete Morse functions take values on cells of \emph{any} dimension.

The gradient vector field of a discrete Morse function encodes only the \emph{sign} of the difference between function values, not the difference itself. Therefore a discrete gradient vector field does not uniquely determine a discrete Morse function, but for every discrete Morse function $f$ there is exactly one gradient vector field~$\Vf$.

In order to be able to treat non-generic input functions, it is useful
to consider a more general class of functions, which we call \emph{pseudo-Morse functions}.
Pseudo-Morse functions substitute the strict inequality in the definition of
Morse functions by a weak one.

\begin{definition*}[pseudo-Morse function, consistency]
A function $f:\cells \rightarrow \RR$ on the cells of a regular CW complex $\complex$ is a \emph{discrete pseudo-Morse function} if there is a gradient vector field $V$ such that whenever $\sigma $ is a facet of $\tau$ then
$$
(\sigma,\tau) \not\in V ~\, \text{implies} ~\, f(\sigma) \leq f(\tau) \quad\text{and}\quad
(\sigma,\tau) \in V ~\, \text{implies} ~\, f(\sigma) \geq f(\tau).   
$$
In this case, we call $f$ and $V$ \emph{consistent}.
\end{definition*}

Note that a gradient vector field $V$ consistent with a pseudo-Morse function $f$ is not unique in general. The following lemma provides
a useful characterization of discrete pseudo-Morse functions.
\begin{lemma}\label{lem:perturbation}
Let $f:\cells\to\RR$ be a function on the cells of a regular CW complex $\complex$ and let $V$ be a gradient vector field on $\complex$. Then $f$ is a discrete pseudo-Morse function consistent with $V$ if and only if for every $\epsilon>0$ there is a discrete Morse function $f_\epsilon:\cells\to\RR$ with $\|f_\epsilon-f\|_\infty \leq \epsilon$ such that $V$ is the gradient vector field of $f_\epsilon$. 
\end{lemma}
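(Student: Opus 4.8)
The two implications are handled separately, and only one of them carries real content. For the direction from right to left I would argue by taking limits. Assume that for every $\epsilon>0$ a discrete Morse function $f_\epsilon$ with $V=V_{f_\epsilon}$ and $\|f_\epsilon-f\|_\infty\leq\epsilon$ exists; then $f_\epsilon(c)\to f(c)$ for every cell $c\in\cells$ as $\epsilon\to 0$. Fix a facet relation $\sigma\subset\partial\tau$. If $(\sigma,\tau)\in V$, the Morse condition for $f_\epsilon$ gives $f_\epsilon(\sigma)\geq f_\epsilon(\tau)$ for all $\epsilon$, and letting $\epsilon\to 0$ yields $f(\sigma)\geq f(\tau)$; if $(\sigma,\tau)\notin V$, then $f_\epsilon(\sigma)<f_\epsilon(\tau)$, and passing to the limit yields the weak inequality $f(\sigma)\leq f(\tau)$. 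Hence $f$ is consistent with $V$ and thus pseudo-Morse. The only feature not surviving the limit is the strictness at unpaired facets, which is precisely the gap between the two definitions.

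For the converse I would perturb $f$ additively by a small multiple of a fixed Morse function realizing $V$. Since $V$ is a gradient vector field, i.e. has no nontrivial closed $V$-paths, Forman's realization theorem (\citealp{Forman1998Morse,Forman2002Users}) provides a discrete Morse function $g$ with $V_g=V$; explicitly $g(\sigma)<g(\tau)$ at every unpaired facet $\sigma\subset\partial\tau$ and $g(\sigma)\geq g(\tau)$ at every pair $(\sigma,\tau)\in V$. I would then set $f_\epsilon:=f+\lambda g$ for a small $\lambda>0$ and check, over the finitely many facet relations, that $f_\epsilon$ is Morse with gradient field $V$: where $f$ already satisfies a strict inequality it persists once $\lambda$ is small; where $f(\sigma)=f(\tau)$ at an unpaired facet the tie breaks correctly because $g(\sigma)<g(\tau)$; and at every pair $f_\epsilon(\sigma)\geq f_\epsilon(\tau)$ follows by combining the corresponding inequalities for $f$ and for $g$. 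Choosing $\lambda$ below the finitely many positive thresholds arising from the strict cases and below $\epsilon/(\|g\|_\infty+1)$ secures both $\|f_\epsilon-f\|_\infty\leq\epsilon$ and that $V$ satisfies the Morse conditions for $f_\epsilon$; the uniqueness of the gradient vector field of a Morse function then identifies $V=V_{f_\epsilon}$.

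The substance, and the main obstacle, lies entirely in the converse: one must produce a single perturbation that simultaneously makes all weak inequalities at unpaired facets strict in the correct direction, preserves all inequalities at the pairs, and keeps $V$ as the exact gradient field, without these demands conflicting. The ansatz $f+\lambda g$ decouples them, but it rests on the existence of the auxiliary Morse function $g$ realizing $V$. If one prefers to avoid appealing to Forman's theorem, $g$ can be built directly: orient each unpaired facet as $\sigma\to\tau$ and each pair as $\tau\to\sigma$, and read off $g$ from a topological order of the resulting directed graph by assigning strictly increasing values. The one nontrivial point is that this graph is acyclic, which is exactly Forman's characterization of gradient vector fields through the absence of nontrivial closed $V$-paths; this is where the hypothesis on $V$ genuinely enters, and it is the only combinatorial step in the argument.
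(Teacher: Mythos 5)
Your argument is correct and takes essentially the same route as the paper's: the substantive direction uses the identical ansatz $f_\epsilon = f + \lambda g$ with $g$ a discrete Morse function realizing $V$ via Forman's existence theorem (the paper writes $\lambda = \epsilon/G$ with $G = \max|g|$), and the converse differs only cosmetically, since you pass to the limit $\epsilon\to 0$ where the paper fixes a single $\epsilon$ smaller than half of every nonzero gap $|f(\phi)-f(\rho)|$ and checks the weak inequalities directly. Your worry about ``thresholds from the strict cases'' is actually unnecessary --- since $g(\sigma)<g(\tau)$ at every unpaired facet, the sum $f+\lambda g$ satisfies the required strict inequality for \emph{every} $\lambda>0$, so only the bound $\lambda\leq\epsilon/\|g\|_\infty$ matters --- but this does not affect correctness.
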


\begin{proof}
Assume that $f$ is a pseudo-Morse function consistent with a gradient vector field $V$. %
There exists a discrete Morse function $g$ whose gradient vector field $V_g$ is precisely given by $V$~\citep{Forman1998Morse}. Let $G$ be the maximum absolute value of~$g$. Given $\epsilon >0$, for each cell $\sigma$ define $f_\epsilon(\sigma):=f(\sigma)+\epsilon \frac{g(\sigma)}{G}$. Then it is straightforward to check that $f_\epsilon$ is a discrete Morse function with gradient vector field $V$ and $\|f_\epsilon-f\|_\infty\leq \epsilon$. 

On the other hand, assume that for every $\epsilon>0$ there is a discrete Morse function $f_\epsilon:\cells\to\RR$ consistent with $V$ and $\|f_\epsilon-f\|_\infty \leq \epsilon$. Choose $\epsilon$ such that for every $\phi, \rho \in K$ with $f(\phi)\neq f(\rho)$ we have $\epsilon < \frac{|f(\phi)-f(\rho)|}2$. In this case, one easily verifies that $f$ is a pseudo-Morse function consistent with~$V$.
\end{proof}

The previous lemma provides a \emph{symbolic perturbation} scheme based on gradient vector fields in order to allow for non-generic (degenerate) input functions. Starting with a pseudo-Morse function $f$, we can choose a consistent gradient vector field $V$, which may not be unique. Lemma~\ref{lem:perturbation} asserts that there is  a discrete Morse functions~$f_\epsilon$ arbitrarily close to $f$ and consistent with $V$. Therefore we can work with $f$ as if it were a discrete Morse function with gradient vector field $V$. In particular, we use it to consider critical points associated to a pseudo-Morse function by choosing a consistent gradient vector field.

This first symbolic perturbation scheme is not sufficient for all our purposes; the definition of persistence pairs in Section~\ref{sec:persistenceMorse} not only requires a gradient vector field, but also a total order on the critical cells, which again might not be uniquely defined by a pseudo-Morse function $f$ and a consistent gradient vector field $V$. We now derive a second perturbation scheme that meets these requirements.

Since a gradient vector field imposes certain inequality constraints on the functions consistent with it, we can ask how these inequalities affect the relation between the function values of any two cells. We observe that any discrete gradient vector field gives rise to a strict partial order on the set of cells:

\begin{definition*}[induced partial order]
Let $V$ be a discrete gradient vector field and consider the
relation~$\gets_V$ defined on $K$ such that whenever $\sigma $ is a facet of $\tau$ then
$$
(\sigma,\tau) \not\in V ~\, \text{implies} ~\, \sigma \gets_V \tau \quad\text{and}\quad
(\sigma,\tau) \in V ~\, \text{implies} ~\, \sigma \to_V \tau.   
$$
Let $\prec_V$ be the transitive closure of $\gets_V$. Then $\prec_V$ is called the \emph{(strict) partial order induced by} $V$.
\end{definition*}
The interpretation of this partial order is that for \emph{any} pseudo-Morse function $f$ consistent with $V$ and any two cells $\phi$ and $\rho$, the relation $\phi \prec_V \rho$ implies $f(\phi)\leq f(\rho)$. The relation~$\gets_V$ is the \emph{covering relation} of~$\prec_V$, i.e., $\phi \gets_V \rho$ implies $\phi \prec_V \rho$ and there is no $\psi$ with $\phi \prec_V \psi \prec_V \rho$. The covering relation of a partial order forms a directed acyclic graph called the \emph{Hasse diagram} (with edges oriented as suggested by~$\gets_V$). The Hasse diagram~$H_V$ of~$\prec_V$ is obtained from the Hasse diagram of the face lattice of $\complex$ by inverting the orientation of all edges corresponding to pairs $(\sigma,\tau)\in V$ as described by \citet{Chari2000Discrete}. $H_V$ has the property that $\phi \prec_V \rho$ if and only if there is a directed path from $\rho$ to $\phi$. Note that $\sigma \gets_V \tau$ implies $f(\sigma) \leq f(\tau)$, i.e., both the arrow visualizing $(\sigma,\tau)\in V$ and the arrow symbolizing $\sigma \gets_V \tau$ point towards a (weakly) \emph{decreasing} function value of $f$. 

Assume we are given a pseudo-Morse function $f$ consistent with a gradient vector field~$V$. On the one hand we have the induced partial order~$\prec_V$. On the other hand the function $f$ canonically induces a strict partial order~$\prec_f$ given by $\phi \prec_f \rho \Leftrightarrow f(\phi)<f(\rho)$. 
Since the two orders~$\prec_f$ and~$\prec_V$ are compatible by assumption (there are no two cells $(\phi,\rho)$ with $\phi\prec_V\rho$ and $\phi \succ_f \rho$), we can merge them into a strict partial order $\prec_{f,V}$ (the transitive closure of $(\prec_f \cup \prec_V)\subset \cells\times \cells$). A linear extension of this order is now a strict total order~$\prec$ consistent with both $f$ and $V$.

\begin{definition*}[consistent total order]
Let $V$ be a discrete gradient vector field $V$ consistent with a discrete pseudo-Morse function $f$. Then a strict total order $\prec$ is called \emph{consistent} with $(f,V)$ if it is a linear extension of $\prec_f$ and $\prec_V$. %
\end{definition*}

This total order~$\prec$ gives rise to a canonical function $\cells\to\NN$, which is a discrete Morse function and consistent with $V$. If we use this function as the function $g$ in the proof of Lemma~\ref{lem:perturbation} to construct $f_\epsilon$, then $f_\epsilon$ is an injective discrete Morse function with gradient vector field $V$ and the total order induced by $f_\epsilon$ is~$\prec$ again. We thus obtain a second symbolic perturbation scheme for situations where a total order on the cells is required.

We make use of this concept in the following definition. A classical object of study in smooth Morse theory is the \emph{sublevel set} $\{x\in M: f(x) \leq t\}$ of a function $f:M\to\RR$ on a manifold $M$. In the discrete theory, the analogous object is the \emph{level subcomplex}, and the equivalent construction using our second symbolic perturbation scheme is the \emph{order subcomplex}:

\begin{definition*}[level subcomplex \citep{Forman1998Morse}, order subcomplex]
Let $f$ be a pseudo-Morse function on a regular CW complex $\complex$. Let the \emph{carrier} of a subset $L \subset \cells$ be the smallest subcomplex of $\complex$ containing all of~$L$. Then for $t \in \RR$, the \emph{level subcomplex} is
$$
\complex(t)=\mathop{\mathrm{carrier}}\Bigg(
\bigcup_{\rho \in \cells:f(\rho) \leq t} \rho \Bigg).
$$
Similarly, let $\prec$ be a strict total order on the cells~$\cells$ of
a regular CW complex $\complex$. 
Then for a cell $\sigma \in \cells$, the \emph{order subcomplex} is
$$
\complex(\sigma)=\mathop{\mathrm{carrier}}\Bigg(
\mathop{\bigcup_{\rho \in \cells:\rho \preceq \sigma}} \rho \Bigg).
$$
\end{definition*}

Like in the smooth theory, the homotopy type of level subcomplexes changes only at critical cells. The statement can trivially be rephrased for order subcomplexes:
\begin{theorem}[\citet{Forman1998Morse}]%
\label{thm:subcomplexesHomotopyEquiv}
Let $V$ be a gradient vector field on $\complex$ and let $\prec$ be a linear extension of $\prec_V$.
If $\rho$ and $\psi$ are two cells such that $\rho\prec\psi$ and there is no critical cell $\phi$ with respect to $V$ such that $\rho \prec \phi \preceq \psi$, then $\complex(\psi)$ collapses to $\complex(\rho)$.
\end{theorem}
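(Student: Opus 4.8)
The plan is to reduce the statement to Forman's collapsing theorem for \emph{level} subcomplexes \citep{Forman1998Morse} by realizing the order $\prec$ through an honest injective Morse function. Since $\prec$ is a linear extension of $\prec_V$, it is a consistent total order for $V$, so the second symbolic perturbation scheme (the canonical rank function of $\prec$ fed into the construction of Lemma~\ref{lem:perturbation}) produces an injective discrete Morse function $f_\epsilon\colon\cells\to\RR$ whose gradient vector field is $V$ and whose induced total order is $\prec$. Its critical cells are exactly the critical cells of $V$.

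I would then identify the two kinds of subcomplex. Because $f_\epsilon$ is injective and induces $\prec$, we have $\{\rho:\rho\preceq\sigma\}=\{\rho:f_\epsilon(\rho)\le f_\epsilon(\sigma)\}$ for every cell $\sigma$, so their carriers agree and the order subcomplex $\complex(\sigma)$ is precisely the level subcomplex of $f_\epsilon$ at value $f_\epsilon(\sigma)$. Under this identification the hypothesis ``no critical cell $\phi$ with $\rho\prec\phi\preceq\psi$'' says exactly that $f_\epsilon$ has no critical value in $(f_\epsilon(\rho),f_\epsilon(\psi)]$, and the conclusion is Forman's statement that $\complex(\psi)$ then collapses to $\complex(\rho)$. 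This is the ``trivial rephrasing'' the result is advertised as.

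For a self-contained argument one would instead reprove Forman's theorem in this language by collapsing $\complex(\psi)$ down to $\complex(\rho)$ one $V$-pair at a time. The two things to check are that every cell of $\complex(\psi)\setminus\complex(\rho)$ is non-critical (cells pulled in by the carrier are facets of $V$-pairs, and cells with $\rho\prec\cdot\preceq\psi$ are non-critical by hypothesis) and that this cell set is closed under $V$-pairing, so that it decomposes into pairs $(\sigma,\tau)$. One removes these pairs in decreasing $\prec$-order: the $\prec$-maximal remaining cell is always the facet $\sigma$ of its pair (its partner satisfies $\tau\prec\sigma$). The main obstacle is proving that at each stage $\sigma$ is genuinely a \emph{free} face, i.e.\ that $\tau$ is its unique remaining coface of \emph{every} dimension; a codimension-one coface other than $\tau$ exceeds $\sigma$ in $\prec$ and is hence already gone, while a higher coface is ruled out using the diamond structure of rank-two intervals together with the observation that the dimension increases by at most one along any directed path in $H_V$ (two reversed edges cannot be consecutive). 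Reducing to $f_\epsilon$ packages exactly this bookkeeping inside the cited theorem.
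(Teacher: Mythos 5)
Your reduction is correct and is exactly the route the paper intends: the paper states this result as Forman's theorem for level subcomplexes and remarks that it ``can trivially be rephrased for order subcomplexes,'' the rephrasing being precisely your identification of $\complex(\sigma)$ with the level subcomplex of an injective Morse function realizing $\prec$ (the canonical rank function of $\prec$ already does the job, without even routing through Lemma~\ref{lem:perturbation}). The paper offers no further proof, so your additional self-contained collapsing sketch goes beyond what is written there but is not needed.
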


The order subcomplexes provide a finer (cell-by-cell) filtration of the complex $\complex$ than the level subcomplexes, in particular if $f$ is degenerate. This turns out to be useful when working with persistent homology in Section~\ref{sec:persistenceMorse}.

\subsection{Piecewise linear functions and discrete Morse functions}\label{sec:PLfunctions}
In this section we discuss a canonical relationship between discrete and piecewise linear (PL) Morse theory. As it turns out, it is possible to translate statements from one setting to the other seamlessly. %
Similar constructions have been used by~\citet{King2005Generating, Attali2009Simplification}. 

Assume that $\complex$ is a simplicial complex. 
Let $f_\text{PL}$ be a simplexwise linear function on $\complex$ and let $f_0$ be its restriction to the 0-skeleton of $\complex$.
The function $f_0$ inductively gives rise to a discrete pseudo-Morse function $f$ in the following way. For each 0-cell~$\alpha$, let $f(\alpha)=f_0(\alpha)$. For a cell $\tau$ with $\dim\tau>0$, let $f(\tau)$ be the maximum value of $f$ on any facet of~$\tau$. The function $f$ can easily be seen to be pseudo-Morse since it is consistent with the empty vector field $V=\emptyset$  (all cells are critical). Note that any level subcomplex of $f$ coincides with the induced subcomplex of $\complex$ on the corresponding sublevel set of~$f_0$. This induced subcomplex, in turn, is homotopy equivalent to the corresponding sublevel set of $f_\text{PL}$ 
\citep{Kuhnel1990Triangulations,Morozov2008Homological}. This means that from a Morse-theoretic point of view, the PL function $f_\text{PL}$ and the pseudo-Morse function $f$ are equivalent. We conclude:

\begin{theorem}
Let $f_\text{PL}$ be a simplexwise linear function on 
a simplicial complex $\complex$. Then there is a canonical pseudo-Morse function $f$ on $\complex$ such that for every $t\in\RR$ the sublevel set $\{x\in \complex: f_\text{PL}(x) \leq t\}$ is homotopy equivalent to the level subcomplex $\complex(t)$.
\end{theorem}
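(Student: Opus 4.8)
The plan is to prove the theorem by directly constructing the pseudo-Morse function $f$ using the inductive recipe already described in the paragraph preceding the statement, and then verifying the claimed homotopy equivalence level by level. First I would set up notation: given the simplexwise linear $f_\text{PL}$ on the simplicial complex $\complex$, let $f_0$ be its restriction to the $0$-skeleton, and define $f$ on cells inductively by $f(\alpha)=f_0(\alpha)$ on vertices and $f(\tau)=\max\{f(\sigma):\sigma\text{ a facet of }\tau\}$ for higher-dimensional $\tau$. The text has already observed that $f$ is pseudo-Morse, since it is consistent with the empty gradient vector field $V=\emptyset$ (every cell critical): indeed, $f(\sigma)\leq f(\tau)$ whenever $\sigma$ is a facet of $\tau$ holds by construction, which is exactly the consistency condition for the empty $V$. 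So this part is essentially bookkeeping.

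The core of the argument is to identify the level subcomplex $\complex(t)$ with an intrinsic subcomplex of $\complex$. I would show that $\complex(t)$ equals the \emph{induced} subcomplex of $\complex$ on the vertex set $\{\alpha\in\complex_0 : f_0(\alpha)\leq t\}$, i.e., the full subcomplex spanned by those vertices. The key observation driving this is that, by the definition $f(\tau)=\max$ over facets, an easy induction on dimension gives $f(\tau)=\max\{f_0(\alpha):\alpha\text{ a vertex of }\tau\}$; hence $f(\tau)\leq t$ if and only if \emph{all} vertices of $\tau$ have $f_0$-value $\leq t$. A simplex $\tau$ therefore lies in $\complex(t)$ exactly when all its vertices are below threshold, which is precisely the membership condition for the induced subcomplex on the sublevel vertex set. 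One must take care to confirm that $\complex(t)$, defined via the carrier (the smallest subcomplex containing all cells $\rho$ with $f(\rho)\leq t$), really coincides with this induced subcomplex: the carrier adds all faces of such cells, but since $f(\sigma)\leq f(\tau)$ for any face $\sigma\preceq\tau$, every face of a sublevel cell is itself a sublevel cell, so the collection $\{\rho:f(\rho)\leq t\}$ is already closed under taking faces and equals its own carrier.

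The final link is to pass from the induced subcomplex to the genuine PL sublevel set $\{x\in\complex:f_\text{PL}(x)\leq t\}$. Here I would invoke the cited results \citep{Kuhnel1990Triangulations,Morozov2008Homological}: the PL sublevel set deformation-retracts onto (equivalently, is homotopy equivalent to) the induced subcomplex on the vertices lying below $t$. Intuitively, on each simplex $f_\text{PL}$ is linear, so the region $\{f_\text{PL}\leq t\}$ within a simplex whose vertices straddle the threshold is a convex cell that retracts onto the face spanned by its sub-threshold vertices; gluing these retractions across simplices yields the global homotopy equivalence. Combining this with the identification from the previous paragraph gives $\{x:f_\text{PL}(x)\leq t\}\simeq\complex(t)$ for every $t\in\RR$, which is the assertion.

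The main obstacle is not any single hard step but rather making the chain of identifications fully rigorous, and in particular the cleanly justified equality $f(\tau)=\max_{\alpha\in\tau}f_0(\alpha)$ together with the face-closure argument that makes the carrier operation a no-op. The genuinely nontrivial input—the homotopy equivalence between a PL sublevel set and its induced vertex subcomplex—is supplied by the external references, so I would not reprove it; the burden is to verify that the combinatorial level subcomplex $\complex(t)$ matches exactly the induced subcomplex those references use, so that the citation applies verbatim.
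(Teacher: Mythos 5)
Your proposal follows essentially the same route as the paper: the paper's argument is precisely the construction of $f$ by taking maxima over facets, the observation that it is consistent with the empty vector field, the identification of each level subcomplex with the induced subcomplex on the sublevel vertex set, and the citation of \citep{Kuhnel1990Triangulations,Morozov2008Homological} for the homotopy equivalence with the PL sublevel set. Your additional care in checking that $f(\tau)=\max_{\alpha\in\tau}f_0(\alpha)$ and that the carrier operation is a no-op simply makes explicit what the paper asserts in one sentence, so the proposal is correct and matches the paper's proof.
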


Vice versa, we can interpret any discrete pseudo-Morse function $f$ on a regular CW complex $\complex$ as a simplexwise linear function $f_\text{sd}: |\!\sd\complex| \to \RR$ on the underlying space of the barycentric subdivision $\sd\complex$.
The \emph{barycentric subdivision} of a regular CW complex $\complex$ is the order complex of the face lattice, i.e., the \emph{abstract} simplicial complex $\sd\complex$ whose vertices are the cells of $\complex$ and whose simplices are the totally ordered subsets of the face lattice.
The underlying space $|\!\sd\complex|$ is homeomorphic to~$\complex$.
The function $f_\text{sd}$ is assumed to linearly interpolate the values of $f$ at the vertices of $|\!\sd\complex|$ inside each simplex of $|\!\sd(\complex)|$. 
Again, the sublevel sets of $f_\text{sd}$ are homotopy equivalent to the corresponding level subcomplexes of $f$: 

\begin{figure}[ht]\label{fig:morse_PL_equivalence}
\center{
\includegraphics[scale=0.9]{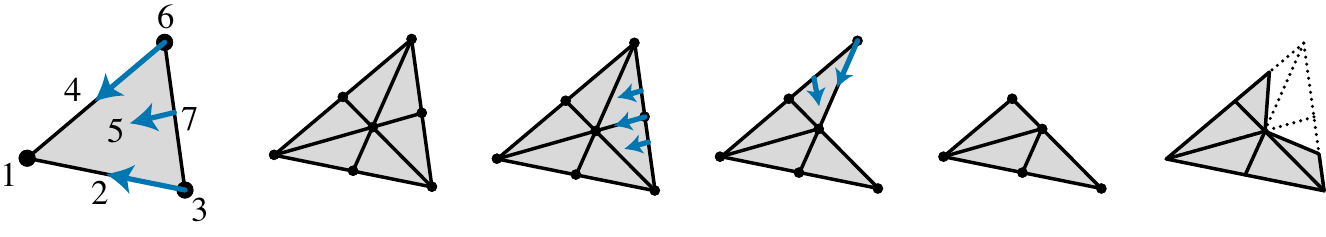}%
}
\caption{Illustration to Theorem~\ref{thm:morse_PL_equivalence}, showing the homotopy equivalence of the level subcomplex $\complex(t)$ to the sublevel set $\{x\in |\!\sd\complex|: f_\text{sd}(x) \leq \nobreak t\}$ for $t=5$. From left to right: function $f$ on $\complex(5)$; barycentric subdivision $\sd\complex(5)=\Delta(K(5))=\Delta(7)$; vector fields defining the collapse of $\Delta(7)$ onto $\Delta(6)$ and of $\Delta(6)$ onto $\Delta(5)$; sublevel set of~$f_\text{sd}$.}
\end{figure}

\begin{theorem}\label{thm:morse_PL_equivalence}
Let $f$ be a pseudo-Morse function on a simplicial complex $\complex$. Then $f$ induces a simplexwise linear function $f_\text{sd}$ on $|\!\sd\complex|$ such that for every $t\in\RR$ the level subcomplex $\complex(t)$ is homotopy equivalent to the sublevel set $\{x\in |\!\sd\complex|: f_\text{sd}(x) \leq \nobreak t\}$.
\end{theorem}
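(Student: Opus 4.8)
The plan is to compare both sides with a common model: the full subcomplex $N(t)$ of $\sd\complex$ spanned by those cells $\sigma$ with $f(\sigma)\le t$. First I would dispose of the right-hand side. Since $f_\text{sd}$ is linear on each simplex of $\sd\complex$, its value on a simplex equals the maximum of its values at the vertices, so a simplex of $\sd\complex$ lies in $\{f_\text{sd}\le t\}$ exactly when all its vertices do; hence $N(t)$ is the largest subcomplex of $\sd\complex$ contained in $\{f_\text{sd}\le t\}$. The deformation-retraction result for simplexwise linear functions already invoked above \citep{Kuhnel1990Triangulations,Morozov2008Homological}, applied to $f_\text{sd}$ on $\sd\complex$, then shows that $\{x\in|\!\sd\complex|:f_\text{sd}(x)\le t\}$ is homotopy equivalent to (indeed deformation retracts onto) $N(t)$. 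It therefore remains to prove that $N(t)$ is homotopy equivalent to $\complex(t)$, equivalently to its barycentric subdivision $\sd(\complex(t))$, which is homeomorphic to $\complex(t)$ and contains $N(t)$ as a full subcomplex.

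The difficulty is that $\complex(t)$ genuinely differs from $N(t)$: being a subcomplex, $\complex(t)$ contains every face of every cell of value $\le t$, and such a face $\sigma$ may itself have $f(\sigma)>t$ whenever it is paired upward by a gradient vector field $V$ consistent with $f$. Write $P$ for the face poset of $\complex(t)$ and let $L\subseteq P$ be the subposet of cells of value $\le t$, so that $N(t)=\Delta(L)$ and $\sd(\complex(t))=\Delta(P)$; call the cells of $\complex(t)$ of value $>t$ \emph{high}. I would define $r\colon P\to P$ by $r(\sigma)=\sigma$ for $\sigma\in L$ and $r(\sigma)=V(\sigma)$ for $\sigma$ high, where $V(\sigma)$ is the cell paired with $\sigma$. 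If $r$ is order-preserving, idempotent, and satisfies $r(\sigma)\succeq\sigma$ with image $L$, then the standard fact that an order-preserving self-map dominating the identity is homotopic to it on order complexes shows that the inclusion $N(t)=\Delta(L)\hookrightarrow\Delta(P)$ is a homotopy equivalence; this retraction is precisely what realizes the stepwise collapses depicted in Figure~\ref{fig:morse_PL_equivalence}, and composing it with the first step completes the argument for every $t\in\RR$.

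The crux, and what I expect to be the main obstacle, is a structural lemma guaranteeing that $r$ is well defined and order-preserving: every high cell $\sigma$ of $\complex(t)$ is the \emph{lower} cell of a pair $(\sigma,V(\sigma))\in V$ whose partner $V(\sigma)$ is an immediate coface with $f(V(\sigma))\le t$, and moreover $V(\sigma)$ is contained in \emph{every} low coface of $\sigma$. I would prove this by induction on codimension, using the dimension-independent observation that a cell of value $\le t$ has at most one facet of value $>t$ (its downward partner under $V$, if any), every other facet having value $\le t$ by consistency. Applied to a minimal-dimensional low coface of $\sigma$, this forces that coface to be an immediate coface, hence the upward partner $V(\sigma)$; applied to an arbitrary low coface $\sigma'$ it produces a lower-dimensional low coface through $\sigma$, into which induction places $V(\sigma)$, giving $V(\sigma)\subseteq\sigma'$. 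Granting the lemma, the nontrivial order-preservation checks (a high cell below a low cell, and a high cell below a high cell, the latter via the low coface $V(\sigma')$) both reduce exactly to this containment, while the remaining cases are immediate from $V(\sigma')\supseteq\sigma'$; idempotence and $\operatorname{im}r=L$ are then clear, so $r$ is the desired closure operator and the theorem follows.
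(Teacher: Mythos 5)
Your argument is correct, and it reaches the conclusion by a genuinely different route in its second half. The first step is identical to the paper's: your $N(t)$ is exactly the induced subcomplex $\Delta(F(t))$ on the vertices of value $\leq t$, and both proofs quote the same references to retract the sublevel set of $f_\text{sd}$ onto it. For the remaining comparison of $N(t)$ with $\sd\complex(t)$, the paper orders the cells of $K(t)\setminus F(t)$ by a consistent total order $\prec$ and performs, for each such cell $\sigma$, an explicit simplicial collapse of the vertex star of $\sigma$ in the order complex, driven by a small auxiliary gradient vector field $W$ and Theorem~\ref{thm:subcomplexesHomotopyEquiv}; you instead package all of these steps into a single closure operator $r$ on the face poset of $\complex(t)$ and invoke the order homotopy theorem for poset maps dominating the identity. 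Your structural lemma --- every high cell $\sigma$ of $\complex(t)$ is the lower member of a pair $(\sigma,V(\sigma))\in V$ with $V(\sigma)$ low, and $V(\sigma)$ lies in every low coface of $\sigma$ --- is proved correctly by the facet-counting induction (a low cell has at most one high facet, namely its $V$-partner, so a coface of $\sigma$ of codimension $\geq 2$ always has a low facet through $\sigma$), and it is in fact the same fact the paper needs but passes over tersely with ``it follows from the definition of an order subcomplex'' when asserting that $S\cup\{\tau\}$ is again a simplex of $\Delta(\sigma)$; in that sense your write-up is more explicit at the delicate point. What each approach buys: the paper's version produces an actual sequence of elementary collapses (hence a simple homotopy equivalence, consistent with the figure and with the discrete Morse machinery used elsewhere), and avoids importing the order homotopy theorem; yours dispenses entirely with the consistent total order and the cell-by-cell induction, at the cost of concluding only a deformation retraction --- which is all the theorem asserts. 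Note that your facet-counting step uses simpliciality of $\complex$ (codimension-$2$ cofaces have at least two intermediate facets), but the theorem is stated for simplicial complexes, so this is not a restriction.
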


\begin{proof}

Let $V$ be a discrete gradient vector field on $\complex$ that is consistent with $f$ and let $\prec$ be a total order consistent with $(f,V)$. 
Let $\cells(t)$ and $\cells(\rho)$ denote the cells of the level and order subcomplexes $\complex(t)$ and $\complex(\rho)$, respectively. 
Let $\Delta(U)$ denote the induced subcomplex of $\sd\complex$ on a vertex set $U$ (we identify a cell $\rho\in K$ with the corresponding vertex $\{\rho\}\in\sd\complex$). 
The induced subcomplex $\Delta(K(t))$ is easily seen to be identical to $\sd\complex(t)$.
Let $F(t)=\{\phi \in \cells: f(\phi) \leq t\}\subset \cells(t)$. We now show that $\Delta(K(t))$ collapses simplicially onto $\Delta(F(t))$. See Figure~\ref{fig:morse_PL_equivalence} for an example.

Let $\sigma \in K(t)\setminus F(t)$ and let $\sigma_-$ denote its predecessor with respect to $\prec$. We write $\Delta(\rho)$ for $\Delta(\{\phi\in K:\phi\prec\rho\})$. We show that $\Delta(\sigma)$ collapses onto $\Delta(\sigma_-)$.
It follows from the definition of an order subcomplex that $(\sigma,\tau) \in V$ for a unique $\tau\in F(t)$. %
Consequently, for every simplex $S\in\Delta(\sigma)$ with $\sigma\in S$ and $\tau \not\in S$ the simplex $T=S \cup \{\tau\}$ is also contained in $\Delta(\sigma)$%
. Hence, these pairs $(S,T)$ constitute a discrete gradient vector field~$W$ on~$\Delta(\sigma)$ such that exactly the simplices containing $\sigma$ (the \emph{vertex star} of $\sigma$) are non-critical. This vector field~$W$ provides a simplicial collapse of $\Delta(\sigma)$ onto $\Delta(\sigma_-)$ by applying Theorem~\ref{thm:subcomplexesHomotopyEquiv} with an arbitrary linear extension of $\prec_W$.
By repeatedly applying this argument, we obtain that $\Delta(K(t))$ collapses onto $\Delta(F(t))$. This implies that the underlying spaces are homotopy equivalent.

Finally, let $f_\text{sd}$ be the simplexwise linear extension of $f$ from the vertices of $\sd(\complex)$ to the whole complex.
Recall that the induced subcomplex $\Delta(F(t))$ is homotopy equi\-va\-lent to the sublevel set $\{x\in |\!\sd\complex|: f_\text{sd}(x) \leq \nobreak t\}$ \citep{Kuhnel1990Triangulations,Morozov2008Homological}. The claim now follows.
\end{proof}

This equivalence allows us to translate back and forth between piecewise linear functions and pseudo-Morse functions, and to apply theorems of piecewise linear Morse theory to  discrete Morse theory. 

In a similar fashion, a discrete pseudo-Morse function can be constructed from a function defined only on the 2-cells of a combinatorial surface by defining $f(\sigma)$ as the minimum value of all cells that contain $\sigma$ as a facet. This can be used to construct discrete pseudo-Morse functions from functions defined on cubical grids, such as pixel images, by interpreting each pixel as a 2-cell. The resulting level subcomplexes correspond to the cubical complexes extracted from images as described by \citet{Kaczynski2004Computational}. Vice versa, a pseudo-Morse function on a cubical complex can be interpreted as a function defined on a subdivided grid. This construction has been used in the examples in Section~\ref{sec:discussion}.

Note that starting with a PL function and constructing a pseudo-Morse function consistent with the empty vector field means that initially all cells are considered critical, which is a point worth discussing. \citet{King2005Generating} propose to construct an initial discrete gradient vector field with critical cells corresponding to the critical vertices (in the PL sense, see \citep{Kosinski1962Singularities,Eells1962Manifolds,Banchoff1967Critical}) of a (non-degenerate) input PL function instead. We omit such a step for two reasons. First, this step is unnecessary in our method and would not lead to different results. Second, the step can actually be interpreted as a special case of the topological simplification problem with $\delta=0$. In this case, the problem reduces to minimizing the number of critical points among all gradient vector fields consistent with the input function. We discuss the simplification of a gradient vector field in Sections~\ref{sec:persistenceHierarchy} and~\ref{sec:efficientAlgoVF}. 

\section{Persistent homology of discrete Morse functions}\label{sec:persistenceMorse}

The notions of persistent homology and persistence pairs were introduced in \citep{Edelsbrunner2002Topological,Zomorodian2005Computing,CohenSteiner2007Stability} in order to investigate the change of the homology groups in a filtration of a topological space (a nested sequence of subspaces). This concept can naturally be applied to discrete pseudo-Morse functions.
The following definitions can be applied to cellular homology with coefficients in an arbitrary field $F$.
We write
$H_d(\complex)$ as a shorthand for the $d^\mathrm{th}$ homology group
$H_d(\complex;F)$ of $\complex$ and $H_*(\complex) = \bigoplus_{d} H_d(\complex)$. %

\paragraph{Convention and Notation}
Throughout Section~\ref{sec:persistenceMorse} we consider a pseudo-Morse function $f$ consistent with a gradient vector field $V$ on a regular CW complex $\complex$ and a strict total order~$\prec$ consistent with~$(f,V)$.

\subsection{Birth, death, and persistence pairs}\label{sec:persistence}

As a consequence of Theorem~\ref{thm:subcomplexesHomotopyEquiv}, the homology groups of order subcomplexes change only at critical cells of $V$.
Let $\sigma$ and $\tau$ be critical cells such that $\sigma \prec \tau$ and consider the inclusion map $i^{\,\sigma,\,\tau}:\complex(\sigma)\hookrightarrow \complex(\tau)$ between the order subcomplexes with regard to the total order $\prec$. This map induces a homomorphism
\ensuremath{
i_*^{\,\sigma,\,\tau}:H_*(\complex(\sigma))\rightarrow H_*(\complex(\tau))
}
between homology groups.
For every cell $\rho$, let $\rho_-$ denote its predecessor with regard to~$\prec$. 
Now consider the sequence
$$
H_*(\complex(\sigma_{-})) \to
H_*(\complex(\sigma)) \to
H_*(\complex(\tau_{-})) \to
H_*(\complex(\tau))
$$
of induced homomorphisms. Here we
allow for the cases $\sigma=\tau_-$ and $\sigma_-=\emptyset$ (if $\sigma$
is the first cell in $\prec$, in which case $H_*(\complex(\sigma_{-}))$ is the trivial group).

\begin{definition*}[birth, death, persistence pair \citep{Edelsbrunner2002Topological}]%
We say that a
class $h\in H_*(\complex(\sigma))$ is \emph{born at} (or \emph{created by}) a positive
cell $\sigma$ if 
$$
h\not\in \im(i_*^{\,\sigma_-,\,\sigma}).
$$
Moreover, we
say that a class $h\in H_*(\complex(\sigma))$ that is born at $\sigma$ \emph{dies
entering} (or gets \emph{merged by}) a negative cell $\tau$ if there is a
class $\tilde h\in H_*(\complex(\sigma_-))$ such that
$$
i_*^{\,\sigma,\,\tau_-}(h)\not\in \im(i_*^{\,\sigma_-,\,\tau_-}) %
\quad\text{but}\quad%
i_*^{\,\sigma,\,\tau}(h) = i^{\,\sigma_-,\,\tau}_*(\tilde h) \in \im(i_*^{\,\sigma_-,\,\tau}).
$$
If there
exists a class $h$ that is born at $\sigma$ and dies entering $\tau$, then
$(\sigma, \tau)$ is a \emph{persistence pair}. The difference $f(\tau)-f(\sigma)$ is called the \emph{persistence} of $(\sigma, \tau)$. %
\end{definition*}
Note that in this definition we always have $\dim\tau=\dim\sigma+1$. On combinatorial surfaces, the only possible cases for $(\dim\sigma,\dim\tau)$ are $(0,1)$ or $(1,2)$.

\subsection{Duality and persistence}

For any closed combinatorial surface $\complex$, there is an associated \emph{dual complex} $\complex^*$, a combinatorial surface homeomorphic to~$\complex$ whose $i$-cells correspond to $(2-i)$-cells of $\complex$~\citep{Hatcher2002Algebraic}.
A discrete pseudo-Morse function $f$ on $\complex$ gives rise to a discrete pseudo-Morse function $f^*$ on $\cells^*$ via $\sigma^*\mapsto-f(\sigma)$ \citep{Forman1998Morse}.

Moreover, as shown by \citet{CohenSteiner2008Extending} and \citet{Attali2009Simplification}, the persistence pairs of dimension $(1,2)$ for $\complex$ correspond to the persistence pairs of dimension $(0,1)$ for the dual complex $\complex^*$ (with $\tau^*\prec\sigma^* \Leftrightarrow \sigma\prec\tau$). 
The 
homology groups $H_0(\complex(\rho_i))$ (generated by the connected components of $\complex(\rho_i)$), and hence the 
persistence pairs of dimension $(0,1)$, are determined solely by the 1-skeleton of $\complex$, also called the \emph{(primal) graph} of $\complex$. Consequently, the persistence pairs of dimension $(1,2)$ are determined by the 1-skeleton of $\complex^*$, called the \emph{dual graph}. 
This means that all persistence pairs of a surface can be determined in terms of Morse functions on graphs. 

In order to treat surfaces with boundary, we employ the usual construction of attaching an additional 2-cell (with function value $\infty$) to each boundary component. This way we obtain a closed surface having the same sequence of order subcomplexes (up to the additional cells) and hence the same persistence pairs as the original surface. 

\subsection{The persistence hierarchy and sequential cancelations}\label{sec:persistenceHierarchy}

Persistence pairs on surfaces carry a certain hierarchical structure that allows us to establish a connection to the cancelation theorem of discrete Morse theory. %
The main result of this section is that persistence pairs on surfaces can always be canceled sequentially if the order of cancelations respects this hierarchy.

\begin{definition*}[parent, child, persistence hierarchy]
On a combinatorial surface $\complex$, let $(\sigma,\tau)$ be a persistence pair with $\dim \sigma=0$, and let $[\sigma]\in H_0(\complex(\sigma))$ be the class created by $\sigma$. Let $\tildesigma$ be the unique cell creating the class $[\tildesigma]\in H_0(\complex(\tau))$ into which $[\sigma]$ gets merged by $\tau$, i.e., $[\tildesigma]\not\in \im(i_*^{\, \tildesigma_-,\, \tau})$ and $[\tildesigma]= i_*^{\, \sigma,\, \tau}([\sigma])$.
Then $\tildesigma$ is called the \emph{parent} of $\sigma$ (in the \emph{persistence hierarchy}), and $\sigma$ is called the \emph{child} of $\tildesigma$. The transitive closure of the child relation is called \emph{descendant}. 
\end{definition*}

Let $(\sigma,\tau)$ and $(\tildesigma,\tildetau)$ be two persistence pairs. If either $\dim\sigma=\dim\tildesigma=0$ and $\tildesigma$ is the parent of~$\sigma$ or $\dim\tau=\dim\tildetau=2$ and $\tildetau^*$ is the parent of $\tau^*$ (with regard to the persistence hierarchy on the dual complex), then we also call the pair $(\tildesigma,\tildetau)$ the \emph{parent} of $(\sigma,\tau)$ and $(\sigma,\tau)$ the \emph{child} of $(\tildesigma,\tildetau)$. The following definition and lemma justify this nomenclature:

\begin{figure}
\center{
\includegraphics[scale=0.9]{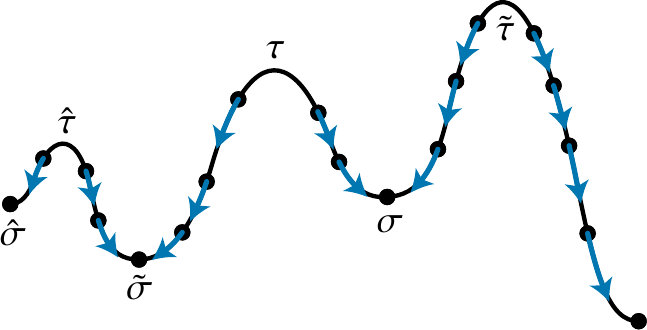}
}
\caption{The persistence hierarchy. Both $(\sigma,\tau)$ and $(\hat\sigma,\hat\tau)$ are children of, and hence nested in, $(\tildesigma,\tildetau)$. Only $(\sigma,\tau)$ needs to be canceled before $(\tildesigma,\tildetau)$ can be canceled.}
\label{fig:hierarchy}
\end{figure}

\begin{definition*}[nested pairs]
On a combinatorial surface $\complex$, let $(\sigma,\tau)$ and $(\tildesigma,\tildetau)$ be two persistence pairs. We say that $(\sigma,\tau)$ is \emph{nested} in $(\tildesigma,\tildetau)$ if\/ $\tildesigma \prec \sigma \prec \tau \prec \tildetau$.
\end{definition*}

\begin{lemma}\label{lem:nestedChild}
Let $(\sigma,\tau)$ be a descendant of $(\tildesigma,\tildetau)$ in the persistence hierarchy. Then $(\sigma,\tau)$ is nested in $(\tildesigma,\tildetau)$.
\end{lemma}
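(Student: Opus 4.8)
The plan is to first reduce to the case of an immediate child, and then prove the nesting inequality $\tildesigma \prec \sigma \prec \tau \prec \tildetau$ for that case; the general descendant case will follow by transitivity of $\prec$ together with the transitivity of the descendant relation. So assume $(\sigma,\tau)$ is a child of $(\tildesigma,\tildetau)$, and by duality assume without loss of generality that $\dim\sigma=\dim\tildesigma=0$ and $\tildesigma$ is the parent of $\sigma$ in the persistence hierarchy (the $(1,2)$ case is the $(0,1)$ case on $\complex^*$, where the order is reversed so that the same inequality chain is recovered). Under this assumption I need to establish the three inequalities $\tildesigma \prec \sigma$, $\sigma \prec \tau$, and $\tau \prec \tildetau$.

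The middle inequality $\sigma \prec \tau$ is immediate, since $(\sigma,\tau)$ is a persistence pair and by definition of birth/death we have $\sigma\prec\tau$ (indeed the class is born at $\sigma$ before it dies entering $\tau$). The first inequality $\tildesigma \prec \sigma$ is the heart of the matter and comes directly from the definition of parent: $\tildesigma$ is the cell creating the class $[\tildesigma]\in H_0(\complex(\tau))$ into which $[\sigma]$ gets merged by $\tau$, so $[\tildesigma]= i_*^{\,\sigma,\,\tau}([\sigma])$, meaning $[\tildesigma]$ is already present in $H_0(\complex(\sigma))$ as the image of $[\sigma]$. Since $\tildesigma$ is the unique creator of this component and its component is nonempty already at level $\sigma$, the creating cell $\tildesigma$ must satisfy $\tildesigma \preceq \sigma$; and since $\sigma$ itself creates the \emph{distinct} class $[\sigma]$ that is \emph{merged into} $[\tildesigma]$ (the two classes are different because $[\sigma]$ dies while $[\tildesigma]$ survives past $\tau$), we have $\tildesigma\neq\sigma$, hence $\tildesigma\prec\sigma$.

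For the last inequality $\tau \prec \tildetau$, the key point is that $\tildetau$ is the cell that merges the class created by $\tildesigma$, so by definition of death we need $[\tildesigma]\in H_0(\complex(\tildetau_-))$ to be nontrivial and killed only by $\tildetau$. Since $[\tildesigma]$ survives in $\complex(\tau)$ (it is the class that $[\sigma]$ merges into at $\tau$, hence present and nonzero after $\tau$ is added), the cell $\tildetau$ killing $[\tildesigma]$ must come strictly later, giving $\tau\prec\tildetau$. I would phrase this carefully using the level-by-level filtration: at level $\tau$ the component $[\tildesigma]$ is still alive, whereas $\tildetau$ is by definition the first negative cell that merges $[\tildesigma]$ into an older component, so $\tildetau$ cannot precede or equal $\tau$.

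The main obstacle will be the first inequality, specifically arguing rigorously that the parent cell $\tildesigma$ strictly precedes $\sigma$ rather than merely not following it. The subtlety is to rule out $\tildesigma=\sigma$, which requires observing that the class $[\sigma]$ dying at $\tau$ and the class $[\tildesigma]$ surviving past $\tau$ are genuinely different homology classes (one is absorbed into the other), so their creating cells must differ; this uses that $\tildesigma$ creates a class \emph{not} in the image $\im(i_*^{\,\tildesigma_-,\,\tau})$ while simultaneously $[\tildesigma]=i_*^{\,\sigma,\,\tau}([\sigma])$, which forces $\tildesigma_-\prec\tildesigma\preceq\sigma$ and the definitional requirement that $\sigma$'s component gets \emph{merged} (i.e.\ strictly absorbed) pins down $\tildesigma\prec\sigma$. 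Everything else reduces to unwinding the definitions of birth, death, and parent together with the linearity of the filtration induced by $\prec$.
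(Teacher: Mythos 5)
Your proposal is correct and follows essentially the same route as the paper's (very terse) proof: reduce to the $(0,1)$ case by duality, get $\tildesigma\prec\sigma$ from the fact that $[\sigma]$ merges into a class already present before $\sigma$ (so its creator precedes $\sigma$), and get $\tau\prec\tildetau$ because the class created by $\tildesigma$ is still unmerged in $\complex(\tau)$. Your version merely makes explicit two steps the paper leaves implicit, namely the reduction from descendants to children by transitivity and the use of $[\tildesigma]\notin\im(i_*^{\,\tildesigma_-,\,\tau})$ to pin down strictness.
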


\begin{proof}
Without loss of generality, assume $\dim\sigma=0$; otherwise, by duality, the argument can be applied to $(\tau^*,\sigma^*)$ instead of $(\sigma,\tau)$. %

By definition of the persistence hierarchy, $[\sigma]$ gets merged into the class~$[\tildesigma]\in H_0(\complex(\tau))$ created by $\tildesigma$. This implies that $\tildesigma \prec \sigma$. It also implies that the class created by $\tildesigma$ has not been merged by any cell of $\complex(\tau)$, hence $\tau\prec\tildetau$.
\end{proof}

We now turn our attention to the sequential cancelation of persistence pairs. Note that the cancelation theorem (Theorem~\ref{thm:cancelation}) applies to vector fields, which only provide a \emph{partial order} on the cells, while the notion of persistence is based on a \emph{total order}. After canceling a persistence pair, the new vector field is no longer consistent with the initial total order. It is important to keep in mind that we only talk about persistence pairs of the \emph{initial} total order $\prec$, which is consistent with $(f,V)$; we do not consider a new total order after applying a cancelation (which would complicate things considerably). Applying several cancelations results in a sequence of simplified vector fields:

\begin{definition*}[persistence cancelation sequence]
A \emph{persistence cancelation sequence} is a sequence of gradient vector fields $(V_0,V_1,\ldots,V_n)$ with $V_0=V$, where each $V_i$ is constructed from $V_{i-1}$ by canceling a persistence pair $(\sigma_i,\tau_i)$ using Theorem~\ref{thm:cancelation}.%

A persistence cancelation sequence is called \emph{nested} if in this construction every pair $(\sigma_i,\tau_i)$ nested in another pair $(\sigma_j,\tau_j)$  is canceled first, i.e., $\sigma_j \prec \sigma_i \prec \tau_i \prec \tau_j \Rightarrow i<j$. 

A persistence cancelation sequence is called a \emph{$\delta$-persistence cancelation sequence} if exactly those persistence pairs are canceled that have persistence $\leq\delta$.
\end{definition*}

A persistence pair $(\sigma,\tau)$ can be canceled from a vector field as soon as all descendants have been canceled (compare also to \citet{Edelsbrunner2003Hierarchical} for the existence part of the following statement in a special case):

\begin{lemma}\label{lem:nestedPersistencePairs}
On a combinatorial surface $\complex$, let $(V_0,V_1,\ldots,V_i)$ be a persistence cancelation sequence.
Assume that a persistence pair $(\sigma,\tau)$ has not been canceled in the sequence but that every descendant of ($\sigma,\tau)$ has been canceled. Then there exists a $V_i$-path from~$\partial\tau$ to $\sigma$ and this path is unique. 

Assume further that every persistence pair nested in ($\sigma,\tau)$ has been canceled. If there is a unique $V_i$-path from~$\partial\tau$ to another cell $\tildesigma\neq\sigma$ that is critical in $V_i$ then we have $\sigma\succ\tildesigma$.
\end{lemma}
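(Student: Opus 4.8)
The plan is to handle the two assertions separately, using duality to reduce to the case $\dim\sigma=0$ throughout. For the first assertion, I would work on the primal graph (the 1-skeleton of $\complex$), since by the duality discussion of Section~\ref{sec:persistenceMorse} the dimension-$(0,1)$ persistence pairs are governed entirely by $H_0$ of the order subcomplexes, i.e.\ by connected components. The key structural fact is Corollary~\ref{cor:twoPaths}: from $\partial\tau$ there are at most two $V_i$-paths to critical $0$-cells, one starting at each endpoint of $\tau$. So I would trace both $V_i$-paths emanating from the two vertices of $\partial\tau$ and argue that, once all descendants of $(\sigma,\tau)$ have been canceled, exactly one of these terminates at $\sigma$. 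Existence and uniqueness together are what we need for Theorem~\ref{thm:cancelation} to be applicable.

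\textbf{Existence and uniqueness of the path to $\sigma$.} The heart of the first assertion is to show that after the descendants are removed, $\tau$ ``sees'' $\sigma$ along a $V_i$-path but nothing finer obstructs it. I would reason via the persistence-hierarchy interpretation: the pair $(\sigma,\tau)$ records that the component $[\sigma]$ born at $\sigma$ merges with the older component $[\tildesigma]$ exactly when $\tau$ is added. In graph terms, $\tau$ is the edge whose addition connects the component of $\sigma$ to an older component. The descendants of $(\sigma,\tau)$ are precisely the pairs whose birth-cells lie in the component of $\sigma$ and whose merges happen before $\tau$; by Lemma~\ref{lem:nestedChild} these are all nested in $(\sigma,\tau)$. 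Having canceled every descendant, I would argue inductively that the cancelations reverse arrows along the sub-paths internal to the component of $\sigma$, so that the component of $\sigma$ in $V_i$ retracts (as a gradient flow) onto the single remaining critical $0$-cell $\sigma$. Consequently the $V_i$-path leaving $\tau$ on the side of $\sigma$'s component must terminate at $\sigma$, and by Lemma~\ref{lem:uniquePredSucc} (paths of dimension $0$ cannot branch) this path is unique. The main obstacle I anticipate is precisely this inductive bookkeeping: tracking how canceling a descendant pair—which reverses $V$ only along its own connecting path (Theorem~\ref{thm:cancelation})—leaves the remaining $V_i$-paths from $\partial\tau$ intact except for redirecting flow toward the surviving critical cells, so that no spurious critical $0$-cell other than $\sigma$ survives in that component.

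\textbf{The comparison $\sigma\succ\tildesigma$.} For the second assertion, suppose additionally all pairs nested in $(\sigma,\tau)$ are canceled, and let $\tildesigma\neq\sigma$ be another critical $0$-cell reached by a unique $V_i$-path from $\partial\tau$. By Corollary~\ref{cor:twoPaths} this is the \emph{other} path, starting at the second endpoint of $\tau$, so $\tildesigma$ lies in the component that $\tau$ merges $[\sigma]$ \emph{into}. I would identify $\tildesigma$ with (a critical cell representing) the older component $[\tildesigma]$ of the merge, which by the definition of the persistence hierarchy satisfies $\tildesigma\prec\sigma$; thus $\sigma\succ\tildesigma$. The point requiring care is to confirm that the cell reached by this second path is genuinely the \emph{oldest} surviving critical $0$-cell of the older component, not some intermediate one—again because every nested pair has been canceled, the older component has collapsed onto its earliest-born critical vertex, which is exactly the parent in the sense of the persistence hierarchy, giving $\tildesigma\prec\sigma$ as required.
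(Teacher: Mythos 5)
Your overall strategy is the paper's: reduce to $\dim\sigma=0$ by duality, pass to the connected component $\mathcal C$ of $\complex(\tau_-)$ created by $\sigma$, observe that once all descendants are canceled $\sigma$ is the only critical $0$-cell of $V_i$ in $\mathcal C$, let the discrete flow from the endpoint of $\tau$ lying in $\mathcal C$ run until it terminates at $\sigma$, and get uniqueness from Lemma~\ref{lem:uniquePredSucc}. The second half is also argued as in the paper: $\tildesigma$ must be the \emph{creator} of the other component merged by $\tau$ (this is where the stronger hypothesis that all nested pairs are canceled enters), whence $\sigma$ is its child in the persistence hierarchy and $\tildesigma\prec\sigma$ by Lemma~\ref{lem:nestedChild}.

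However, the step you defer as ``the main obstacle'' is precisely the mathematical content of the lemma, and your diagnosis of what could go wrong is off. The danger is not that a ``spurious critical $0$-cell survives in that component'' --- Theorem~\ref{thm:cancelation} already tells you exactly which cells are critical in $V_i$. The danger is that the $V_i$-flow starting in $\mathcal C$ \emph{escapes} $\mathcal C$: an earlier cancelation reverses arrows along its own connecting path, which lives in all of $\complex$ rather than in $\complex(\tau_-)$, so a priori $V_i$ could contain a pair $(\phi,\rho)$ with the $0$-cell $\phi\in\mathcal C$ and the $1$-cell $\rho\notin\mathcal C$; the $V_i$-path would then continue through the far endpoint of $\rho$ and could terminate at a critical $0$-cell outside $\mathcal C$, breaking both existence and uniqueness of the path to $\sigma$. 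The paper isolates the required invariant as Lemma~\ref{lem:restrictionCritical}: for every prefix of the cancelation sequence, every $(\phi,\rho)\in V_i$ with $\dim\phi=0$ satisfies $\phi\in\mathcal C\Leftrightarrow\rho\in\mathcal C$. Its proof is an induction whose key ingredient is Lemma~\ref{lem:nestedChild}: a canceled $(0,1)$-pair whose positive cell lies in $\mathcal C$ is a descendant of $\sigma$, hence nested in $(\sigma,\tau)$, so its negative cell and its entire reversal path also lie in $\mathcal C$, while a canceled pair starting outside stays outside. Only with this in hand can one restrict $V_i$ to $\mathcal C$ without creating new critical cells (Lemma~\ref{lem:subcomplexCritical}), apply the unique-minimum argument (Lemma~\ref{lem:uniqueMinimum}) to conclude that every $0$-cell of $\mathcal C$ --- in particular the unique $0$-cell of $\partial\tau$ lying in $\mathcal C$ --- flows to $\sigma$, and rule out that the path from the other endpoint of $\tau$ ever reaches $\sigma$. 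So your outline is the right one, but the proof is incomplete until this component-preservation invariant is actually established.
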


Figure~\ref{fig:hierarchy} shows that the condition is sufficient but not necessary. The proof of Lemma~\ref{lem:nestedPersistencePairs} relies on a few auxiliary lemmas and is given after Lemma~\ref{lem:uniqueMinimum}.

\begin{lemma}\label{lem:restrictionCritical}
Let $(V_0,V_1,\ldots,V_i)$ be a persistence cancelation sequence and let $(\sigma, \tau)$ be a persistence pair with $\dim \sigma=0$ that has not been canceled in the sequence. Let $\mathcal C$ be the connected component of the subcomplex $\complex(\tau_-)$ containing $\sigma$, %
and let $C$ denote the cells of $\mathcal C$. Then every $(\phi,\rho)\in V_i$ with $\dim\phi=0$ satisfies $\phi \in C \Leftrightarrow \rho \in C$.
\end{lemma}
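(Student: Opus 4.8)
The plan is to prove the equivalence by induction on the length $i$ of the cancelation sequence, after reducing to a single dimension. Canceling a persistence pair of dimension $(1,2)$ reverses a $V$-path made up of $1$- and $2$-cells, so it never alters a pair $(\phi,\rho)$ with $\dim\phi=0$; hence the $(0,1)$-pairs of $V_i$ are produced solely by the cancelations of $(0,1)$-pairs, and I only need to track these. Two standing facts will be used repeatedly: first, the relevant homology $H_0$ is governed by the $1$-skeleton, so $\mathcal C$ is simply the connected component of $\sigma$ in the graph formed by the vertices and edges preceding $\tau$; second, $\complex(\tau_-)$ is a subcomplex, so whenever an edge lies in $C$ both of its endpoints lie in $C$ as well.

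For the base case $V_0=V$, I would use that $V$ is consistent with $\prec$, so every pair $(\phi,\rho)\in V$ with $\dim\phi=0$ satisfies $\rho\prec\phi$. If $\rho\in C$, then $\rho\prec\tau$ and, by the subcomplex property, its endpoint $\phi$ lies in the same component $\mathcal C$, so $\phi\in C$. Conversely, if $\phi\in C$, then either $\phi\prec\tau$, in which case $\rho\prec\phi\prec\tau$ puts $\rho\in\mathcal C$, or $\phi$ enters $\complex(\tau_-)$ only as a face of some preceding coface; uniqueness of the pairing in $V$ forces that coface to be $\rho$ itself, again giving $\rho\prec\tau$ and $\rho\in C$.

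For the inductive step I assume the claim for $V_{j-1}$ and let $V_j$ arise by canceling a $(0,1)$-pair $(\sigma_j,\tau_j)$, which by Theorem~\ref{thm:cancelation} reverses the unique $V_{j-1}$-path $\Gamma=(\alpha_0,e_0,\dots,e_{r-1},\sigma_j)$ from $\partial\tau_j$ to $\sigma_j$. Since the reversal repairs only cells of $\Gamma\cup\{\tau_j\}$, it suffices to prove the dichotomy that $\Gamma\cup\{\tau_j\}$ lies entirely inside $C$ or entirely outside $C$; the equivalence for $V_j$ then follows from this together with the inductive hypothesis for the untouched pairs. Membership in $C$ propagates forward along $\Gamma$: by the inductive hypothesis $\alpha_k\in C\Leftrightarrow e_k\in C$, and the subcomplex property turns $e_k\in C$ into $\alpha_{k+1}\in C$. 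So if any cell of $\Gamma$ meets $C$, all of $\Gamma$ does, and in particular $\sigma_j\in C$.

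The main obstacle is linking the two endpoints $\sigma_j$ and $\tau_j$, because $V_{j-1}$ is no longer consistent with $\prec$ and the order argument from the base case is unavailable. I would resolve this using the persistence structure of the fixed filtration: if $\sigma_j\in C$ then $\sigma_j\neq\sigma$ (the persistence pairing is a matching and $(\sigma,\tau)$ is uncanceled), so $\sigma$ is the earliest creator whose class survives in $H_0(\complex(\tau_-))$ while $[\sigma_j]$ is already dead there; consequently its killer satisfies $\tau_j\prec\tau$ and the merging edge $\tau_j$ lies in $\mathcal C$, i.e.\ $\tau_j\in C$. In the other direction $\tau_j\in C$ yields $\alpha_0\in C$ and hence $\sigma_j\in C$ by forward propagation. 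Combining propagation with this endpoint equivalence shows that once \emph{any} cell of $\Gamma\cup\{\tau_j\}$ lies in $C$ every cell does, which is exactly the required dichotomy and completes the induction.
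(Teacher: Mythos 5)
Your proof is correct and follows essentially the same route as the paper's: induction over the cancelation sequence, reduction to $(0,1)$-cancelations, forward propagation of membership in $C$ along the reversed path via the inductive hypothesis and the subcomplex property, and closure of the resulting implication cycle through the endpoint link $\sigma_j\in C\Rightarrow\tau_j\in C$. The only (immaterial) difference is that for this last link the paper invokes the descendant/nesting machinery of Lemma~\ref{lem:nestedChild}, whereas you rederive the needed fact $\tau_j\prec\tau$ directly from the filtration, using that $\sigma$ is the oldest $0$-cell of $\mathcal C$.
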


\begin{proof}
The claim is shown by induction. The base case follows from consistency of the total order~$\prec$ with $(f,V)$.
Consider the cancelation of a persistence pair $(\sigma_i,\tau_i)$. 
If $\dim\sigma_i\neq0$, the tuples in $V_i$ of dimensions $(0,1)$ stay unchanged and the claim immediately follows from the induction hypothesis. %
Now assume $\dim\sigma_i=0$. We show that the claim holds for every $(\phi,\rho)\in V_i\setminus V_{i-1}$.

The cells in $V_i\setminus V_{i-1}$ are $\tau_i$ and the cells on the $V_{i-1}$-path $(\phi_0,\rho_0,\phi_1,\dots,\rho_{r-1},\phi_r)$ from $\phi_0\in\partial\tau_i$ to $\phi_r=\sigma_i$. By the induction hypothesis we have $\phi_k \in C \Leftrightarrow \rho_k \in C$. Because $\mathcal C$ is a subcomplex, we also have $\rho_{k-1} \in C \Rightarrow \phi_k \in C$ (with $\rho_{-1}=\tau_i$). Moreover, if $\sigma_i \in C$ then $\sigma_i$ is a descendant of $\sigma$ and by Lemma~\ref{lem:nestedChild} $(\sigma_i,\tau_i)$ is nested in $(\sigma,\tau)$, implying that $\sigma_i$ and $\tau_i$ are in the same connected component of $\complex(\tau_-)$. Hence we also have $\sigma_i \in C \Rightarrow \tau_i \in C$.
Consequently, either all or none of the cells in $V_i\setminus V_{i-1}$ are contained in $C$ and the claim immediately follows.
\end{proof}

We also require the notion of the restriction of a vector field to a subcomplex:

\begin{definition*}[restriction of a vector field to a subcomplex]
Let $V$ be a discrete vector field on $\complex$ and let $\widetilde\complex$ be a subcomplex of $\complex$ with cells $\widetilde \cells$. The \emph{restriction} of $V$ to $\widetilde\complex$ is $\widetilde V = V \cap \big(\widetilde \cells \times \widetilde \cells\big)$, i.e., the pairs of cells in $V$ that are both in $\widetilde \cells$.
\end{definition*}

As a direct consequence of this definition, the restriction of a vector field $V$ onto a subcomplex may have critical cells that are not critical in $V$:
\begin{lemma}\label{lem:subcomplexCritical}
Let $\widetilde V$ be the restriction of a discrete vector field $V$ on $\complex$ to a subcomplex~$\widetilde\complex$. The critical $d$-cells of $\widetilde V$ are exactly the critical $d$-cells of $V$ that are contained in $\widetilde \complex$ if and only if each pair $(\sigma,\tau)\in V$ with $\dim\sigma=d$ satisfies $\sigma \in \widetilde \cells \Leftrightarrow \tau \in \widetilde \cells$.
\end{lemma}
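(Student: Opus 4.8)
The plan is to prove both directions of the biconditional, relying on one structural observation about subcomplexes that reduces the rest to bookkeeping. The key observation is that, since $\widetilde\complex$ is a subcomplex and $\sigma$ is a facet of $\tau$ in any pair $(\sigma,\tau)\in V$, membership $\tau\in\widetilde\cells$ always forces $\sigma\in\widetilde\cells$, because a subcomplex contains every face of each of its cells. Hence the biconditional $\sigma\in\widetilde\cells\Leftrightarrow\tau\in\widetilde\cells$ in the hypothesis is equivalent to the single implication $\sigma\in\widetilde\cells\Rightarrow\tau\in\widetilde\cells$; the reverse implication is automatic and carries no content.

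For the direction assuming the pairing condition, I would fix a $d$-cell $\sigma\in\widetilde\cells$ and show that $\sigma$ is critical in $\widetilde V$ if and only if it is critical in $V$. One implication is immediate from $\widetilde V\subseteq V$: a cell lying in no pair of $V$ lies in no pair of $\widetilde V$, so every critical $d$-cell of $V$ contained in $\widetilde\complex$ is critical in $\widetilde V$, independently of the hypothesis. For the converse, suppose $\sigma$ is non-critical in $V$, so it belongs to a unique pair of $V$, and split into two cases according to the role of $\sigma$ in that pair. If $\sigma$ is the lower cell, i.e.\ $(\sigma,\tau)\in V$ with $\dim\tau=d+1$, the pairing condition (applied with $\dim\sigma=d$) gives $\tau\in\widetilde\cells$, whence $(\sigma,\tau)\in\widetilde V$ and $\sigma$ is non-critical in $\widetilde V$. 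If $\sigma$ is the upper cell, i.e.\ $(\rho,\sigma)\in V$ with $\dim\rho=d-1$, then $\rho\in\widetilde\cells$ by the subcomplex property, so again $(\rho,\sigma)\in\widetilde V$ and $\sigma$ is non-critical in $\widetilde V$. Combined with the first implication this yields the asserted set equality.

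For the converse direction I would argue by contrapositive: assume some pair $(\sigma,\tau)\in V$ with $\dim\sigma=d$ violates the biconditional. By the structural observation the only possible violation is $\sigma\in\widetilde\cells$ and $\tau\notin\widetilde\cells$. Then $\sigma$ lies in the pair $(\sigma,\tau)$ of $V$, hence is non-critical in $V$; but since $\tau\notin\widetilde\cells$ this pair drops out under restriction, and because $\sigma$ lies in no other pair of $V$ it lies in no pair of $\widetilde V$ at all, making it critical in $\widetilde V$. Thus $\sigma$ is a critical $d$-cell of $\widetilde V$ that is not critical in $V$, contradicting the set equality.

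The computation is routine; the single point needing care is the dimension bookkeeping, which I also expect to be the only place the argument could go wrong. The subtlety is that the hypothesis constrains only pairs whose lower cell has dimension $d$, that is, the \emph{upward} pairing of $d$-cells, whereas a $d$-cell can also fail to be critical by being the top of a pair with a $(d-1)$-cell. That second (downward) possibility is exactly the one handled for free by the subcomplex property, which is why no condition on pairs of the form $(\rho,\sigma)$ with $\dim\sigma=d$ is required. Keeping these two roles of a $d$-cell cleanly separated is the crux of the proof.
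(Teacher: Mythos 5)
Your proof is correct and complete. The paper actually states this lemma without proof, presenting it as a direct consequence of the definition of restriction; your argument — reducing the biconditional to $\sigma\in\widetilde\cells\Rightarrow\tau\in\widetilde\cells$ via the subcomplex property and then checking the two ways a $d$-cell can be paired — is exactly the routine verification the authors leave to the reader.
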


Moreover, we use the following fact:
\begin{lemma}\label{lem:uniqueMinimum}
Let $V$ be a discrete gradient vector field $V$ on $\complex$ with only one critical 0-cell $\sigma$. Then there is a $V$-path from every $0$-cell $\tildesigma$ to $\sigma$.
\end{lemma}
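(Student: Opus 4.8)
The plan is to prove that in a gradient vector field with a unique critical 0-cell $\sigma$, every other 0-cell $\tildesigma$ can reach $\sigma$ via a $V$-path. The key observation is that a $V$-path of dimension 0 starting at $\tildesigma$ follows the arrows of the gradient vector field: from a 0-cell we step along the pair $(\phi_i,\rho_i)\in V$ (so $\phi_i$ is the tail of an arrow), then move to the other 0-cell $\phi_{i+1}$ in $\partial\rho_i$. The only way such a path can fail to continue is if it reaches a 0-cell that is critical, i.e., not the tail of any arrow in $V$.

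First I would formalize the idea of following arrows. Starting from any 0-cell $\phi_0=\tildesigma$, if $\phi_0$ is not critical then it is paired with a unique 1-cell $\rho_0$ via $(\phi_0,\rho_0)\in V$; since $\rho_0$ has at most two 0-cells in its boundary, we can extend the path to a 0-cell $\phi_1 \neq \phi_0$ (in the case $\rho_0$ has only one boundary vertex, the path simply terminates, but this degenerate situation is handled below). Repeating this construction produces a sequence $(\phi_0,\rho_0,\phi_1,\rho_1,\dots)$ that is a valid $V$-path of dimension 0. By Lemma~\ref{lem:uniquePredSucc}, $V$-paths of dimension 0 cannot branch, so this extension is canonical at each step.

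The crux of the argument is to show this process terminates precisely at $\sigma$. Termination follows because $V$ is a gradient vector field and hence contains no nontrivial closed $V$-paths: since the complex is finite, the sequence of 0-cells cannot repeat, so the path must eventually reach a 0-cell from which no further extension is possible. A 0-cell admits no continuation exactly when it is critical (not the tail of any arrow), and by hypothesis $\sigma$ is the \emph{only} critical 0-cell. Thus the unique maximal $V$-path of dimension 0 emanating from $\tildesigma$ must terminate at $\sigma$, giving the desired $V$-path from $\tildesigma$ to $\sigma$.

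The main obstacle I anticipate is a careful treatment of the acyclicity-to-termination step together with the boundary-degeneracy cases. On a general regular CW complex a 1-cell need not have two distinct 0-cells in its boundary (e.g.\ a loop edge), so one must argue that if the path reaches a non-critical 0-cell it can genuinely be extended to a \emph{new} cell, and that the absence of closed $V$-paths rules out getting stuck in a cycle before reaching a critical cell. The cleanest way to close this gap is to invoke the partial order $\prec_V$ induced by $V$: each step of a dimension-0 $V$-path strictly decreases the cell in $\prec_V$ (the arrows point toward decreasing function value), so by finiteness the path must reach a $\prec_V$-minimal 0-cell, which can only be the critical cell $\sigma$.
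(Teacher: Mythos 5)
Your proof is correct and follows essentially the same route as the paper's: repeatedly extend the dimension-0 $V$-path from $\tildesigma$ along the pairing arrows, and use finiteness together with the absence of closed $V$-paths to conclude that the path terminates, necessarily at the unique critical 0-cell $\sigma$. Your extra worry about loop edges is moot since $\complex$ is regular (so every 1-cell has two distinct 0-cells in its boundary), and your closing appeal to the induced strict partial order $\prec_V$ is just a clean formalization of the same termination argument the paper states directly.
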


\begin{proof}
Each $V$-path of dimension 0 ending at a non-critical cell $\tildesigma\neq\sigma$, $(\tildesigma,\tildetau)\in V$, can be extended by $\tildetau$ and the unique 0-cell $\hat\sigma\in\partial\tildetau$, $\hat\sigma\neq\tildesigma$. Since $\cells$ is finite and $V$ does not contain nontrivial closed paths, the extension will eventually end up at~$\sigma$.
\end{proof}

\begin{proof}[Proof of Lemma~\ref{lem:nestedPersistencePairs}]
Without loss of generality, assume $\dim\sigma=0$; otherwise, by duality, the argument can be applied to $(\tau^*,\sigma^*)$ instead of $(\sigma,\tau)$.

Let $\mathcal C$ be the connected component of the subcomplex $\complex(\tau_-)$ created by $\sigma$. Apart from $\sigma$, every 0-cell in $\mathcal C$ that is critical in $V$ is a descendant of $\sigma$. By assumption, all descendants of $\sigma$ have been canceled, and hence $\sigma$ is the only 0-cell in $\mathcal C$ that is critical in $V_i$. 
By Lemmas~\ref{lem:restrictionCritical} and~\ref{lem:subcomplexCritical}, $\sigma$ is also the only critical 0-cell in the restriction of $V_i$ to $\mathcal C$. By Lemma~\ref{lem:uniqueMinimum}, there is a $V_i$-path to $\sigma$ from every 0-cell in $\mathcal C$, in particular from exactly one of the two 0-cells in~$\partial\tau$ since $\partial\tau\cap C$ contains exactly one cell. By Lemma~\ref{lem:uniquePredSucc}, this path is unique.

Now assume that every persistence pair nested in $(\sigma,\tau)$ has been canceled and there is a unique $V_i$-path from~$\partial\tau$ to another cell $\tildesigma\neq\sigma$ that is critical in $V_i$. By assumption, $\tildesigma$ is not a descendant of $\sigma$, meaning that $\tildesigma$ and $\sigma$ are in different connected components of $\complex(\tau_-)$. 
Moreover, $\tildesigma$ creates the component $\widetilde{\mathcal C}\neq\mathcal C$, because otherwise we would have an uncanceled pair $(\tildesigma,\tildetau)$ nested in $(\sigma,\tau)$.
Since $\tau$ is paired with $\sigma$ and merges $\widetilde{\mathcal C}$ and $\mathcal C$, we know that $\sigma$ is a descendant of $\tildesigma$ and $\sigma\succ\tildesigma$.
\end{proof}

As a consequence of Lemma~\ref{lem:nestedPersistencePairs}, we can construct a sequence of cancelations to eliminate all persistence pairs below a certain persistence threshold:

\begin{theorem}\label{thm:simplifiedVectorField}
Let $f$ be a pseudo-Morse function on a combinatorial surface $\complex$ and let~$\delta\geq0$. Then there exists a nested $\delta$-persistence cancelation sequence.
\end{theorem}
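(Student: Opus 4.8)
The goal is to construct a nested $\delta$-persistence cancelation sequence $(V_0,V_1,\ldots,V_n)$ that cancels exactly those persistence pairs with persistence $\leq\delta$. My plan is to process the target pairs in an order compatible with the persistence hierarchy, using Lemma~\ref{lem:nestedPersistencePairs} at each step to guarantee that the cancelation is actually legal (i.e., that Theorem~\ref{thm:cancelation} applies because a \emph{unique} $V_i$-path exists).

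\medskip

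First I would fix the set $P$ of persistence pairs with persistence $\leq\delta$ and choose a total order in which to cancel them. The natural choice is any linear extension of the \emph{reverse} descendant relation on $P$: whenever $(\sigma_i,\tau_i)$ is a descendant of $(\sigma_j,\tau_j)$, we require $i<j$, so that descendants are always canceled before their ancestors. Such a linear extension exists because the descendant relation is a strict partial order (being the transitive closure of the child relation, whose Hasse structure is a forest). I would then verify that this ordering is \emph{nested} in the sense of the definition: by Lemma~\ref{lem:nestedChild}, if $(\sigma_i,\tau_i)$ is nested in $(\sigma_j,\tau_j)$ it need not be a descendant, so the key observation is that cancelation order only needs to respect descendants, and I must check that descendant-respecting order already forces the nesting condition $\sigma_j\prec\sigma_i\prec\tau_i\prec\tau_j\Rightarrow i<j$. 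This requires arguing that a nested pair inside $(\sigma_j,\tau_j)$ that must be canceled is necessarily a descendant of $(\sigma_j,\tau_j)$ — which is exactly the content extractable from the persistence hierarchy on surfaces, and is the subtle point.

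\medskip

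The inductive core of the argument is then straightforward given Lemma~\ref{lem:nestedPersistencePairs}. Suppose $V_0,\ldots,V_{i-1}$ have been constructed, canceling the first $i-1$ pairs of $P$ in the chosen order, and let $(\sigma_i,\tau_i)$ be the next pair. By the ordering, every descendant of $(\sigma_i,\tau_i)$ lies earlier in $P$ and has already been canceled; by the induction hypothesis $(\sigma_i,\tau_i)$ itself has not been canceled. Lemma~\ref{lem:nestedPersistencePairs} then guarantees a \emph{unique} $V_{i-1}$-path from $\partial\tau_i$ to $\sigma_i$. Hence Theorem~\ref{thm:cancelation} applies and produces $V_i$ in which $\sigma_i$ and $\tau_i$ are no longer critical, with $V_{i-1}=V_i$ away from that path. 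Iterating over all $|P|$ pairs yields the sequence, which is nested by construction and is a $\delta$-persistence cancelation sequence since exactly the pairs of persistence $\leq\delta$ are canceled.

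\medskip

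The main obstacle I anticipate is \emph{not} the induction itself but the bookkeeping that justifies invoking Lemma~\ref{lem:nestedPersistencePairs}: one must confirm that ``every descendant has been canceled'' is exactly what the descendant-respecting order delivers at step $i$, and that no \emph{other} critical cell reached by the path sabotages the cancelation. The lemma's second half ($\sigma\succ\tildesigma$ for any competing critical endpoint $\tildesigma$) shows $\sigma_i$ is the correct, $\prec$-largest candidate, but turning this into a clean verification requires care about which pairs are guaranteed canceled. I expect the cleanest route is to prove by strong induction that at step $i$ every pair nested in $(\sigma_i,\tau_i)$ and of persistence $\leq\delta$ has indeed been processed — using that such a nested pair is a descendant (the surface-specific hierarchy fact) and hence earlier in the order — so that the uniqueness hypothesis of Theorem~\ref{thm:cancelation} is met. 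This dependence on the persistence hierarchy structure is precisely where the two-dimensionality of $\complex$ is essential, consistent with the paper's remark that the statement fails in higher dimensions.
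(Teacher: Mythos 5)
Your overall strategy --- induct over the pairs of persistence $\leq\delta$ in a suitable order and invoke Lemma~\ref{lem:nestedPersistencePairs} to obtain the unique path needed for Theorem~\ref{thm:cancelation} --- is exactly the paper's, but your choice of ordering leaves a genuine gap that you correctly flag and then do not close. You order the pairs by a linear extension of the reverse \emph{descendant} relation and hope that this already forces the \emph{nesting} condition, i.e., that every canceled pair nested in $(\sigma_j,\tau_j)$ is a descendant of it. That implication is false: the parent/child relation only ever links pairs of the same dimension type (a $(0,1)$ pair's parent is defined via $H_0$ on the primal complex, a $(1,2)$ pair's parent via the dual), whereas a $(1,2)$ pair can perfectly well satisfy $\sigma\prec\tilde\sigma\prec\tilde\tau\prec\tau$ inside a $(0,1)$ pair $(\sigma,\tau)$ --- think of a small bump (saddle--maximum pair) sitting inside a deep basin (minimum--saddle pair). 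Lemma~\ref{lem:nestedChild} gives descendant $\Rightarrow$ nested, not the converse. So a descendant-respecting order may cancel the outer pair before the inner one, and the resulting sequence simply fails the definition of \emph{nested}, which is what the theorem asserts (and what Lemma~\ref{lem:upperSetValues} later depends on).

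The repair is the paper's one-line choice: cancel the pairs of persistence $\leq\delta$ in the order $\prec$ of their \emph{negative} cells, $\tau_i\prec\tau_{i+1}$. Nesting $\sigma_j\prec\sigma_i\prec\tau_i\prec\tau_j$ then gives $i<j$ by definition, and since descendants are nested (Lemma~\ref{lem:nestedChild}) this order also cancels all descendants first, so Lemma~\ref{lem:nestedPersistencePairs} applies at every step. One further point you should make explicit (the paper leaves it implicit too): to invoke Lemma~\ref{lem:nestedPersistencePairs} you need \emph{every} descendant of $(\sigma_i,\tau_i)$ already canceled, so you must check that each descendant actually lies in the set $P$ of pairs with persistence $\leq\delta$. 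This follows because a descendant is nested, and consistency of $\prec$ with $f$ gives $f(\sigma)\leq f(\tilde\sigma)$ and $f(\tilde\tau)\leq f(\tau)$, hence its persistence is at most that of the enclosing pair.
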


\begin{proof}
If the subsequence $(V_0,V_1,\ldots,V_{i-1})$ satisfies the assumptions of~Lemma~\ref{lem:nestedPersistencePairs} for some persistence pair $(\sigma_i,\tau_i)$, we can use Theorem~\ref{thm:cancelation} to construct $V_i$ from $V_{i-1}$.
A canonical choice satisfying these assumptions is given by canceling the persistence pairs $(\sigma_i,\tau_i)$ with persistence $\leq\delta$ according to the order $\prec$ on the negative cells, i.e., $\tau_i \prec \tau_{i+1}$ for every $i$. The claim follows by induction. 
\end{proof}

\subsection{The stability bound}
\citet{CohenSteiner2007Stability} studied properties of \emph{persistence diagrams}, which are a representation of the value pairs $(f(\sigma),f(\tau))$ corresponding to the persistence pairs $(\sigma,\tau)$ of a function $f$. Here we use $\overline \RR=\RR\cup\{-\infty,\infty\}$.
\begin{definition*}[Persistence diagram \citep{CohenSteiner2007Stability}]
The persistence diagram $D(f) \subset \overline \RR^2$ of a pseudo-Morse function~$f$ is the multiset consisting of $(f(\sigma),f(\tau))$ for all persistence pairs~$(\sigma,\tau)$ of $f$, together with all points on the diagonal counted with (countably) infinite multiplicity. An unpaired positive cell $\sigma$ is represented by $(f(\sigma), \infty)$.
\end{definition*} 
The main result of \citep{CohenSteiner2007Stability} is the \emph{Bottleneck Stability Theorem} for persistence diagrams: if two functions are close then their persistence diagrams are also close. Due to the correspondence between piecewise linear functions and discrete pseudo-Morse functions (Section~\ref{sec:PLfunctions}), the statement reads as follows in the language of discrete Morse theory:

\begin{definition*}[Bottleneck distance]
Let $X$ and $Y$ be two multisets of\/ $\overline {\mathbb R}^2$. The \emph {bottleneck distance} is $d_B(X,Y):=\inf_\gamma\sup_{x\in X}\|x-\gamma(x)\|_\infty$, where $\gamma$ ranges over all bijections from $X$ to~$Y$. 
\end{definition*}
Here we assume $(a,\infty)-(b,\infty)=(a-b,0)$, $(a,\infty)-(b,c)=(a-b,\infty)$, and $\|(a,\infty)\|_\infty=\infty$ for $a,b,c\in\RR$.

\begin{theorem}[\citet{CohenSteiner2007Stability}]\label{thm:stability}
Let $f,g:K\to\RR$ be two discrete pseudo-Morse functions. Then the respective persistence diagrams satisfy $d_B(D(f),D(g))\leq\|f-g\|_\infty.$
\end{theorem}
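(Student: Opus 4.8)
The plan is to reduce the claim to the classical Bottleneck Stability Theorem of \citet{CohenSteiner2007Stability} for piecewise linear functions by means of the correspondence developed in Section~\ref{sec:PLfunctions}. First I would pass from the pseudo-Morse functions $f$ and $g$ on $\complex$ to the simplexwise linear functions $f_\text{sd}$ and $g_\text{sd}$ on the underlying space $|\!\sd\complex|$ of the barycentric subdivision, as provided by Theorem~\ref{thm:morse_PL_equivalence}. The crucial point here is that the supremum norm is preserved under this translation: since $f_\text{sd}$ and $g_\text{sd}$ are the simplexwise linear interpolations of $f$ and $g$ at the vertices of $\sd\complex$ (which are precisely the cells of $\complex$), at every point $x\in|\!\sd\complex|$ the value $f_\text{sd}(x)-g_\text{sd}(x)$ is a convex combination of the vertex differences $f(\rho)-g(\rho)$, so that $|f_\text{sd}(x)-g_\text{sd}(x)|\le\|f-g\|_\infty$; equality at a vertex then yields $\|f_\text{sd}-g_\text{sd}\|_\infty=\|f-g\|_\infty$.

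Next I would identify the persistence diagrams. By Theorem~\ref{thm:morse_PL_equivalence}, for every $t\in\RR$ the level subcomplex $\complex(t)$ is homotopy equivalent to the sublevel set $\{x\in|\!\sd\complex|:f_\text{sd}(x)\le t\}$; these equivalences are induced by (and hence commute with) the inclusions between nested sublevel sets, so they assemble into an isomorphism of the two persistence modules indexed by $t$. Consequently the births and deaths, together with the function values at which they occur, coincide, and the off-diagonal points of $D(f)$---which by definition are the pairs $(f(\sigma),f(\tau))$ with $f(\sigma)<f(\tau)$, recording where homology classes appear and disappear in the level-subcomplex filtration---agree with the off-diagonal points of the persistence diagram $D(f_\text{sd})$ of the sublevel-set filtration; the essential classes likewise correspond and are recorded at $(f(\sigma),\infty)$ in both. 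Since both diagrams contain the full diagonal with infinite multiplicity, we obtain $D(f)=D(f_\text{sd})$, and similarly $D(g)=D(g_\text{sd})$.

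Finally, applying the classical stability theorem to the piecewise linear functions $f_\text{sd}$ and $g_\text{sd}$ gives $d_B(D(f_\text{sd}),D(g_\text{sd}))\le\|f_\text{sd}-g_\text{sd}\|_\infty$. Combining this with the two diagram identifications and the norm equality established above yields the desired inequality $d_B(D(f),D(g))\le\|f-g\|_\infty$.

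The main obstacle I expect lies in the identification of $D(f)$ with $D(f_\text{sd})$: one must verify that the homotopy equivalences of Theorem~\ref{thm:morse_PL_equivalence} are natural with respect to inclusions, so that they induce an isomorphism of persistence modules rather than merely a pointwise agreement of homology. A second, more technical point is the handling of degeneracies, where the order-subcomplex filtration strictly refines the level-subcomplex filtration: cells sharing a common function value produce persistence pairs $(\sigma,\tau)$ with $f(\sigma)=f(\tau)$, which land on the diagonal and therefore leave the bottleneck distance unaffected, so that the finer filtration used to define $D(f)$ records the same off-diagonal diagram as the coarser value filtration underlying $f_\text{sd}$.
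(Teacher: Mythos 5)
Your proposal is correct and matches the paper's (implicit) argument: the paper states this theorem as a citation of \citet{CohenSteiner2007Stability} and justifies its discrete form in one sentence by appealing to the correspondence of Section~\ref{sec:PLfunctions}, which is exactly the reduction you carry out. You merely spell out the details the paper leaves tacit --- preservation of the supremum norm under simplexwise linear interpolation, naturality of the homotopy equivalences of Theorem~\ref{thm:morse_PL_equivalence}, and the fact that degenerate pairs with $f(\sigma)=f(\tau)$ only contribute diagonal points --- all of which are handled correctly.
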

Note that the bottleneck distance provides a metric on the persistence diagrams of pseudo-Morse functions on $\complex$, in particular, $d_B(D(f),D(g))=0$ if and only if $D(f)=D(g)$. Therefore, in contrast to the persistence \emph{pairs}, the persistence \emph{diagram} of a discrete pseudo-Morse function $f$ is well-defined; in particular, it is independent of the total order $\prec$ chosen and even independent of the gradient vector field $V$ consistent with~$f$.
Theorem~\ref{thm:stability} provides a lower bound on the number of persistence pairs among all pseudo-Morse functions $f_\delta$ with $\|f_\delta-f\|_\infty \leq \delta$: %
\begin{corollary}[Stability Bound]\label{cor:criticalPointLowerBound}
For any pseudo-Morse function $f_\delta$ with $\|f_\delta-f\|_\infty \leq\delta$, the number of persistence pairs of $f_\delta$ is bounded from below by the number of persistence pairs of $f$ that have persistence $> 2 \delta$.
\end{corollary}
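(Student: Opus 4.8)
The plan is to deduce this corollary directly from the Bottleneck Stability Theorem (Theorem~\ref{thm:stability}). First I would apply that theorem together with the hypothesis to obtain
$d_B(D(f),D(f_\delta)) \le \|f-f_\delta\|_\infty \le \delta$.
The entire argument then rests on re-expressing the persistence threshold ``$>2\delta$'' as a statement about distance to the diagonal in the persistence diagram, and then exploiting that an almost-optimal bottleneck matching cannot collapse a feature far from the diagonal onto the diagonal.

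The key elementary observation I would record is that the $L_\infty$-distance from a point $(a,b)$ with $a<b$ to the diagonal $\{(t,t)\}$ equals $(b-a)/2$, attained at $t=(a+b)/2$. Consequently a persistence pair $(\sigma,\tau)$ of $f$ with persistence $f(\tau)-f(\sigma)>2\delta$ corresponds to a point of $D(f)$ whose distance to the diagonal is \emph{strictly} greater than $\delta$. Let $P\subset D(f)$ denote the multiset of these points; since $\complex$ is finite, $P$ is finite, and its cardinality is exactly the number of persistence pairs of $f$ with persistence $>2\delta$.

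The core step is to convert the bottleneck bound into an injection from $P$ into the off-diagonal points of $D(f_\delta)$. Because $P$ is finite and each of its points lies at distance strictly greater than $\delta$ from the diagonal, the slack $\eta:=\min_{x\in P}\operatorname{dist}(x,\text{diagonal})-\delta$ is strictly positive. As $d_B$ is defined as an infimum over bijections, I can choose a bijection $\gamma\colon D(f)\to D(f_\delta)$ with $\sup_x\|x-\gamma(x)\|_\infty \le \delta+\tfrac{\eta}{2}$. For every $x\in P$ the image $\gamma(x)$ then lies at distance strictly less than $\operatorname{dist}(x,\text{diagonal})$ from $x$, so $\gamma(x)$ cannot be a diagonal point; moreover it must have finite coordinates, since by the stated $\infty$-conventions a finite point cannot be matched to an $(a,\infty)$-point at finite cost. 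Hence $\gamma(x)$ is a genuine (finite, off-diagonal) persistence pair of $f_\delta$. Since $\gamma$ is injective, distinct points of $P$ produce distinct persistence pairs of $f_\delta$, and the number of persistence pairs of $f_\delta$ is therefore at least $|P|$, which is the asserted bound.

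The one delicate point is that the bottleneck distance is an infimum that need not be attained, so I cannot simply fix an optimal matching and argue pointwise. The finiteness of $P$ together with the \emph{strict} inequality ``$>2\delta$'' (rather than ``$\ge 2\delta$'') is exactly what manufactures the positive slack $\eta$ that makes an almost-optimal bijection good enough; everything else is bookkeeping with the matching and the conventions for $\infty$-coordinates.
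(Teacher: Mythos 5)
Your proof is correct and follows essentially the same route as the paper's: apply Theorem~\ref{thm:stability}, then show that a point of $D(f)$ at persistence $>2\delta$ cannot be matched to the diagonal (or to an $\infty$-point) by a bijection moving points by at most~$\delta$. The only difference is that you take extra care with the infimum in the definition of $d_B$ by extracting a positive slack $\eta$ from the finiteness of the diagram, whereas the paper simply asserts the existence of a bijection achieving the bound $\delta$; your version is slightly more scrupulous but the substance is identical.
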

\begin{proof}
Let $D$ and $D_\delta$ be the persistence diagrams of $f$ and $f_\delta$, respectively. By Theorem~\ref{thm:stability} we have $d_B(D,D_\delta) \leq\delta$. This means that there is a bijection $\gamma$ between $D$ and $D_\delta$ with $\|p-\gamma(p)\|_\infty \leq \delta$ for all $p \in D$. Let $p=(p^*,p^\dagger)=(f(\sigma),f(\tau)) \in D$ represent a persistence pair $(\sigma,\tau)$ of $f$ with persistence $p^\dagger-p^* > 2 \delta$. Letting $q=(q^*,q^\dagger):=\gamma(p)$, this implies that $p^*+\delta \geq q^*$ and $p^\dagger-\delta \leq q^\dagger$. Together with $p^\dagger-p^* > 2 \delta$, this yields $q^\dagger-q^* > 0$. Hence there must be a persistence pair of $f_\delta$ corresponding to each persistence pair of~$f$ with persistence $> 2 \delta$.
\end{proof}

\section{Function simplification guided by discrete gradient vector fields}\label{sec:simplification}

We are interested in functions that achieve the lower bound of Corollary~\ref{cor:criticalPointLowerBound}:

\begin{definition*}[Perfect $\delta$-simplification]
Let $f$ be a pseudo-Morse function on a %
combinatorial surface $\complex$.
A \emph{perfect $\delta$-simplification} of $f$ is a pseudo-Morse function $f_\delta$ such that $\|f_\delta-f\|_\infty \leq \delta$ and the number of persistence pairs of $f_\delta$ 
equals the number of persistence pairs of $f$ that have persistence~$>2 \delta$.

\end{definition*}

In this section, we prove the following central result:
\begin{theorem}\label{thm:2deltasimp}
Let $f$ be a discrete pseudo-Morse function on a combinatorial surface. Then there exists a \emph{perfect $\delta$-simplification} of $f$.
\end{theorem}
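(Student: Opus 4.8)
The plan is to combine the machinery developed in the preceding sections into a single construction. The key ingredients are Theorem~\ref{thm:simplifiedVectorField} (existence of a nested $\delta$-persistence cancelation sequence), the characterization of pseudo-Morse functions via symbolic perturbation (Lemma~\ref{lem:perturbation}), and the stability bound (Corollary~\ref{cor:criticalPointLowerBound}) which guarantees that whatever we construct cannot have \emph{fewer} persistence pairs than the target number; so it suffices to produce a pseudo-Morse function whose persistence pairs are exactly those of $f$ with persistence $>2\delta$, while respecting the $\delta$-tolerance in the supremum norm.

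First I would fix a consistent gradient vector field $V$ and a consistent total order $\prec$ for the input $f$, and apply Theorem~\ref{thm:simplifiedVectorField} to obtain a nested $\delta$-persistence cancelation sequence $(V_0,\dots,V_n)$ that cancels precisely the persistence pairs $(\sigma_i,\tau_i)$ with persistence $\le\delta$. The terminal vector field $V_n$ then has as its critical cells exactly the cells belonging to persistence pairs of persistence $>2\delta$ together with the unpaired (essential) cells — note the subtlety that a single cancelation removes pairs of persistence $\le\delta$, but the stability bound is stated in terms of persistence $>2\delta$; the resolution is that $V_n$ is intended to be consistent with the output function $f_\delta$, and the $\delta$-modification of values is what converts the $\le\delta$ threshold on the vector field into the $>2\delta$ threshold on surviving persistence pairs of $f_\delta$. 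So the real content is to build $f_\delta$, not merely $V_n$.

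The heart of the proof is therefore the value assignment, which is exactly the ``leveling'' idea sketched informally in Figure~\ref{fig:cancellation}: to cancel a pair $(\sigma_i,\tau_i)$ of persistence $d_i=f(\tau_i)-f(\sigma_i)\le\delta$ one pushes the repelling set of $\tau_i$ down and the attracting set of $\sigma_i$ up to their common average value $\tfrac12\bigl(f(\sigma_i)+f(\tau_i)\bigr)$, modifying the function by at most $d_i/2\le\delta/2$ for this single pair. I would define $f_\delta$ as the result of performing all these level adjustments simultaneously (or in the nested order), and then verify two things: (i) $f_\delta$ is pseudo-Morse and consistent with $V_n$, which by Lemma~\ref{lem:perturbation} reduces to checking the weak monotonicity inequalities along every facet relation — here the nesting of the cancelation sequence (Lemma~\ref{lem:nestedPersistencePairs}) and the unique-$V_i$-path structure on surfaces (Corollary~\ref{cor:twoPaths}) ensure the adjustments do not conflict; and (ii) $\|f_\delta-f\|_\infty\le\delta$, which is where the per-pair bound $d_i/2$ must be shown to accumulate correctly.

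The main obstacle I anticipate is precisely (ii): bounding the \emph{total} supremum-norm deviation when many cancelations overlap. A single cell may lie in the attracting or repelling sets of several canceled pairs at once — pairs of dimension $(0,1)$ and $(1,2)$ simultaneously, as the introduction emphasizes — so the naive estimate would sum the individual $d_i/2$ contributions and blow past $\delta$. The key insight I would try to exploit is the hierarchical nesting: a cell's value is only ever moved toward the average of the \emph{innermost} relevant pair and then, as outer pairs are processed, toward successively higher/lower averages, so the displacements are monotone and telescoping rather than additive, and the net displacement of any cell stays within $\delta$ because each cell's original value already lies within $\delta$ of every plateau level it is assigned to (each canceled pair having persistence $\le\delta$, hence half-persistence $\le\delta/2$, but with the accumulation across the hierarchy bounded by the total persistence budget rather than summed). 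Making this telescoping estimate precise — simultaneously across dimensions $(0,1)$ and $(1,2)$ and using duality to treat the latter — is the technical crux, and I would devote the bulk of the proof to a careful clamping argument controlling $f_\delta(\rho)$ between the function values dictated by the outermost canceled pair containing $\rho$ in its attracting/repelling set.
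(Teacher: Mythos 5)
Your overall strategy---build a nested cancelation sequence via Theorem~\ref{thm:simplifiedVectorField}, define a leveling/plateau function for each cancelation, and verify consistency plus the $\delta$-constraint---is exactly the paper's route. But there are two genuine problems. First, the threshold: you cancel only the pairs with persistence $\leq\delta$, each time moving values by at most $d_i/2\leq\delta/2$, and then hope that the ``$\delta$-modification of values converts the $\leq\delta$ threshold on the vector field into the $>2\delta$ threshold on surviving pairs.'' No such conversion happens: a pair with persistence in $(\delta,2\delta]$ that is not canceled in the vector field leaves two critical cells in $V_n$, and these survive as a persistence pair of any function consistent with $V_n$, so your output has strictly more persistence pairs than the target. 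The correct move is to take a nested \emph{$2\delta$}-persistence cancelation sequence: each canceled pair satisfies $f(\tau)-f(\sigma)\leq 2\delta$, so leveling to the midpoint $m_i=\tfrac12(f(\sigma)+f(\tau))$ moves values by at most $\delta$, which is exactly the budget.

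Second, and more seriously, you correctly identify the crux---that a cell touched by many overlapping cancelations must not drift more than $\delta$ from its \emph{original} value---but you do not supply the argument, only the intuition that the displacements ``telescope.'' The paper's resolution is not a telescoping sum at all: the plateau level $m_i$ is always defined from the original $f$, values are only ever moved \emph{toward} $m_i$ (raised if below, lowered if above), so the one inequality that must be proved is $f(\sigma)\leq f(\rho)$ for every $\rho$ in the attracting set of $\sigma$ with respect to the \emph{modified} field $V_{i-1}$ (and dually for repelling sets). That is Lemma~\ref{lem:upperSetValues}: the partial order induced by $V_i$ still refines the original order $\prec$ on comparisons with critical cells. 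Its proof is the hardest part of the paper, requiring an analysis of how path reversal changes the induced partial order (Lemma~\ref{lem:changedRelation}), the nestedness of the sequence, Lemma~\ref{lem:nestedPersistencePairs}, and a case analysis over the dimensions of the canceled cells using duality. Without this lemma (or an equivalent), your ``careful clamping argument'' is a placeholder for the actual content of the proof, and your claim that ``each cell's original value already lies within $\delta$ of every plateau level it is assigned to'' remains unjustified.
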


The proof of Theorem~\ref{thm:2deltasimp} is constructive and hence leads to an algorithm. The corresponding construction is outlined in Section~\ref{sec:leveling}. Unfortunately, the resulting algorithm has a running time that is quadratic in the input size. We present an \emph{efficient} algorithm in Section~\ref{sec:efficientAlgo}. The proof of its correctness becomes easier once Theorem~\ref{thm:2deltasimp} is established. This is the reason why we present two separate constructions.

\begin{corollary}[Tightness of the stability bound]\label{cor:tightnessPseudoMorse}
Given a discrete pseudo-Morse function $f$ on a surface and $\delta\geq0$, there exists a discrete pseudo-Morse $f_\delta$ consistent with a gradient vector field $V_\delta$ such that $\|f_\delta-f\|_\infty \leq \delta$ and the number of critical points of $V_\delta$ equals the number of critical points of $f$ that have persistence $> 2 \delta$.
\end{corollary}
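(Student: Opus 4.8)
The plan is to deduce this corollary from Theorem~\ref{thm:2deltasimp} by translating its statement about \emph{persistence pairs} into one about \emph{critical cells of a gradient vector field}. First I would invoke Theorem~\ref{thm:2deltasimp} to obtain a perfect $\delta$-simplification $f_\delta$. Since $f_\delta$ is pseudo-Morse, I may choose a consistent gradient vector field $V_\delta$ together with a consistent total order, so that the persistence pairs of $f_\delta$ are well-defined. The whole task then reduces to a single bookkeeping identity between the number of critical cells of a gradient vector field and the number of persistence pairs of any consistent function.

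The key step is to establish, for an arbitrary pseudo-Morse function $g$ on $\complex$ consistent with a gradient vector field $W$, the identity
\[
\#\{\text{critical cells of } W\} \;=\; 2\cdot\#\{\text{persistence pairs of } g\} \;+\; \beta,
\]
where $\beta=\sum_d \dim_F H_d(\complex;F)$ is the total Betti number of the surface. To prove it I would track homology along the cell-by-cell filtration given by the order subcomplexes. By Theorem~\ref{thm:subcomplexesHomotopyEquiv} the homology changes only when a critical cell is added, and over a field each such addition is either a \emph{birth} (raising the rank of some $H_d$ by one) or a \emph{death} (lowering the rank of some $H_{d-1}$ by one), with no neutral case. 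Hence every critical cell is positive or negative. Each negative cell is the death of exactly one persistence pair and each paired positive cell is its birth, so both counts equal the number of persistence pairs; the remaining unpaired positive cells are in bijection with a homology basis of $\complex$ and therefore number exactly $\beta$. Summing yields the displayed identity.

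Finally I would apply the identity to both $f$ (with $V$) and $f_\delta$ (with $V_\delta$). Since unpaired critical cells represent essential classes, they carry persistence $\infty>2\delta$; hence the critical points of $f$ of persistence $>2\delta$ number exactly $2\cdot\#\{\text{pairs of } f \text{ of persistence} >2\delta\}+\beta$. By the definition of a perfect $\delta$-simplification, the persistence pairs of $f_\delta$ are equinumerous with the persistence pairs of $f$ of persistence $>2\delta$, so combining this with the identity applied to $V_\delta$ gives
\[
\#\{\text{critical cells of } V_\delta\} = 2\cdot\#\{\text{pairs of } f_\delta\}+\beta = 2\cdot\#\{\text{pairs of } f \text{ of persistence} >2\delta\}+\beta,
\]
which is precisely the number of critical points of $f$ of persistence $>2\delta$, as required.

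The substantive work is carried entirely by Theorem~\ref{thm:2deltasimp}; the only genuine point of care in the corollary is the counting identity, and in particular the observation that the number $\beta$ of unpaired (essential) critical cells is a topological invariant of $\complex$ and hence identical for $f$ and $f_\delta$. This invariance is exactly what guarantees that matching the \emph{paired} counts is enough to match the \emph{total} critical-cell counts.
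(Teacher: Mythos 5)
Your argument is correct, but it takes a genuinely different route from the one the paper relies on. The paper obtains the corollary directly from the construction behind Theorem~\ref{thm:2deltasimp}: the simplified function is consistent with the final vector field $V_n$ of a nested $2\delta$-persistence cancelation sequence (Theorem~\ref{thm:simplifiedVectorField}), and each application of Theorem~\ref{thm:cancelation} removes from the critical set exactly the two cells of the canceled pair; since precisely the pairs of persistence $\leq 2\delta$ are canceled, the critical cells of $V_n$ are exactly the critical cells of $V$ lying in pairs of persistence $>2\delta$ together with the unpaired ones, and no counting identity is needed. You instead treat Theorem~\ref{thm:2deltasimp} as a black box about \emph{pair} counts and convert to \emph{cell} counts via the Euler--Poincar\'e-type identity $\#\mathit{crit}(W)=2\cdot\#\mathit{pairs}(g)+\beta$; the identity and your justification of it (over a field every critical cell is a birth or a death, and unpaired births biject with a homology basis of $\complex$) are sound. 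The one point that needs care is your phrase ``I may choose a consistent gradient vector field $V_\delta$'': the number of persistence pairs of a pseudo-Morse function is \emph{not} independent of the chosen consistent vector field --- only the persistence \emph{diagram} is, and pairs of persistence zero are absorbed into the diagonal. An arbitrary consistent choice could carry extra persistence-zero pairs and hence extra critical cells. You must take $V_\delta$ to be the specific vector field witnessing the perfect $\delta$-simplification (the paper's $V_n$). With that reading your chain of equalities closes, and your argument has the mild advantage of applying to any perfect $\delta$-simplification together with its witnessing vector field, not only to the one produced by the plateau construction.
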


Using Lemma~\ref{lem:perturbation}, the result can also be stated for (non-degenerate) discrete Morse functions (in a slightly different form, because only critical points with persistence $<2\delta$ can be eliminated within a tolerance of $\delta$ in the set of discrete Morse functions):

\begin{corollary}\label{cor:tightnessMorse}
Given a discrete Morse function $f$ on a surface and $\delta>0$, there exists a discrete Morse function $f_\delta$ such that $\|f_\delta-f\|_\infty < \delta$ and the number of critical points of $f_\delta$ equals the number of critical points of $f$ that have persistence $\geq 2 \delta$.
\end{corollary}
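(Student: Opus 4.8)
The plan is to derive Corollary~\ref{cor:tightnessMorse} from Corollary~\ref{cor:tightnessPseudoMorse} by invoking the symbolic perturbation Lemma~\ref{lem:perturbation}, being careful about the strict-versus-weak inequalities that distinguish Morse functions from pseudo-Morse functions. First I would observe that a discrete Morse function $f$ is in particular a pseudo-Morse function, so Corollary~\ref{cor:tightnessPseudoMorse} applies and yields a pseudo-Morse function $f_\delta$ consistent with a gradient vector field $V_\delta$ such that $\|f_\delta - f\|_\infty \leq \delta$ and the number of critical cells of $V_\delta$ equals the number of persistence pairs of $f$ with persistence $>2\delta$. The main content of the present corollary is then to replace the degenerate $f_\delta$ by a genuine (non-degenerate) Morse function while paying for this replacement with an arbitrarily small perturbation, and to account for the shift from $>2\delta$ to $\geq 2\delta$ in the persistence threshold.

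The key observation for the threshold shift is the following. In the set of genuine discrete Morse functions, the constraint is the strict inequality $\|f_\delta - f\|_\infty < \delta$ rather than $\leq \delta$. A persistence pair $(\sigma,\tau)$ of $f$ with persistence exactly $2\delta$ sits on the boundary of what can be canceled: it can be eliminated within a \emph{weak} tolerance $\delta$ (as in the pseudo-Morse statement, where persistence $>2\delta$ pairs survive) but cannot be guaranteed to survive once we pass to the strict tolerance, so it should now be \emph{removed}. Concretely, I would first apply Corollary~\ref{cor:tightnessPseudoMorse} with a tolerance $\delta' $ chosen slightly smaller than $\delta$, say with $\delta'<\delta$ but large enough that every persistence pair of $f$ with persistence $\geq 2\delta$ has persistence $>2\delta'$, while every pair with persistence $<2\delta$ still has persistence $\leq 2\delta'$. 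Such a $\delta'$ exists because there are only finitely many persistence pairs, hence only finitely many persistence values; one takes $\delta'$ strictly between $\delta$ and the largest persistence value that is $<2\delta$, divided appropriately. This produces a pseudo-Morse $f_{\delta'}$ consistent with $V_{\delta'}$ whose number of critical cells equals the number of persistence pairs of $f$ with persistence $\geq 2\delta$, and with $\|f_{\delta'}-f\|_\infty \leq \delta' < \delta$.

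Next I would invoke Lemma~\ref{lem:perturbation} on $f_{\delta'}$ and $V_{\delta'}$: since $f_{\delta'}$ is pseudo-Morse consistent with $V_{\delta'}$, for every $\epsilon>0$ there is a genuine discrete Morse function $g_\epsilon$ with $\|g_\epsilon - f_{\delta'}\|_\infty \leq \epsilon$ whose gradient vector field is exactly $V_{\delta'}$. In particular the critical cells of $g_\epsilon$ are precisely the critical cells of $V_{\delta'}$, so their count is unchanged. Choosing $\epsilon < \delta - \delta'$ and setting $f_\delta := g_\epsilon$, the triangle inequality gives $\|f_\delta - f\|_\infty \leq \|g_\epsilon - f_{\delta'}\|_\infty + \|f_{\delta'}-f\|_\infty \leq \epsilon + \delta' < \delta$, as required, and the number of critical points of $f_\delta$ equals the number of persistence pairs of $f$ with persistence $\geq 2\delta$. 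Finally I would note that for a genuine Morse function the number of critical points equals the number of persistence pairs plus the essential (unpaired) classes, and since perturbation preserves the gradient field these essential classes are identical for $f$ and $f_\delta$; hence the statement about \emph{critical points} of $f_\delta$ matching those of $f$ with persistence $\geq 2\delta$ follows in the same bookkeeping as in Corollary~\ref{cor:tightnessPseudoMorse}. The main obstacle is the careful choice of the intermediate tolerance $\delta'$ and the matching $\epsilon$, ensuring that the strict inequality and the shifted threshold $\geq 2\delta$ are simultaneously respected; the finiteness of the persistence values is what makes this slack available.
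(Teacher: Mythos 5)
Your construction is correct and matches the paper's approach: the paper justifies this corollary only by the one-line remark that Lemma~\ref{lem:perturbation} should be applied after canceling exactly the pairs of persistence $<2\delta$, and your choice of an intermediate tolerance $\delta'$ with $p^*/2 \leq \delta' < \delta$ (where $p^*$ is the largest persistence value below $2\delta$), followed by an $\epsilon$-perturbation with $\epsilon < \delta - \delta'$, is precisely the right way to make that remark rigorous. One caveat: your motivating sentence claiming that a pair of persistence exactly $2\delta$ ``should now be removed'' contradicts both the statement and your own construction --- under the strict tolerance such a pair \emph{cannot} be canceled (by stability, canceling it would force $\|f_\delta-f\|_\infty \geq \delta$), and indeed your $\delta'$ is chosen so that it survives. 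A second minor slip: the number of critical points is \emph{twice} the number of persistence pairs plus the number of unpaired cells, not the sum you state; neither issue affects the validity of the argument.
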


\paragraph{Convention and Notation}
Throughout this section we consider a given pseudo-Morse function $f$ consistent with a gradient vector field $V$ on a combinatorial surface~$\complex$, a strict total order~$\prec$ consistent with~$(f,V)$, and a nested $2\delta$-persistence cancelation sequence $(V_0,\ldots,V_n)$ with $V_0=V$. Moreover, we let $\prec_j\ :=\ \prec_{V_j}$ denote the partial order induced by~$V_j$.

\subsection{The plateau function}\label{sec:leveling}
For every $V_i$ in the cancelation sequence, we inductively define a pseudo-Morse function $f_i$ consistent with $V_i$, see Figure~\ref{fig:cancellation} for an illustration.
By assumption we start with a pseudo-Morse function $f_0:=f$ consistent with $V_0:=V$. Suppose that we have constructed a pseudo-Morse function $f_{i-1}$ consistent with $V_{i-1}$. Let $(\sigma,\tau)$ be the persistence pair that is canceled in the construction of $V_i$ from $V_{i-1}$ using Theorem~\ref{thm:cancelation}.
We define the corresponding \emph{plateau function} $f_i$ as follows:
$$
m_i=\frac{f(\sigma)+f(\tau)}2\quad \text{and} \quad
f_i(\rho):=
\begin{cases}
m_i & \pbox{\textwidth}{$\text{if~} \rho\succeq_{i-1}\sigma \text{~and~} f_{i-1}(\rho) < m_i$ \\
$\text{or~} \rho\preceq_{i-1}\tau \text{~and~} f_{i-1}(\rho) > m_i,$} \\[3\smallskipamount]
f_{i-1}(\rho) & \text{otherwise}.
\end{cases}
$$
This means that the \emph{attracting set} $\{\rho:\rho\succeq_{i-1}\sigma\}$ of $\sigma$ is raised to at least the value $m_i$, and analogously the \emph{repelling set} $\{\rho:\rho\preceq_{i-1}\tau\}$ of $\tau$ is lowered. Hence, $f_i$ creates a local \emph{plateau} at the value $m_i$.
The following lemma is a direct consequence of the way we construct $f_i$ from $f_{i-1}$ and the fact that $f_i$ is constant along the path from $\partial\tau$ to $\sigma$. It can be proven using a straightforward induction argument.

\begin{lemma}\label{lem:consistent}
The plateau function $f_i$ is consistent with both $V_{i-1}$ and $V_i$.
\end{lemma}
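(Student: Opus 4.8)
The plan is to prove that the plateau function $f_i$ is consistent with both $V_{i-1}$ and $V_i$ by verifying the defining inequalities of a pseudo-Morse function for each facet relation $\sigma' \subset \partial\tau'$ separately. Recall that consistency with a gradient vector field $W$ means: whenever $\sigma'$ is a facet of $\tau'$, we need $f_i(\sigma') \leq f_i(\tau')$ if $(\sigma',\tau') \notin W$, and $f_i(\sigma') \geq f_i(\tau')$ if $(\sigma',\tau') \in W$. Since $V_i$ and $V_{i-1}$ agree except along the reversed path $\Gamma$ from $\partial\tau$ to $\sigma$ (by Theorem~\ref{thm:cancelation}), and since the induced partial orders $\prec_{i-1}$ and $\prec_i$ differ only through this reversal, the two consistency claims must be handled with attention to which pairs lie on $\Gamma$.

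First I would record the key monotonicity property of the cutoff operation. The definition raises cells in the attracting set $\{\rho : \rho \succeq_{i-1} \sigma\}$ up to $m_i$ (only if they were below $m_i$) and lowers cells in the repelling set $\{\rho : \rho \preceq_{i-1} \tau\}$ down to $m_i$ (only if they were above $m_i$); all other cells are unchanged. The crucial observation is that this operation is \emph{order-preserving with respect to} $\prec_{i-1}$: if $\phi \prec_{i-1} \rho$ then $f_{i-1}(\phi) \leq f_{i-1}(\rho)$, and I claim $f_i(\phi) \leq f_i(\rho)$ as well. This follows because the attracting set of $\sigma$ is an up-set and the repelling set of $\tau$ is a down-set in $\prec_{i-1}$, so the clamping to the common value $m_i$ cannot invert any comparable pair. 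This single claim, proven by a short case analysis on whether each of $\phi,\rho$ lies in the attracting set, the repelling set, or neither, immediately gives consistency of $f_i$ with $V_{i-1}$: every facet relation $(\sigma',\tau')$ is comparable in $\prec_{i-1}$ (with the orientation dictated by membership in $V_{i-1}$), so the required weak inequality is inherited.

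The main obstacle, and the only place requiring genuine work, is consistency with $V_i$ along the reversed path $\Gamma = (\sigma_0, \tau_0, \ldots, \tau_{r-1}, \sigma_r)$ with $\sigma_0 \in \partial\tau$ and $\sigma_r = \sigma$. Here the orientations of the pairs on $\Gamma$ are flipped relative to $V_{i-1}$, so the inequalities I need point the opposite way, and order-preservation alone does not suffice. This is exactly where the parenthetical remark in the lemma statement is used: $f_i$ is \emph{constant} along $\Gamma$. Concretely, every cell on $\Gamma$ lies in the attracting set of $\sigma$ (it has a $V_{i-1}$-path to $\sigma$, hence $\succeq_{i-1}\sigma$) and simultaneously in the repelling set of $\tau$ (it has a $V_{i-1}$-path from $\partial\tau$, hence $\preceq_{i-1}\tau$), so each such cell is clamped to the value $m_i$. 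Therefore $f_i \equiv m_i$ on all cells of $\Gamma$, and the reversed facet relations on $\Gamma$ trivially satisfy both $f_i(\sigma') \geq f_i(\tau')$ and $f_i(\sigma') \leq f_i(\tau')$ as equalities. For any facet relation \emph{not} on $\Gamma$, the orientation in $V_i$ agrees with that in $V_{i-1}$, and consistency follows from the order-preservation claim already established.

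I would organize the induction so that the base case $i=0$ is the hypothesis that $f_0=f$ is consistent with $V_0=V$, and the inductive step assumes $f_{i-1}$ is consistent with $V_{i-1}$ and produces consistency of $f_i$ with both $V_{i-1}$ and $V_i$ via the two parts above. The argument that every cell of $\Gamma$ lies in both the attracting set of $\sigma$ and the repelling set of $\tau$ deserves a sentence of justification using the definition of a $V$-path and the transitivity of $\prec_{i-1}$, but it is otherwise routine. I expect no further subtleties: the lemma is stated as a direct consequence of the construction, and the real content is packaging the order-preservation observation together with the constancy of $f_i$ on the reversal path.
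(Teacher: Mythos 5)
Your proposal is correct and follows exactly the route the paper intends: the paper gives no detailed proof, stating only that the lemma "is a direct consequence of the way we construct $f_i$ from $f_{i-1}$ and the fact that $f_i$ is constant along the path from $\partial\tau$ to $\sigma$" and can be shown "using a straightforward induction argument," which is precisely the order-preservation-plus-constancy-on-$\Gamma$ argument you spell out. Your version is in fact more explicit than the paper's (the up-set/down-set observation and the case analysis are the right way to make the sketch rigorous); the only cosmetic point is to note that $\tau$ itself also lies in both the attracting set of $\sigma$ and the repelling set of $\tau$, so the new pair $(\sigma_0,\tau)\in V_i$ is likewise covered by the constancy argument.
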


Note that the construction of the plateau function does not depend on the properties of combinatorial surfaces but can be applied to regular CW complexes of \emph{arbitrary dimensions}. Moreover, it does not depend on the cancelation persistence pairs: whenever we have a pseudo-Morse function $f$ consistent with a gradient vector field $V$ and $\widetilde V$ is constructed from $V$ by a cancelation using Theorem~\ref{thm:cancelation}, we can obtain a plateau function $\tilde f$ that is consistent with both $V$ and $\widetilde V$.

\subsection{Checking the constraint}\label{sec:constraint}

It remains to show that the plateau construction above is admissible, i.e., that all of the functions $f_i$ satisfy the $\delta$-constraint. 
\begin{lemma}\label{lem:deltaconstraint}
Each plateau function $f_i$ satisfies $\| f_i - f \|_\infty \leq \delta$.
\end{lemma}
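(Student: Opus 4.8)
The plan is to bound the total modification $\|f_i - f\|_\infty$ by analyzing, for a fixed cell $\rho$, how much its value can change over the entire cancelation sequence $(V_0,\ldots,V_n)$. The key observation is that at each step $i$, the plateau function only moves $f_{i-1}(\rho)$ to the midpoint $m_i = \tfrac{f(\sigma_i)+f(\tau_i)}2$, and only when $\rho$ lies in the attracting set of $\sigma_i$ (and is below $m_i$) or the repelling set of $\tau_i$ (and is above $m_i$). Since the sequence is a $2\delta$-persistence cancelation sequence, every canceled pair satisfies $f(\tau_i)-f(\sigma_i)\le 2\delta$, so each midpoint lies within $\delta$ of both $f(\sigma_i)$ and $f(\tau_i)$. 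First I would record that a single plateau step moves a value toward $m_i$ but never past it, so the value after the step lies between its old value and $m_i$.

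The heart of the argument is that the sequence of modifications affecting a fixed cell $\rho$ is \emph{monotone in a controlled way}. I expect to show that if $\rho$ is raised at step $i$ (because $\rho \succeq_{i-1}\sigma_i$ and $f_{i-1}(\rho)<m_i$), then the pair $(\sigma_i,\tau_i)$ whose attracting set contains $\rho$ has its target value $m_i$ bounded below by $f(\sigma_i)$, and crucially that $f(\sigma_i)\le f(\rho)$ must hold because $\rho$ is in the attracting set of $\sigma_i$, which by consistency of $\prec$ with $(f,V)$ forces $f(\rho)\ge f(\sigma_i)$ for the \emph{original} function (here I would use that $\rho\succeq_{i-1}\sigma_i$ propagates an inequality back to $f=f_0$ via Lemma~\ref{lem:consistent} and the partial-order interpretation $\phi\prec_V\rho\Rightarrow f(\phi)\le f(\rho)$). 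This bounds each upward move by $m_i - f(\rho) \le f(\tau_i)-f(\sigma_i) \le 2\delta$ crudely, but the sharper bound $\delta$ requires showing the successive midpoints that raise $\rho$ are monotone increasing and each stays within $\delta$ of $f(\rho)$. The analogous symmetric statement handles the lowering steps via duality.

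The main obstacle, and the step I would dwell on, is ruling out that a single cell $\rho$ is alternately raised and lowered (or raised repeatedly past the $\delta$ band) by \emph{different} pairs in the sequence. This is where the \emph{nestedness} of the cancelation sequence becomes essential: I would argue that once $\rho$ has been raised to some midpoint $m_i$ by a pair $(\sigma_i,\tau_i)$, any later pair $(\sigma_j,\tau_j)$ ($j>i$) that also moves $\rho$ must satisfy a containment relation between the attracting/repelling sets, forced by the nesting $\sigma_j\prec\sigma_i\prec\tau_i\prec\tau_j \Rightarrow i<j$ (Lemma~\ref{lem:nestedChild} and the definition of a nested sequence). Consequently the pairs acting on $\rho$ from the same side have midpoints that move monotonically \emph{away} from $f(\rho)$ but never exceed the value $f(\tau_j)$ or drop below $f(\sigma_j)$ of the outermost relevant pair, all of which lie within distance $\delta$ of $m$ by the $2\delta$-persistence bound. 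I would formalize this by induction on $i$, maintaining the invariant $|f_i(\rho)-f(\rho)|\le\delta$ for every cell $\rho$ simultaneously, and checking the inductive step separately for cells newly moved at step $i$ versus cells left unchanged. The delicate point to verify carefully is that the attracting set $\{\rho:\rho\succeq_{i-1}\sigma\}$ computed with respect to the \emph{updated} order $\prec_{i-1}$ still yields the original-function inequality $f(\sigma)\le f(\rho)$, which I would extract from the chain of consistencies guaranteed by Lemma~\ref{lem:consistent} across all intermediate vector fields.
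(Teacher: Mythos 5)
Your overall skeleton matches the paper's proof: induct on $i$, note that a raised cell automatically keeps its lower bound (it only moved up from something $\geq f(\rho)-\delta$), and verify the upper bound freshly via $m_i\leq f(\sigma)+\delta$ (from $f(\tau)-f(\sigma)\leq 2\delta$) together with $f(\sigma)\leq f(\rho)$; the lowering case is symmetric. The worries about alternating raises and lowers and about monotone sequences of midpoints are unnecessary once the argument is organized this way, and the paper does not need them.

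However, there is a genuine gap at exactly the point you flag as delicate: deriving $f(\sigma)\leq f(\rho)$ from $\rho\succeq_{i-1}\sigma$. You propose to get this from ``the chain of consistencies guaranteed by Lemma~\ref{lem:consistent}.'' That cannot work. Consistency of $f_{i-1}$ with $V_{i-1}$ only yields $f_{i-1}(\sigma)\leq f_{i-1}(\rho)$, an inequality about the \emph{modified} function, and both $f_{i-1}(\sigma)$ and $f_{i-1}(\rho)$ may already differ from $f(\sigma)$ and $f(\rho)$ by up to $\delta$, which would only give $m_i\leq f(\rho)+2\delta$. More fundamentally, the relation $\rho\succ_{i-1}\sigma$ may be a \emph{new} relation created by the earlier path reversals --- one that does not hold in $\prec_{V_0}$ at all --- so no amount of chaining consistency backwards through the $f_j$ recovers an inequality for the original $f$. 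The paper closes this gap with a separate, substantial result, Lemma~\ref{lem:upperSetValues}: for a persistence pair $(\sigma,\tau)$ still critical in $V_i$, $\rho\succ_i\sigma$ implies $\rho\succ\sigma$ in the \emph{original} total order $\prec$, whence $f(\sigma)\leq f(\rho)$ by consistency of $\prec$ with $(f,V)$. Its proof tracks how path reversal changes the induced partial order (Lemma~\ref{lem:changedRelation}) and then runs a case analysis on the dimensions of the cells involved, invoking nestedness of the cancelation sequence, duality, Corollary~\ref{cor:twoPaths}, and Lemma~\ref{lem:nestedPersistencePairs}. This is the actual mathematical content behind the $\delta$-bound, and your proposal does not supply a substitute for it.
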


\begin{proof}

We show the statement by induction.
The base case is trivial since $f_0=f$.

Let $(\sigma,\tau)$ be the persistence pair that is canceled when constructing $V_i$ from $V_{i-1}$. We show that
the $\delta$-constraint is neither violated by increasing the value of any cell $\rho$ in the attracting set of $\sigma$ in $V_{i-1}$, nor by decreasing the value of any cell in the repelling set of $\tau$. Since $f_i(\rho) = f_{i-1}(\rho)$ for all cells $\rho$ not treated in these two cases, the claim follows.

We first show 
$| f_i(\rho) - f(\rho) | \leq \delta$ for any cell $\rho \succeq_{i-1}\sigma$ with $f_{i-1}(\rho)<m_i$. 
By the induction hypothesis we have a lower bound $f_{i-1}(\rho) \geq f(\rho)-\delta$.
By construction of $f_i$, the value of $\rho$ is increased: $f_i(\rho) = m_i > f_{i-1}(\rho)$. Therefore, the lower bound remains valid after step $i$:
$$
  f_i(\rho) > f_{i-1}(\rho) \geq f(\rho) - \delta .
$$
To show the upper bound $f_i(\rho)\leq f(\rho) +\delta$, we first use $f(\tau)-f(\sigma) \leq 2\delta$ to obtain
$$
  f_i(\rho) = m_i = \frac{f(\sigma)+f(\tau)}2 
  \leq \frac{f(\sigma)+(f(\sigma)+2\delta)}2 
  = f(\sigma) +\delta.
$$
This is almost the desired inequality except that the right hand side contains $f(\sigma)$ instead of $f(\rho)$. To finish the proof, it therefore suffices to show that $f(\sigma)\leq f(\rho)$. This, in turn, is a consequence of the facts that, according to Lemma~\ref{lem:upperSetValues}, $\sigma \prec_{i-1} \rho$ implies $\sigma \prec \rho $, and that $\prec$ is consistent with $(f,V)$.

It remains to show that~$| f_i(\rho) - f(\rho) | \leq \delta$ for any cell~$\rho \preceq_{i-1}\tau$ with $f_{i-1}(\rho)>m_i$. The proof of this statement is analogous to the above.
\end{proof}

Before proving Lemma~\ref{lem:upperSetValues}, we first investigate how the reversal of a gradient vector field may change the induced partial order (see Figure~\ref{fig:reversing} for an example):

\begin{figure}
\center{
\includegraphics[scale=0.9]{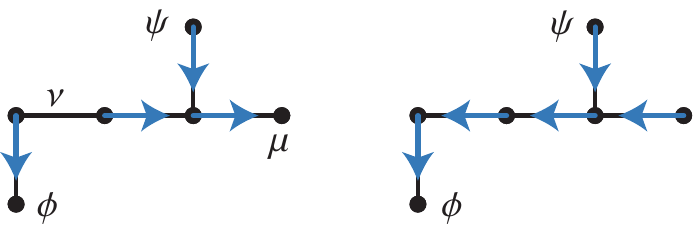}
}

\caption{Example illustrating Lemma~\ref{lem:changedRelation}. Left: gradient vector field $W$ (before reversing the path from $\partial\nu$ to $\mu$). Right: gradient vector field $\widetilde W$ (after path reversal). Note that we have the new relation $\phi\prec_{{\widetilde W}}\psi$ (corresponding in this example to a $\widetilde W$-path from $\psi$ to $\phi$). In the example, the conclusion $\phi\preceq_W\nu$ and $\mu\preceq_W\psi$ of Lemma~\ref{lem:changedRelation} is reflected by the two $W$-paths from $\partial\nu$ to $\phi$ and from $\psi$ to $\mu$, respectively.
}
\label{fig:reversing}
\end{figure}

\begin{lemma}\label{lem:changedRelation}
Let $\mu,\nu,\phi,\psi$ be (not necessarily disjont) cells of a regular CW complex $\complex$, and let $W$ and~${\widetilde W}$ be two gradient vector fields. Assume that the cells $\mu,\nu$ are critical in $W$ and that ${\widetilde W}$ is constructed by reversing~$W$ along the unique $W$-path from $\partial\nu$ to $\mu$. Assume further that $\phi\not\prec_W\psi$ and $\phi\prec_{{\widetilde W}}\psi$. Then $\phi\preceq_W\nu$ and $\mu\preceq_W\psi$.
\end{lemma}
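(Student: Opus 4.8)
The statement relates the induced partial orders before and after a single path reversal. The key combinatorial tool is the characterization of $\prec_W$ in terms of directed paths in the Hasse diagram $H_W$: we have $\phi \prec_W \psi$ if and only if there is a directed path from $\psi$ to $\phi$ in $H_W$. Recall that $H_{\widetilde W}$ is obtained from $H_W$ by reversing the orientation of exactly those edges that lie on the $W$-path $\Gamma$ from $\partial\nu$ to $\mu$ (these are the edges that change membership in the vector field during the reversal). So the plan is to analyze how a directed path witnessing $\phi \prec_{\widetilde W} \psi$ must use the reversed edges, and to extract from it the two conclusions $\phi \preceq_W \nu$ and $\mu \preceq_W \psi$.

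\textbf{Key steps.} First I would fix a directed path $P$ from $\psi$ to $\phi$ in $H_{\widetilde W}$, which exists by the hypothesis $\phi \prec_{\widetilde W} \psi$. Since $\phi \not\prec_W \psi$, there is no such path in $H_W$, so $P$ must traverse at least one edge whose orientation was flipped, i.e.\ an edge of the reversed path $\Gamma$. Next I would locate the first edge of $P$ (reading from $\psi$ towards $\phi$) that was reversed, and the last such edge. The reversed edges all lie along $\Gamma$, which runs between $\partial\nu$ and $\mu$; the critical cells $\mu$ and $\nu$ are precisely the two endpoints of $\Gamma$ in the Hasse diagram. The portion of $P$ from $\psi$ up to the entry point of the first reversed edge consists entirely of edges present in $H_W$ with their original orientation, giving a directed $H_W$-path from $\psi$ down to an endpoint of $\Gamma$; tracking orientations carefully shows this endpoint must be $\mu$, yielding $\mu \preceq_W \psi$. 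Symmetrically, the portion of $P$ after the last reversed edge down to $\phi$ gives a directed $H_W$-path from the other endpoint of $\Gamma$ (namely a facet in $\partial\nu$, hence $\preceq_W \nu$) down to $\phi$, so that $\phi \preceq_W \nu$.

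\textbf{The main obstacle.} The delicate point is the orientation bookkeeping. In $H_W$ every edge of $\Gamma$ is oriented so that following $\Gamma$ from $\partial\nu$ to $\mu$ alternately descends (along non-$V$ facet edges, oriented towards decreasing value) and ascends (along $V$-pairs, which in $H_W$ point from $\tau$ down to $\sigma$ but whose $V$-membership reverses the Hasse orientation). After reversal, these orientations flip, and a directed $\widetilde W$-path from $\psi$ to $\phi$ can only thread through $\Gamma$ by entering at the $\mu$-end and exiting at the $\nu$-end (or vice versa). I expect the crux to be verifying that a valid directed traversal forces the entry and exit endpoints to be $\mu$ and a facet of $\nu$ respectively, rather than some interior cell of $\Gamma$ — this uses that $\mu,\nu$ are critical in $W$ (so $\Gamma$ cannot be entered or left at an interior cell without contradicting the defining alternation of a $V$-path), together with Lemma~\ref{lem:uniquePredSucc} guaranteeing that paths through the interior of $\Gamma$ cannot branch or merge. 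Once the entry and exit endpoints are pinned down, the two desired inequalities follow immediately from concatenating the unreversed segments of $P$ into genuine $H_W$-directed paths.
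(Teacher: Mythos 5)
Your overall skeleton --- take a directed chain in $H_{\widetilde W}$ witnessing $\phi\prec_{\widetilde W}\psi$, observe that it must use at least one reversed edge because $\phi\not\prec_W\psi$, and work with the first and last such edges together with the unreversed prefix and suffix --- is exactly the paper's argument. But the step you single out as the crux is a false statement, and the proof as planned would break there. You claim that the chain can only meet the reversed edges by entering at $\mu$ and exiting at a facet of $\nu$ (or vice versa), and you propose to prove this from the criticality of $\mu,\nu$ and Lemma~\ref{lem:uniquePredSucc}. In fact the chain may enter and leave the reversed path at interior cells: an interior cell $\tau_i$ of $\Gamma$ in general has facets other than $\sigma_i,\sigma_{i+1}$ and cofaces not on $\Gamma$, and the corresponding Hasse edges are unreversed, so a directed path in $H_{\widetilde W}$ can reach $\tau_i$ from outside $\Gamma$, traverse the single reversed edge from $\tau_i$ to $\sigma_i$, and immediately leave $\Gamma$ again. (This happens even on a surface, where a $1$-cell $\tau_i$ on a dimension-$0$ path has two $2$-dimensional cofaces off $\Gamma$.) Moreover, Lemma~\ref{lem:uniquePredSucc} is a statement about combinatorial surfaces, whereas the present lemma is asserted for arbitrary regular CW complexes, so it is neither available nor needed here.

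The correct completion is weaker and simpler than what you are aiming for. All cells touched by the reversal form a single descending chain in $\prec_W$: since $\mu$ and $\nu$ are critical in $W$, one has $\mu=\sigma_r\prec_W\tau_{r-1}\prec_W\sigma_{r-1}\prec_W\cdots\prec_W\tau_0\prec_W\sigma_0\prec_W\nu$, so \emph{every} cell $\rho$ on the reversed path (including $\nu$) satisfies $\mu\preceq_W\rho\preceq_W\nu$. Hence if $\rho_j\to_W\rho_{j+1}$ is the first reversed step (read from the $\phi$ end of the chain), then $\rho_j$ lies on the reversed path and so $\rho_j\preceq_W\nu$, while the unreversed prefix gives $\phi\preceq_W\rho_j$; this already yields $\phi\preceq_W\nu$ with no control whatsoever over where the chain enters $\Gamma$. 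The symmetric argument with the last reversed step gives $\mu\preceq_W\psi$. This is precisely how the paper argues; replacing your ``pin down the entry and exit endpoints'' step by this monotonicity observation repairs the proof.
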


\begin{proof}
By definition of the induced partial order, $\phi\prec_{{\widetilde W}}\psi$ implies that there exists a sequence $(\rho_1,\dots,\rho_k)$ with $\rho_1=\phi$, $\rho_k=\psi$ and $\rho_i\gets_{\widetilde W}\rho_{i+1}$ for all $1\leq i\leq k-1$. Here either $\rho_i$~is a facet of $\rho_{i+1}$ or $\rho_{i+1}$~is a facet of $\rho_i$, and we therefore also have either $\rho_i\gets_W\rho_{i+1}$ or~$\rho_i\to_W\rho_{i+1}$. But since $\phi\not\prec_W\psi$, there exists a smallest index~$j$ such that~$\rho_j\to_W\rho_{j+1}$. Since the relations $\gets_W$ and $\gets_{\widetilde W}$ differ only along the $W$-path from $\partial\nu$ to $\mu$ (including~$\nu$), it follows that the cells $\rho_j$ and $\rho_{j+1}$ are contained in this $W$-path. Hence we have~$\rho_j\preceq_W\nu$. Moreover, by the choice of $j$ we have 
$\phi=\rho_1\preceq_W\rho_j$.
 Therefore we conclude that $\phi\preceq_W\nu$. By an analogous argument one also shows that~$\mu\preceq_W\psi$.
\end{proof}

\begin{lemma}\label{lem:upperSetValues}

Let 
$(V_0,\ldots,V_n)$ be a nested persistence cancelation sequence and let
$(\sigma,\tau)$ be a persistence pair of $\prec$ with $\sigma$ and $\tau$ critical cells of $V_i$. Then for any $\rho \in \cells$,
\begin{compactenum}[(a)]
   \item $\rho \succ_i \sigma$~\,implies~\,$\rho \succ \sigma$, and
   \item $\rho \prec_i \tau$~\,implies~\,$\rho \prec \tau$.
\end{compactenum}
\end{lemma}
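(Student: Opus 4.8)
The plan is to induct on the cancelation index~$i$, proving (a) and (b) simultaneously for all persistence pairs $(\sigma,\tau)$ whose cells are critical in~$V_i$. Because the dimension-$(1,2)$ pairs of~$\complex$ are exactly the dimension-$(0,1)$ pairs of the dual complex~$\complex^*$, and a nested persistence cancelation sequence dualizes to a nested one, duality interchanges statements (a) and (b); I would use this both to reduce the number of cases and, within the inductive step, to pass between the two parts. The base case $i=0$ is immediate because $\prec_0=\prec_V$ and $\prec$ is a linear extension of $\prec_V$, so $\rho\succ_0\sigma$ already forces $\rho\succ\sigma$. In the step, $\sigma$ and $\tau$ remain critical in~$V_{i-1}$ (only $(\sigma_i,\tau_i)$ is removed), so the hypothesis applies at index $i-1$; a relation $\rho\succ_i\sigma$ already present in $\prec_{i-1}$ is handled by the hypothesis, leaving only the \emph{new} relations $\sigma\prec_i\rho$ with $\sigma\not\prec_{i-1}\rho$.

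For such a new relation I would apply Lemma~\ref{lem:changedRelation} with $\mu=\sigma_i$, $\nu=\tau_i$, $\phi=\sigma$, $\psi=\rho$, $W=V_{i-1}$ and $\widetilde W=V_i$ (recall $V_i$ is $V_{i-1}$ with the unique path from $\partial\tau_i$ to $\sigma_i$ reversed). This yields $\sigma\preceq_{i-1}\tau_i$ and $\sigma_i\preceq_{i-1}\rho$. Feeding these into the index-$(i-1)$ hypothesis for the pair $(\sigma_i,\tau_i)$ lifts them to the total order as $\sigma\prec\tau_i$ and $\sigma_i\preceq\rho$. Hence it suffices to prove $\sigma\prec\sigma_i$, since then $\sigma\prec\sigma_i\preceq\rho$ gives the desired $\sigma\prec\rho$. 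The structural ingredient is that $(\sigma,\tau)$ survives the first $i$ steps whereas $(\sigma_i,\tau_i)$ is canceled, so in a nested sequence $(\sigma,\tau)$ cannot be nested in $(\sigma_i,\tau_i)$; therefore $\sigma\prec\sigma_i$ or $\tau_i\prec\tau$.

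The main obstacle is the remaining possibility $\tau_i\prec\tau$ together with $\sigma_i\prec\sigma$, i.e.\ the \emph{interleaving} $\sigma_i\prec\sigma\prec\tau_i\prec\tau$, which the hierarchy does not forbid a priori. When $\dim\sigma=\dim\sigma_i$ I would rule it out directly: $\sigma\prec_{i-1}\tau_i$ then means there is a $V_{i-1}$-path from $\partial\tau_i$ to the critical cell~$\sigma$, unique by Lemma~\ref{lem:uniquePredSucc} and Corollary~\ref{cor:twoPaths}. Since $(\sigma_i,\tau_i)$ is canceled inside a nested sequence, all its descendants and all pairs nested in it are already canceled in~$V_{i-1}$, so Lemma~\ref{lem:nestedPersistencePairs} applies to $(\sigma_i,\tau_i)$ and forces $\sigma_i\succ\sigma$; equivalently $\sigma_i$ is a descendant of~$\sigma$, whence $\sigma\prec\sigma_i$ by Lemma~\ref{lem:nestedChild}. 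This contradicts the interleaving and settles the matching-dimension case.

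What remains, and what I expect to be the most delicate point, is the cross-dimensional cancelation (for instance $\dim\sigma=0$ but $(\sigma_i,\tau_i)$ of dimension $(1,2)$), where no path from $\partial\tau_i$ to~$\sigma$ is available. Here I would exploit that reversing a $(1,2)$-path changes only the $1$--$2$ edges of the Hasse diagram: in any chain witnessing a new relation out of the $0$-cell~$\sigma$ a $2$-cell can occur only at the very top, so the only new relations are $\sigma\prec_i\rho$ with $\rho$ a $2$-cell on the reversed path, and the cell just below~$\rho$ in that chain is a $1$-cell already comparable to~$\sigma$ in $\prec_{i-1}$. Dualizing such an instance turns this cross-dimensional case of (a) into a matching-dimensional case of (b), which is exactly the situation resolved in the previous paragraph from the index-$(i-1)$ hypothesis alone. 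Keeping this reduction free of circularity---the matching cases of both parts are established directly from level $i-1$, and only afterwards are the cross cases deduced from them---is the crux of the bookkeeping.
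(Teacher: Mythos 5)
Your overall architecture matches the paper's: induction on $i$, Lemma~\ref{lem:changedRelation} applied with $(\mu,\nu,\phi,\psi)=(\sigma_i,\tau_i,\sigma,\rho)$ to localize the new relation, reduction to showing $\sigma\prec\sigma_i$, and a case split on dimensions; your matching-dimension case is essentially the paper's Case~3. But the cross-dimensional case is a genuine gap, and your proposed fix does not work. Duality sends a $(0,1)$-pair to a $(1,2)$-pair \emph{and} swaps parts (a) and (b) simultaneously, so it preserves the dimension mismatch: your instance with $\dim\sigma=0$ and $\dim\sigma_i=1$ dualizes to an instance of (b) in which the two negative cells to be compared are $\sigma^*$ (dimension $2$) and $\sigma_i^*$ (dimension $1$) --- still a cross case, not the matching case you hoped to land in. (Your supporting claim that a $2$-cell can occur only at the top of a chain out of a $0$-cell is also false: a Hasse chain may pass upward into a $2$-cell via a $(1,2)$-pair of $V_i$ and immediately back down, so $2$-cells can appear in the interior.) The paper handles this case (its Case~2, $\dim\sigma=0$, $\dim\tildesigma=1$) by a direct argument you are missing: if $\tau\prec\tau_i$, nestedness of the cancelation sequence already forces $\sigma\prec\sigma_i$; if $\tau\succ\tau_i$, the component of $\complex(\tau_i)$ created by $\sigma$ is not yet merged, and the covering chains furnished by $\sigma\prec_{i-1}\tau_i$ (and by $\sigma_i\prec_{i-1}\tau_i$) place $\sigma$, $\sigma_i$ and $\tau_i$ in one and the same component of $\complex(\tau_i)$, which $\sigma$ created, whence $\sigma\prec\sigma_i$. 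You also leave the other cross case $(\dim\sigma,\dim\sigma_i)=(1,0)$ untouched; the paper disposes of it by noting that reversing a $(0,1)$-path cannot change the attracting set of a critical $1$-cell, so no new relation can arise there at all.

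A secondary issue: in the matching case you assert uniqueness of the $V_{i-1}$-path from $\partial\tau_i$ to $\sigma$ ``by Lemma~\ref{lem:uniquePredSucc} and Corollary~\ref{cor:twoPaths}'', but Corollary~\ref{cor:twoPaths} permits \emph{two} such paths, one from each $0$-cell of $\partial\tau_i$. The paper needs an extra contradiction argument here: both paths would lie in the component of $\complex((\tau_i)_-)$ containing $\sigma$, yet the two endpoints of the negative $1$-cell $\tau_i$ lie in \emph{different} components of $\complex((\tau_i)_-)$. Without this, the hypothesis of the second part of Lemma~\ref{lem:nestedPersistencePairs} is not verified.
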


\begin{proof}
We only present the proof of part (a), which is done again by induction: we show that $\rho \succ_{i} \sigma$ implies $\rho \succ \sigma$ for all $0\leq i\leq n$. Part (b) can be shown analogously. 

The base case $i=0$ is trivial since $\succ$ is a linear extension of $\succ_0$.
Assume that $\rho \succ_{i} \sigma$. If $\rho \succ_{i-1} \sigma$, then the claim follows directly from the induction hypothesis. Hence we assume that $\rho \not \succ_{i-1} \sigma$. Let $(\tildesigma,\tildetau)$ be the persistence pair that is canceled when constructing $V_i$ from $V_{i-1}$; this implies $\tildesigma\prec_{i-1}\tildetau$. From Lemma~\ref{lem:changedRelation} with $(W,\widetilde W)=(V_{i-1},V_i)$ and $(\mu,\nu,\phi,\psi)=(\tildesigma,\tildetau,\sigma,\rho)$, we infer that $\sigma\preceq_{i-1}\tildetau$ and $\tildesigma \preceq_{i-1} \rho$. This has two consequences:
\begin{compactenum}[(i)]
\item $\sigma\prec_{i-1}\tildetau$ ~(since $\sigma$ is critical in $V_i$ while $\tildetau$ is not), and
\item $\tildesigma \preceq \rho$ ~(by the induction hypothesis).
\end{compactenum}
To finish the proof of the claim, by (ii) it suffices to show that $\sigma \prec \tildesigma$.
We proceed by case analysis on the dimensions of $\tildesigma$ and $\sigma$. Since these two cells are positive by assumption, they have dimension less than 2.

\textbf{Case 1} $(\dim\sigma=1$, $\dim\tildesigma=0)$: This case cannot occur since reversing the $V_{i-1}$-path from the 1-cell $\tildetau$ to the 0-cell $\tildesigma$ does not change the attracting set of any critical 1-cell (and in particular $\sigma$), contradicting $\rho \not \succ_{i-1} \sigma$ and $\rho \succ_{i} \sigma$.

\textbf{Case 2} $(\dim\sigma=0$, $\dim\tildesigma=1)$: First assume $\tau \prec \tildetau$. If additionally $\tildesigma \prec \sigma$, this contradicts the assumption that the cancelation sequence is nested  and $(\sigma,\tau)$ is canceled after ($\tildesigma,\tildetau)$. Therefore $\tau \prec \tildetau$ implies $\sigma \prec \tildesigma$.

Now assume $\tau \succ \tildetau$. This means that $\sigma$ creates a connected component that is not yet merged in $\complex(\tildetau)$. Since $\sigma\prec_{i-1}\tildetau$ by (i), there is a sequence $(\rho_1,\dots,\rho_k)$ with $\rho_1=\sigma$, $\rho_k=\tildetau$, and $\rho_j\gets_{V_{i-1}}\rho_{j+1}$ for all $1\leq j\leq k-1$. For each $\rho_j$ we trivially have $\rho_j\prec_{i-1}\tildetau$ and hence $\rho_j\prec\tildetau$ by the induction hypothesis, implying that $\rho_j\in\complex(\tildetau)$.
Moreover, since either $\rho_j$~is a facet of $\rho_{j+1}$ or $\rho_{j+1}$~is a facet of $\rho_j$, we know that all $\rho_j$, and in particular $\sigma$ and $\tildetau$, are in the same connected component of~$\complex(\tildetau)$. In an analogous way one shows that $\tildesigma$ and $\tildetau$, and hence $\sigma$ and $\tildesigma$, are in one and the same connected component. Since we know that $\sigma$ created that component, it follows that $\sigma \prec \tildesigma$.

\textbf{Case 3} $(\dim\sigma=\dim\tildesigma \in \{0,1\})$:
The relation $\sigma\prec_{i-1}\tildetau$ from (i) above implies the existence of a $V_{i-1}$-path from $\tildetau$ to $\sigma$. 
We will show by contradiction that this path must be unique. To see this, assume that there are two $V_{i-1}$-paths from $\tildetau$ to $\sigma$. 
Without loss of generality, assume that $\dim\sigma=\dim\tildesigma=0$ (and hence $\dim\tildetau=1$); otherwise, by duality the following argument can be applied to $\sigma^*,\tildetau^*$ instead of $\tildetau,\sigma$. By Corollary~\ref{cor:twoPaths}, each of the 0-cells in $\partial\tildetau$ must belong to exactly one of the two $V_{i-1}$-paths from $\tildetau$ to $\sigma$. Now by a similar argument as in Case 2 above, we obtain that each cell of these two $V_{i-1}$-paths is contained in the same connected component of $\complex(\tildetau_-)$ as $\sigma$. But since $\tildetau$ is a negative 1-cell, the two 0-cells in its boundary belong to different connected components of $\complex(\tildetau_-)$, a contradiction.

Hence, there is a unique $V_{i-1}$-path from $\tildetau$ to $\sigma$. 
Lemma~\ref{lem:nestedPersistencePairs} asserts that $\tildesigma$ is the largest cell (with regard to $\prec$) with a unique $V_{i-1}$-path from $\tildetau$ to $\tildesigma$. Since $\sigma\neq\tildesigma$, we obtain $\sigma \prec \tildesigma$.
\end{proof}

\begin{proof}[Proof of Theorem~\ref{thm:2deltasimp}]
According to Theorem~\ref{thm:simplifiedVectorField} there exists a nested $2\delta$-persistence cancelation sequence $(V_0,V_1,\ldots,V_n)$ for the pseudo-Morse function $f$.
Let $f_n$ be the plateau function corresponding to $V_n$. Since $f_n$ is consistent with $V_n$ by Lemma~\ref{lem:consistent} and $\| f_n-f\|_\infty \leq \delta$ by Lemma~\ref{lem:deltaconstraint}, it is a perfect $\delta$-simplification.
\end{proof}

\section{An efficient algorithm}\label{sec:efficientAlgo}

The definition of the plateau function in the previous section canonically leads to an algorithm that runs in time quadratic in the input size. In this section we present a method for computing a perfect $\delta$-simplification in time dominated by the computation of persistence pairs, i.e., $\mathcal O(\mathop\mathrm{sort}(n))$, where $n=|\cells|$ is the number of cells of $\complex$. Apart from this computation, all steps of our algorithm take linear time $\mathcal O(n)$. We stress that pre- and post-processing steps, like conversion from and to PL functions, also require only linear time~$\mathcal O(n)$.

The algorithm can be summarized as follows. First, persistence pairs are computed using a variant of Kruskal's algorithm for minimum spanning trees. Next, the persistence pairs are used to construct a simplified gradient vector field by a graph traversal of both the primal and dual 1-skeleton. In a third step, the simplified vector is used to compute the simplified function by a graph traversal on the Hasse diagram of the partial order induced by the simplified vector field.

\subsection{Defining a consistent total order}\label{sec:totalOrder}
Assume we are given a pseudo-Morse function $f$ consistent with a discrete gradient vector field $V$ as input. We write $\phi\simeq_V\rho$ if neither $\phi\prec_V\rho$ nor $\phi\succ_V\rho$, and similarly for~$\simeq_f$. Let $\prec_T$ be an arbitrary total order on $\cells$. We define the order $\prec$ as the lexicographic order given by $\prec_f$, $\prec_V$, and $\prec_T$: we have $\phi\prec\rho$ if and only if either 
\begin{compactenum}[(a)]
\item
$\phi\prec_f\rho$, 
\item $\phi\simeq_f\rho$ and $\phi\prec_V\rho$, or 
\item $\phi\simeq_f\rho$ and $\phi\simeq_V\rho$ and $\phi\prec_T\rho$.
\end{compactenum}
Now assume that $f$ is constructed from data given as a PL or piecewise constant function as explained in Section~\ref{sec:PLfunctions}. Then $V$ is the empty vector field (all cells are critical), meaning that $\phi\prec_V\rho$ if and only if $\phi$ is a face of $\tau$.
If now the order $\prec_T$ is chosen such that the cells are sorted by dimension, then $\phi\prec_V\rho$ implies $\phi\prec_T\rho$. The definition now simplifies to: $\phi\prec\rho$ if and only if either 
\begin{compactenum}[(a)]
\item
$\phi\prec_f\rho$ or 
\item $\phi\simeq_f\rho$ and $\phi\prec_T\rho$.
\end{compactenum}

\subsection{Computing persistence pairs}
Recall that the persistence pairs of dimension $(0,1)$ are determined solely by the 1-skeleton $G$ of $\complex$.
Therefore, persistence pairs can be computed by applying a variant of Kruskal's algorithm \citep{Kruskal1956Shortest} for finding a minimum spanning tree to both the primal and the dual 1-skeleton \citep{Edelsbrunner2002Topological,Attali2009Simplification}. Let $G$ be the 1-skeleton of $\complex$ and $M(G)$ the minimum spanning tree of $G$ (using the total order~$\prec$ for determining the edge weights, which implies uniqueness of $M(G)$). 
Kruskal's algorithm for computing $M(G)$ initializes a graph $T$ with the vertices of $G$, sweeps over the edges of $G$ in order~$\prec$, adds to $T$ every edge of $G$ that does not create a 1-cycle, and returns the final graph $T$. Note that the set of edges of $M(G)$ consists of all negative 1-cells together with all 1-cells $\tau$ with $(\sigma,\tau)\in V$ for some $\sigma$; all other 1-cells create a cycle in $T$.
When encountering a negative 1-cell, we compute the persistence of the corresponding $(0,1)$ pair by storing for each connected component of the intermediate graph $T$ the 0-cell that created it. 
Clearly we obtain all dimension $(0,1)$ persistence pairs this way. 
Simultaneously, we construct the subgraph $M_\delta(G)$ of $M(G)$ not containing the negative 1-cells with persistence~$> 2\delta$. 
In an analogous way, for the dual 1-skeleton $G^*$ we can compute the minimum spanning tree $M(G^*)$ and obtain the subgraph $M_\delta(G^*)$ together with all $(1,2)$ persistence pairs.

Kruskal's algorithm has a time complexity of $\mathcal O(\mathop\mathrm{sort}(n))$, yielding a complexity of $\mathcal O(n\log n)$ for comparison-based sorting. Assuming that the function values are represented by a small $(\mathcal O(\log n))$ word size, \citet{Attali2009Simplification} point out that persistence pairs on a graph can be computed in linear time $\mathcal O(n)$ on a RAM using radix sort together with a linear-time algorithm for minimum spanning trees.

\subsection{Extracting the gradient vector field}\label{sec:efficientAlgoVF}
We now explain how to construct a simplified gradient vector field $V_\delta$. To this end, we traverse (using depth-first search) each of the connected components of the primal graph $M_\delta(G)$ (constructed in the previous section) from the 0-cell that created the component. During this traversal, whenever we encounter an edge (1-cell) $\psi$ that connects a previously visited vertex (0-cell) $\rho$ with an unvisited vertex $\phi$, we add $(\phi, \psi)$ to the gradient vector field~$V_\delta$. This construction takes $\mathcal O(n)$ time.

We perform an analogous traversal for the dual graph $M_\delta(G^*)$. Again, whenever we encounter an edge $\psi^*$ that connects a visited vertex $\rho^*$ with an unvisited vertex $\phi^*$ (with $\psi$ a 1-cell and $\rho,\phi$ 2-cells of the original complex), we add $(\psi,\phi)$ to the gradient vector field $V_\delta$. Note that the final $V_\delta$ results from both the primal and dual traversals and is a vector field on $\complex$.

\begin{theorem}\label{thm:efficientVectorField}
The gradient vector field $V_\delta$ is identical to the final vector field $V_n$ of a $2\delta$-persistence cancelation sequence $(V_0,\dots,V_n)$.
\end{theorem}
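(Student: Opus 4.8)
The plan is to show that the efficient algorithm's vector field $V_\delta$ coincides with the final vector field $V_n$ produced by the canonical nested $2\delta$-persistence cancelation sequence constructed in the proof of Theorem~\ref{thm:simplifiedVectorField}. I will treat the dimension $(0,1)$ pairs and the dimension $(1,2)$ pairs separately, since the algorithm computes them by independent primal and dual traversals and, by duality, it suffices to handle the $(0,1)$ case on the primal graph $G$; the $(1,2)$ case then follows by applying the same argument to the dual graph $G^*$. Because both $V_\delta$ and $V_n$ agree with the input $V$ except on cells involved in canceled pairs, I only need to compare them on the cells of $M_\delta(G)$ (respectively $M_\delta(G^*)$), i.e., on the negative $1$-cells of persistence $\leq 2\delta$ together with their attracting $0$-cells.

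First I would pin down the structure of the canonical cancelation sequence. Recall from Theorem~\ref{thm:simplifiedVectorField} that the canonical nested sequence cancels the persistence pairs with persistence $\leq 2\delta$ in order of their negative cells under $\prec$; by Theorem~\ref{thm:cancelation}, canceling $(\sigma,\tau)$ reverses $V_{i-1}$ along the unique $V_{i-1}$-path from $\partial\tau$ to $\sigma$, whose existence and uniqueness is guaranteed by Lemma~\ref{lem:nestedPersistencePairs}. The key observation is that, restricted to the $(0,1)$-tuples, the net effect of all these reversals is to orient each component of $M_\delta(G)$ as a rooted tree pointing toward its creating $0$-cell: every non-root $0$-cell $\phi$ becomes paired with the unique edge on the tree path from $\phi$ toward the root. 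I would make this precise by an induction showing that after canceling all pairs whose negative edge lies in a given component $\mathcal C$ of $\complex(\tau_-)$, the final vector field restricted to $\mathcal C$ is exactly the gradient field whose $V$-paths all flow to the creating $0$-cell of $\mathcal C$ (using Lemma~\ref{lem:uniqueMinimum} and Lemma~\ref{lem:restrictionCritical} to control which cells lie in which component).

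Next I would analyze the algorithm. The depth-first traversal of a component of $M_\delta(G)$ starting at its creating $0$-cell assigns to each newly discovered $0$-cell $\phi$ the tree edge $\psi$ connecting it to its already-visited parent, producing the pair $(\phi,\psi)\in V_\delta$. I would argue that this is precisely the same pairing as the one described above for $V_n$: both pair each non-creating $0$-cell with the unique incident edge that lies on the path toward the root of its tree component. The crucial point is that $M_\delta(G)$ consists exactly of the negative $1$-cells of persistence $\leq 2\delta$ together with the $1$-cells already paired in $V$, so that the edge set available to the traversal is exactly the edge set that the cancelation sequence manipulates; hence the rooted-tree orientations agree edge for edge. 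The unpaired edges of each tree component (if any remain) stay critical in both constructions, and by the minimum spanning tree property these are exactly the uncanceled negative edges of persistence $>2\delta$.

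I expect the main obstacle to be verifying that the two pairings genuinely coincide rather than merely having the same critical cells. Both $V_\delta$ and $V_n$ are determined on each tree component by a single root (the creating $0$-cell) and an acyclic orientation toward it, and since a tree admits a \emph{unique} orientation in which every non-root vertex has out-degree one toward the root, any two such pairings with the same root on the same tree must be identical. The delicate step is establishing that the cancelation sequence really does orient the whole component consistently toward the creating cell rather than toward some intermediate cell introduced by the nesting order; this is where Lemma~\ref{lem:nestedPersistencePairs} together with the induction on components does the real work, guaranteeing that each cancelation extends the rooted orientation outward without ever reversing an edge toward a non-root vertex. Once this rooted-tree characterization is nailed down for both constructions, the identity $V_\delta=V_n$ follows immediately, and the dual case is verbatim by the duality established in Section~\ref{sec:persistenceMorse}.
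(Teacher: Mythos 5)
Your proposal is correct and follows essentially the same route as the paper: both reduce to the primal/dual graphs, observe that the relevant pairs live in the forest $M_\delta(G)$ with the component-creating $0$-cells as the common set of critical cells, and conclude by the uniqueness of such a pairing on a tree. The paper's proof is just more compressed---it skips your intermediate induction characterizing $V_n$ as the rooted-toward-the-creator orientation, since that characterization is already forced by the tree-uniqueness argument you invoke at the end.
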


\begin{proof}
First observe that if $(\sigma,\tau)\in V_n$ and $\dim\sigma=0$, then both $\sigma$ and $\tau$ are cells of $M_\delta(G)$ since all non-critical cells of $V_n$ either are non-critical in $V$ as well or have persistence $\leq2\delta$ (with respect to~$f$ and~$\prec$).
Moreover, the 0-cells creating a connected component of $M_\delta(G)$ are the only critical 0-cells of $V_n$ (by definition) and of $V_\delta$ (by construction). Since $M_\delta(G)$ is a tree, the pairs $(\sigma,\tau)\in V_n$ with $\dim\sigma=0$ are uniquely defined by this property.
By applying the dual argument to $M_\delta(G^*)$, the statement follows.
\end{proof}

\subsection{Constructing the simplified function}\label{sec:definingFucntionEfficient}
Finally, we construct a function $f_\delta$ (different from the plateau function defined in Section~\ref{sec:leveling}) that is consistent with the simplified gradient vector field $V_\delta$. Consider the Hasse diagram $H:=H_{V_\delta}$ of the strict partial order $\prec_{V_\delta}$ as described in Section~\ref{sec:pseudoMorse}. We visit the vertices $\cells$ of $H$ in a linear extension of $\prec_{V_\delta}$. The problem of finding a linear extension of a partial order is also called \emph{topological sorting} and can be solved using depth-first search on $H$ \citep{Cormen2009Introduction}%
. At each visited cell $\sigma$, we define $f_\delta(\sigma)$ as the minimum value that satisfies the lower bound $f_\delta(\sigma)\geq f(\sigma)-\delta$ \emph{and} renders $f_\delta$ consistent with $V_\delta$, i.e.,
$$
f_\delta(\sigma)=\max\Big(f(\sigma)-\delta,\max_{\rho \gets_{V_\delta}\sigma}f_\delta(\rho)\Big).
$$
The construction of $f_\delta$ also takes $O(n)$ time.

\subsection{Correctness of the algorithm}\label{sec:correctness}

\begin{theorem}
The function $f_\delta$ constructed using the above algorithm is a perfect $\delta$-simplification of $f$.
\end{theorem}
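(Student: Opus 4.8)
The plan is to verify the two defining conditions of a perfect $\delta$-simplification separately: that $f_\delta$ is pseudo-Morse with $\|f_\delta-f\|_\infty\leq\delta$, and that the number of its persistence pairs equals the number of persistence pairs of $f$ with persistence $>2\delta$. The key enabling fact is Theorem~\ref{thm:efficientVectorField}, which identifies $V_\delta$ with the terminal field $V_n$ of a nested $2\delta$-persistence cancelation sequence; this lets me import the plateau function $f_n$ of Section~\ref{sec:leveling} as an auxiliary object.

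Consistency and the lower bound $f_\delta\geq f-\delta$ are immediate from the defining recursion. Processing the cells in a linear extension of $\prec_{V_\delta}$ makes $f_\delta(\sigma)=\max\big(f(\sigma)-\delta,\max_{\rho\gets_{V_\delta}\sigma}f_\delta(\rho)\big)$ well-defined, and the inner maximum forces $f_\delta(\rho)\leq f_\delta(\sigma)$ whenever $\rho\gets_{V_\delta}\sigma$; by the description of the covering relation of $\prec_{V_\delta}$ in Section~\ref{sec:pseudoMorse} this is precisely the statement that $f_\delta$ is consistent with $V_\delta$, hence pseudo-Morse. The outer maximum gives $f_\delta\geq f-\delta$ directly.

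The main obstacle is the upper bound $f_\delta\leq f+\delta$. I would first prove a minimality property: among all functions $g$ consistent with $V_\delta$ and satisfying $g\geq f-\delta$, the function $f_\delta$ is pointwise smallest. This follows by induction along the linear extension, since consistency of $g$ forces $g(\rho)\leq g(\sigma)$ for $\rho\gets_{V_\delta}\sigma$, whence $g(\sigma)\geq\max\big(f(\sigma)-\delta,\max_{\rho\gets_{V_\delta}\sigma}g(\rho)\big)\geq f_\delta(\sigma)$ using the inductive hypothesis $g(\rho)\geq f_\delta(\rho)$. It then suffices to produce one competitor lying below $f+\delta$: the plateau function $f_n$ is consistent with $V_n=V_\delta$ by Lemma~\ref{lem:consistent} and satisfies $\|f_n-f\|_\infty\leq\delta$ by Lemma~\ref{lem:deltaconstraint}, so in particular $f_n\geq f-\delta$. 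Minimality then gives $f_\delta\leq f_n\leq f+\delta$, completing the estimate $\|f_\delta-f\|_\infty\leq\delta$.

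For the persistence count, I would use that $f_\delta$ is consistent with $V_\delta=V_n$. For any fixed gradient vector field, the critical cells of a consistent pseudo-Morse function are exactly the critical cells of that field, and the number of persistence pairs equals half the difference between the number of critical cells and the number of essential (unpaired) cells; the latter is $\sum_d\dim H_d(\complex;F)$ and depends only on $\complex$, not on the function or the chosen total order. Hence $f_\delta$ and the plateau function $f_n$ have the same number of persistence pairs, and since $f_n$ is a perfect $\delta$-simplification by Theorem~\ref{thm:2deltasimp}, this common number equals the number of persistence pairs of $f$ with persistence $>2\delta$. (The Stability Bound of Corollary~\ref{cor:criticalPointLowerBound} reproves the matching lower bound directly from $\|f_\delta-f\|_\infty\leq\delta$.) As indicated, the upper bound $f_\delta\leq f+\delta$ is where the argument really bites, since it is the one place where the greedily constructed $f_\delta$ must be controlled against the existence certificate supplied by the plateau construction.
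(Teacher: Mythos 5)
Your proposal is correct and follows essentially the same route as the paper: the paper also reduces everything to the upper bound $f_\delta\leq f+\delta$, observes that $f_\delta$ is the pointwise minimum over the polyhedron cut out by consistency with $V_\delta$ and the lower bound $f\geq f-\delta$, and compares against an element of the (nonempty, by Theorem~\ref{thm:2deltasimp}) set of perfect $\delta$-simplifications consistent with $V_\delta$ --- which is exactly your plateau-function competitor $f_n$. The only differences are presentational: you make the competitor and the persistence-pair count (via critical cells of $V_\delta=V_n$) explicit where the paper phrases the former as nonemptiness of a convex polytope and treats the latter as immediate from Theorem~\ref{thm:efficientVectorField}.
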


\begin{proof}
By construction $f_\delta$ is consistent with $V_\delta$. At the same time, by Theorem~\ref{thm:efficientVectorField}, $V_\delta$ is the final vector field of a $2\delta$-persistence cancelation sequence. Therefore, by the definition of a perfect $\delta$-simplification, it only remains to show that the constraint $\| f_\delta-f\|_\infty \leq \delta$ is satisfied. The lower bound $f_\delta\geq f-\delta$ is satisfied by construction. It remains to show the upper bound $f_\delta\leq f+\delta$.

Observe that the set of all perfect $\delta$-simplifications consistent with $V_\delta$ is defined by a set of linear inequalities: the upper and lower bounds on the function values given by $f\pm\delta$, and the inequalities that define consistency with $V_\delta$. Therefore, the set of $\delta$-simplifications is a convex polyhedron $P\subset\RR^{n}$ with $n=|\cells|$. The polyhedron $P$ is bounded since it is a subset of the product of intervals $\prod_{\sigma\in \cells}[f(\sigma)-\delta,f(\sigma)+\delta]$. From Theorem~\ref{thm:2deltasimp}, we know that $P$ is not empty. We now show that $f_\delta$ is contained in~$P$.

First, consider the (unbounded) convex polyhedron $\tilde P$ defined by the lower bound $f_\delta\geq f-\delta$ and the inequalities induced by $V_\delta$%
. By construction, $f_\delta$ is contained in $\tilde P$. Moreover, again by construction,  $f_\delta$ minimizes the function value of any cell among all functions in $\tilde P$. In other words, for any function $\tilde f$ in $P\subset\tilde P$, we have $\tilde f \geq f_\delta$. This implies the upper bound $f_\delta\leq f+\delta$.
\end{proof}

\section{Discussion}\label{sec:discussion}

\subsection{Computational results}

We implemented the algorithm of Section~\ref{sec:efficientAlgo} in C++. For a complex with over 4 million cells (the cubical complex for a $1025\times1025$ pixel image), we obtained a running time of about 15 seconds for computing a perfect $\delta$-simplification on a 2.4GHz Intel Core 2 Duo laptop.

\subsection{Symmetrizing the algorithm}\label{sec:symmetrizing}
The method described in Section~\ref{sec:efficientAlgo} assigns to each cell the smallest possible value. As a consequence, the output function differs from the input function $f$ even if the input function is already a perfect $\delta$-simplification. %
Moreover, the method is not symmetric in the sense that we obtain an output function which \emph{maximizes} the values if we apply the algorithm to the function $-f$ on the dual complex and return the negative of the simplified function. Since both the minimal and maximal solutions are points of a convex polyhedron as explained in Section~\ref{sec:correctness}, we can take the component-wise arithmetic mean to obtain another perfect $\delta$-simplification. 

With this modification, if the input function $f$ is already a perfect $\delta$-simpli\-fication, then the minimal solution is given by $f-\delta$, while the maximal solution equals $f+\delta$, so the arithmetic mean of both solutions returns $f$ again as desired.

\subsection{Flooding and carving artifacts}
\begin{figure*}
\center{
\includegraphics[width=0.3\textwidth]{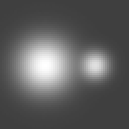}\hfill
\includegraphics[width=0.3\textwidth]{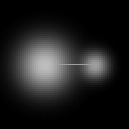}\hfill
\includegraphics[width=0.3\textwidth]{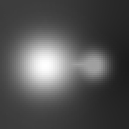}
}
\caption{Visualization of simplification artifacts. Function values indicated by gray levels. Left: Original function. Middle: Function obtained by the algorithm of Section~\ref{sec:efficientAlgo}. Note the bright path joining the two spots. Right: Function obtained after constraint energy minimization according to Section~\ref{sec:energyMethods}. While the simplified topological structure is maintained, the visual appearance is closer to the original function.}
\label{fig:artifacts}
\end{figure*}

\begin{figure*}
\center{
\includegraphics[width=0.715\textwidth]{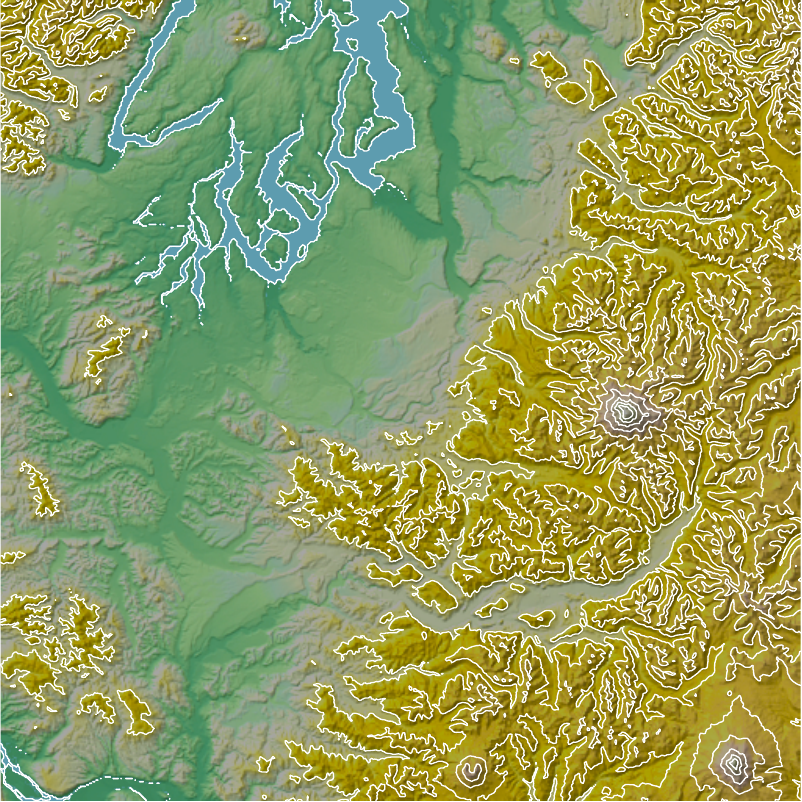%
}\hfill
\includegraphics[scale=1]{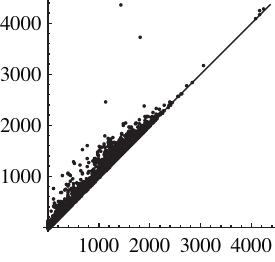%
}%
}
\center{
\includegraphics[width=0.715\textwidth]{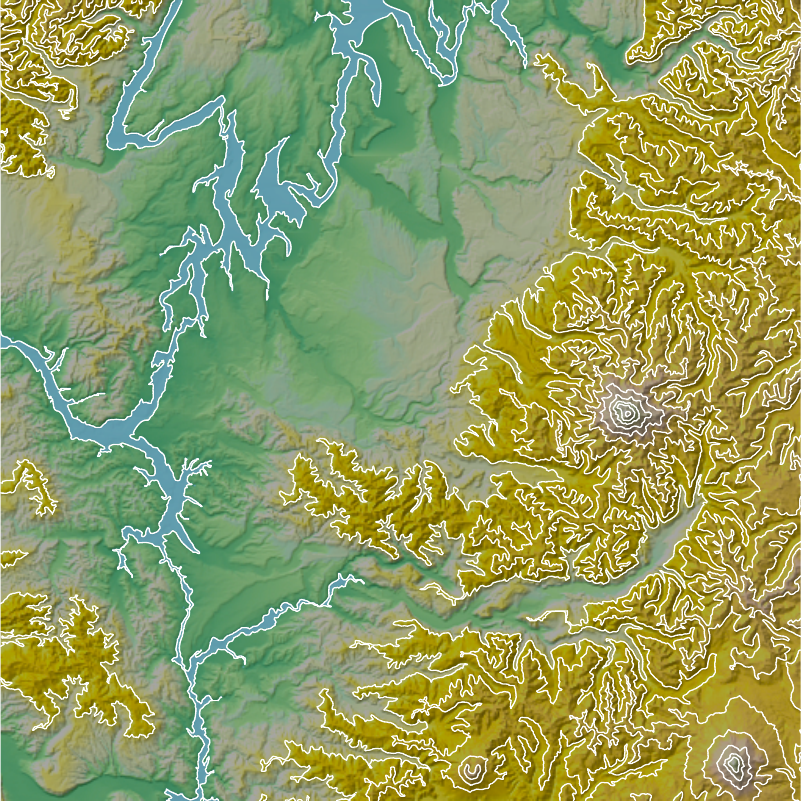%
}\hfill
\includegraphics[scale=1]{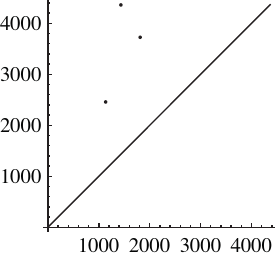%
}%
}
\caption{Top: Topographic map of elevation data set ``Puget Sound'' \cite{LargeGeometricModels}, showing the region around Tacoma. %
Contour lines shown every 500 meters. Elevation data is converted from a $512\times512$ grid into a pseudo-Morse function on $1050625$ cells. $33120$ critical cells have persistence $>0$ (persistence diagram shown on the right). Bottom: Simplified elevation function obtained after constraint energy minimization according to Section~\ref{sec:energyMethods} with $\delta=500$ meters. The 
function has 1 minimum, 3 saddles, and 3 maxima.}
\label{fig:puget}
\end{figure*}

Since the methods presented in the present article can be seen as combinations of the \emph{carving} and \emph{flooding} approaches, they also inherit some characteristics that may not always be desirable in practical applications (see Figure~\ref{fig:artifacts}).

Carving methods \citep{Soille2004Morphological,Edelsbrunner2006PersistenceSensitive,Attali2009Simplification}) cancel a pair of critical cells by changing only the repelling or attracting set of the 1-cell (saddle). This results in a noticeable thin path being carved in the function. On the other hand, modifying extrema, i.e., lowering maxima and raising minima, produces regions with constant function value; this is called \emph{filling} or \emph{flooding} \citep{Jenson1988Extracting,Danner2007TerraStream}). Although this effect is less disturbing, it might appear unnatural in certain applications.
In the next section, we propose a way to remedy both kinds of artifacts.

\subsection{Combining topological simplification and energy methods}\label{sec:energyMethods}

As mentioned in Section~\ref{sec:correctness}, the set of perfect $\delta$-simplifications consistent with the simplified gradient vector field $V_\delta$ is a convex polyhedron $P$. Hence, the presented method can be combined with energy minimization methods, since the polyhedron $P$ can be used as the feasible region for an arbitrary convex optimization problem. For example, we used the interior point solver Ipopt~\citep{Waechter2006Ipopt} to minimize (a discretization of) the Dirichlet energy of the difference $f_\delta-f$ in order to obtain a function $f_\delta$ that looks as similar as possible to the input function $f$ (see Figures~\ref{fig:artifacts} and~\ref{fig:puget}). Alternatively, we minimized the Dirichlet energy of the simplified function itself in order to obtain smooth contour lines.

\subsection{A counterexample for general 2-complexes}\label{sec:counterexample}

The example of Figure~\ref{fig:counterexample} shows that a perfect $\delta$-simplification may not exist on a non-manifold 2-dimen\-sional cell complex. For the sake of simplicity, the example is given for a non-regular CW complex; it is straightforward to rephrase this example using a regular CW complex by subdividing the cells. 
The complex consists of two 0-cells $\zeta$ and $\gamma$ with $f(\zeta)=f(\gamma)=0$, three 1-cells $a$, $b$, and $c$ with $f(a)=1$, $f(b)=2$, and $f(c)=0$, and the two 2-cells $A$ and $B$ with $f(A)=2$ and $f(B)=3$. Note that the complex is not manifold since it is not locally euclidean at the 1-cell $b$. The persistence pairs are $(a,A)$, $(b,B)$, and $(\gamma,c)$. To obtain a perfect $\delta$-simplification for $\delta=0.5$, one would need to set $f_\delta(b)=f_\delta(B)=2.5$ and $f_\delta(a)=f_\delta(A)=1.5$. The corresponding simplified gradient vector field would be $V_\delta=\{(a,A),(b,B)\}$. But since $b$ is a facet of $A$, we must have $f_\delta(b)\leq f_\delta(A)$. Hence, we cannot cancel both $(a,A)$ and $(b,B)$ at the same time. This constellation also appears in~\citep{Edelsbrunner2002Topological} under the name \emph{conflict of type (1,2)}.

\begin{figure}[h]
\center{
\includegraphics[scale=1]{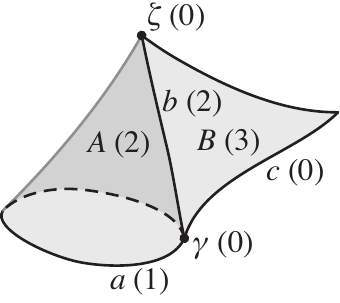}
}
\caption{A discrete Morse function on a 2-complex that does not have a perfect $\delta$-simplification. The function values of the cells are indicated in brackets.}
\label{fig:counterexample}
\end{figure}

Since such a 2-complex can also appear as a level subcomplex of an $n$-manifold CW complex for $n\geq3$, the example also shows that a perfect $\delta$-simplification does not always exists for functions on manifolds.

\subsection{Removing local extrema from functions on manifolds}
As a concluding remark, we want to mention that the same constructions and proofs presented in this article can also be adapted to the problem of minimizing the number of local extrema of a pseudo-Morse function within a $\delta$-tolerance on any $d$-dimensional manifold CW complex. 
\begin{problem*}[Extrema simplification on manifolds]
Given a pseudo-Morse function $f$ on a regular manifold CW complex and a real number $\delta\geq0$, find a function $f_\delta$ subject to $\|f_\delta-f\|_\infty \leq \delta$ such that $f_\delta$ has a minimum number of local extrema. 
\end{problem*}
\begin{theorem}
Given a pseudo-Morse function $f$ on a finite regular closed manifold CW complex and a real number $\delta\geq0$, there exists a pseudo-Morse function $f_\delta$ such that $\|f_\delta-f\|_\infty \leq \delta$ and the number of local extrema of $f_\delta$ equals the number of local extrema of $f$ that have persistence $>2 \delta$. This number is minimal.
\end{theorem}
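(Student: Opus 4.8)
The plan is to mirror the proof of Theorem~\ref{thm:2deltasimp} almost verbatim, but to restrict the cancelation sequence to the two families of persistence pairs that govern extrema. On a $d$-dimensional manifold CW complex, the local minima of a pseudo-Morse function are exactly the critical $0$-cells of a consistent gradient vector field and the local maxima are exactly the critical $d$-cells. The former are created and merged through persistence pairs of dimension $(0,1)$; the latter, by Poincar\'e duality of the closed manifold, through pairs of dimension $(d-1,d)$, which correspond to the $(0,1)$ pairs of the dual cell structure $\complex^*$ exactly as in the surface case of Section~\ref{sec:persistenceMorse}. I would therefore handle minima directly on $\complex$ and reduce maxima to minima on $\complex^*$. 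The counted quantity matches on both sides: after canceling the low-persistence pairs, the critical $0$-cells of the output are precisely the essential minima (there are $\beta_0(\complex)$ of them) together with the $(0,1)$ pairs of persistence $>2\delta$, which is exactly the number of local minima of $f$ of persistence $>2\delta$, and dually for critical $d$-cells.

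First I would observe that the entire $(0,1)$-cancelation machinery is insensitive to the ambient dimension. Lemma~\ref{lem:uniquePredSucc} and the first half of Corollary~\ref{cor:twoPaths} concern only dimension-$0$ $V$-paths and hold for a trivial reason on any regular CW complex, namely that a $1$-cell has at most two $0$-cell facets; consequently Lemma~\ref{lem:nestedPersistencePairs} and Theorem~\ref{thm:simplifiedVectorField}, applied only to pairs of dimension $(0,1)$, remain valid without change. Running this on $\complex$ and, dually, on $\complex^*$ yields a nested $2\delta$-persistence cancelation sequence that cancels exactly the $(0,1)$ and $(d-1,d)$ pairs of persistence $\leq 2\delta$. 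For $d\geq 3$ the cells touched by the two families lie in the disjoint dimension ranges $\{0,1\}$ and $\{d-1,d\}$, so no cell is paired twice, the two kinds of path reversals commute, and they assemble into a single valid gradient vector field; for $d=2$ the combined statement is already contained in Theorem~\ref{thm:2deltasimp} (whose perfect $\delta$-simplification is a fortiori optimal for extrema alone).

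Next I would form the plateau function $f_n$ for this restricted sequence exactly as in Section~\ref{sec:leveling}. Consistency is Lemma~\ref{lem:consistent}, which is already stated for arbitrary regular CW complexes, and the crux is the $\delta$-constraint Lemma~\ref{lem:deltaconstraint}. Its proof is an induction comparing $f_i$ to the \emph{original} $f$ in which each raising step (attracting set of a minimum) and each lowering step (repelling set of a maximum) separately preserves both the lower and the upper bound; a cell that has already been raised toward a low plateau is subsequently lowered only if its current value still exceeds the higher plateau, and in that case the lowering target $m'$ satisfies $m'\geq f(\tau')-\delta\geq f(\rho)-\delta$ (since $f(\rho)\leq f(\tau')$ for any $\rho$ in the repelling set), so the bounds survive. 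The whole argument thus goes through \emph{provided} Lemma~\ref{lem:upperSetValues} holds for the mixed sequence. \textbf{This is the step I expect to be the main obstacle}: one must check the case of Lemma~\ref{lem:upperSetValues} in which a $(d-1,d)$ pair is canceled while tracking a $(0,1)$ pair (and dually). I anticipate resolving it in the same manner that Case~1 is ruled out in the surface proof: by Lemma~\ref{lem:changedRelation} a newly created relation $\rho\succ\sigma$ at a critical $0$-cell $\sigma$ would force $\sigma\preceq\tilde\tau$ for the reversed $d$-cell, yet a $(d-1,d)$ reversal only flips covering relations between dimensions $d-1$ and $d$, while every descent to a $0$-cell in the Hasse diagram passes through covering relations of lower dimension that are left untouched; one argues that reversing such a path cannot alter the attracting set of a critical $0$-cell, so the cross-dimensional case is vacuous and only the dimension-independent equal-dimension case remains. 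With Lemma~\ref{lem:deltaconstraint} in hand, $f_\delta:=f_n$ is the desired pseudo-Morse function within $\delta$ of $f$.

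Finally, for minimality I would argue dimension by dimension. For any pseudo-Morse $g$ with $\|g-f\|_\infty\leq\delta$, the number of local minima of $g$ equals $\beta_0(\complex)$ plus the number of its $(0,1)$ persistence pairs, a quantity determined by the well-defined persistence diagram of $g$. Applying the Bottleneck Stability Theorem~\ref{thm:stability} per homological dimension (the matching underlying Corollary~\ref{cor:criticalPointLowerBound} may be taken to preserve dimension) bounds the number of $(0,1)$ pairs of $g$ below by the number of $(0,1)$ pairs of $f$ with persistence $>2\delta$; since $\beta_0$ is a topological invariant common to $f$ and $g$, this bounds the local minima of $g$ below by the local minima of $f$ of persistence $>2\delta$. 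The dual argument on $\complex^*$ bounds the local maxima, and summing the two lower bounds shows that the constructed $f_\delta$ attains the minimum, as claimed.
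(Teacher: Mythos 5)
Your overall strategy -- restrict the cancelation machinery to the $(0,1)$ and $(d-1,d)$ pair families, run the plateau construction, and get minimality from per-dimension stability -- is exactly the adaptation the paper has in mind (it offers no written proof of this theorem, only the remark that the constructions carry over), and you correctly isolate Lemma~\ref{lem:upperSetValues} as the one place where the two families interact. However, your resolution of that obstacle is wrong. You claim that reversing a $(d-1,d)$ path cannot alter the attracting set of a critical $0$-cell because ``every descent to a $0$-cell passes through covering relations of lower dimension that are left untouched.'' This is false: a directed path in the Hasse diagram from a cell $\rho$ down to a $0$-cell $\sigma$ is free to oscillate between dimensions $d-1$ and $d$ (down along unpaired facet relations, up along $V$-pairs) before it ever descends into the lower skeleton. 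Reversing a $(d-1,d)$ path rewires precisely these top-dimensional edges -- e.g.\ the new pair $(\phi_{i+1},\tau_i)$ lets a path enter $\tau_i$ from $\phi_{i+1}$ and exit through a different facet of $\tau_i$, from which it may descend to $0$-cells that were unreachable before. The surface case already witnesses this: for $d=2$ this is exactly Case~2 of the paper's proof of Lemma~\ref{lem:upperSetValues} ($\dim\sigma=0$, $\dim\tilde\sigma=1$), which the paper treats with a genuine argument rather than declaring vacuous, in contrast to Case~1, which really is vacuous for the reason you give (paths \emph{into} a critical $(d-1)$-cell involve only $(d-1)$- and $d$-cells, so a $(0,1)$ reversal cannot touch them).

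The gap is repairable, and in the way the paper itself suggests: only one of the two cross-dimensional directions needs work, and for it the Case~2 argument generalizes verbatim to any $d$. If the new relation $\rho\succ_i\sigma$ appears while a $(d-1,d)$ pair $(\tilde\sigma,\tilde\tau)$ is canceled, Lemma~\ref{lem:changedRelation} gives $\sigma\preceq_{i-1}\tilde\tau$ and $\tilde\sigma\preceq_{i-1}\rho$; if $\tau\prec\tilde\tau$ then nestedness of the cancelation sequence forces $\sigma\prec\tilde\sigma$, and if $\tau\succ\tilde\tau$ then the chain of covering relations witnessing $\sigma\prec_{i-1}\tilde\tau$ places $\sigma$ and $\tilde\tau$ (and likewise $\tilde\sigma$ and $\tilde\tau$) in the same connected component of $\complex(\tilde\tau)$, a component created by $\sigma$ and not yet merged, whence $\sigma\prec\tilde\sigma$. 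This argument uses only $H_0$ of level subcomplexes and is dimension-independent, so it closes the hole; you should replace the vacuousness claim with it (and note explicitly that the dual statement handles part~(b)). The remainder of your write-up -- the dimension-insensitivity of the $(0,1)$ machinery via Lemma~\ref{lem:uniquePredSucc}, the plateau construction, and the counting/minimality argument via $\beta_0$ plus per-dimension bottleneck stability -- is sound.
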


Note that in the case $d=2$ this problem is equivalent to the topological simplification problem by the following argument. Let $c_i$ denote the number of critical cells of dimension $i$. Since the Euler characteristic $\chi=c_0-c_1+c_2$ is a topological invariant and we have $c_0+c_1+c_2=2(c_0+c_2)-\chi$, the number of critical points is minimal if and only if the number of extrema is minimal.

\paragraph*{Acknowledgements}
The ``Puget Sound'' data set used in Figure~\ref{fig:puget} is taken from the Large Geometric Models Archive of the Georgia Insitute of Technology. The original elevation data is obtained from The United States Geological Survey (USGS), made available by The University of Washington.

\bibliography{baueru,simplification}

\end{document}